\def \cC{\mathcal{C}}
\def \cE{\mathcal{E}}
\def \cS{\mathcal{S}}
\def \cF{\mathcal{F}}
\def \cI{\mathcal{I}}
\def \cL{\mathcal{L}}
\def \cO{\mathcal{O}}
\def \P{\mathsf P}
\def \PP{\widehat{\mathsf P}}
\def \Q{\mathsf Q}
\def \E{\mathsf E}
\def \EE{\widehat{\mathsf E}}
\def \EQ{\mathsf E^{\mathsf Q}}
\def \R{\mathbb{R}}
\def \XX{\widetilde{X}}
\def \WW{\widetilde{W}}
\def \ud{\textrm{d}}
\newcommand{\eps}{\varepsilon}
\newtheorem{theorem}{Theorem}[section]
\newtheorem{lemma}[theorem]{Lemma}
\newtheorem{corollary}[theorem]{Corollary}
\newtheorem{proposition}[theorem]{Proposition}
\newtheorem{remark}[theorem]{Remark}
\title[Participating policies with surrender option]{An analytical study of participating policies \\ with minimum rate guarantee\\ and surrender option}
\author[M.B.~Chiarolla, T.~De Angelis, G.~Stabile]{Maria B.~Chiarolla \and Tiziano De Angelis \and Gabriele Stabile}
\subjclass[2010]{91G80, 62P05, 60G40, 35R35; {\em JEL Classification}. G22}
\keywords{Participating policies, minimum rate guarantee, surrender option, solvency requirement, optimal stopping, free-boundary problems, stop-loss boundary, too-good-to-persist boundary}
\address{M.B.~Chiarolla: Dipartimento di Scienze dell'Economia, Centro Ecotekne, Universit\`a del Salento, Lecce, Italy.}
\email{\href{mailto:maria.chiarolla@unisalento.it}{maria.chiarolla@unisalento.it}}
\address{T.~De Angelis: School of Management and Economics (Dept.\ ESOMAS) University of Torino and Collegio Carlo Alberto, Torino, Italy.}
\email{\href{mailto:tiziano.deangelis@unito.it}{tiziano.deangelis@unito.it}}
\address{G.~Stabile: Dipartimento di Metodi e Modelli per l'Economia, il Territorio e la Finanza, Sapienza-Universit\`{a} di Roma, Roma, Italy}
\email{\href{mailto:gabriele.stabile@uniroma1.it}{gabriele.stabile@uniroma1.it}}
\date{\today}
\numberwithin{equation}{section}
\begin{document}

\begin{abstract}
We perform a detailed theoretical study of the value of a class of participating policies with four key features: $(i)$ the policyholder is guaranteed a minimum interest rate on the policy reserve; $(ii)$ the contract can be terminated by the holder at any time until maturity (surrender option); $(iii)$ at the maturity (or upon surrender) a bonus can be credited to the holder if the portfolio backing the policy outperforms the current policy reserve; $(iv)$ due to solvency requirements the contract ends if the value of the underlying portfolio of assets falls below the policy reserve.

Our analysis is probabilistic and it relies on optimal stopping and free boundary theory. We find a structure of the optimal surrender strategy which was undetected by previous (mostly numerical) studies on the same topic. Optimal surrender of the contract is triggered by  two `stop-loss' boundaries and by a `too-good-to-persist' boundary (in the language of \cite{EV20}). 
Financial implications of this strategy are discussed in detail and supported by extensive numerical experiments.
\end{abstract}

\maketitle

\section{Introduction}

Participating Policies with Minimum Rate guarantee are insurance contracts, appealing predominantly to individuals during their working lives, as a form of low-risk financial investment. The subscriber of a participating policy (policyholder) pays a premium (either single or periodic) which is used by the insurance company to set up a so-called policy reserve for the policyholder. The policy reserve is linked to a portfolio of assets held by the insurance company and it accrues interest tracking the performance of such portfolio (the details of the contract are illustrated in Section \ref{sec:setup}). The minimum rate guarantee is in the form of a minimum interest rate paid by the insurance company towards the policy reserve irrespective of the performance of the portfolio backing the policy (this rate is usually lower than the risk-free rate). In the absence of any further contract specifications, the policy terminates at a given maturity, at which the policyholder receives an amount equal to the value of the reserve, plus a {\em bonus}, if the current value of the portfolio is sufficiently high relative to the reserve.

The contract may incur early termination. That happens if the value of the portfolio backing the policy is not sufficient to cover the policy reserve. In that case we say that the insurance company fails to meet the solvency requirements on the participating policy and the policyholder receives the value of the policy reserve at that time. More interestingly, early termination of the contract may be an {\em embedded option} in the policy specification. Indeed, along with the standard contract, the policyholder can buy the right to an early cancellation of the policy, the so-called {\em surrender option} (SO). If the policyholder exercises the surrender option prior to the maturity of the contract, at that time she receives the value of the policy reserve, plus the above mentioned {\em bonus}.

While surrender options share similarities with financial options of American type, due to their early exercise feature, they are actually rather different in nature. Indeed, the presence of a surrender option, embedded in a participating policy, changes the structure of the whole contract. As a consequence, the price of embedded options is normally defined as the difference between the value of a policy which includes the option and the value of a policy which does not include the option (see \eqref{eq:pricing} for a mathematical expression).

\textit{Participating Policies with Surrender Option} (PPSO) have been studied extensively in the academic literature. This paper contributes to that strand of the literature which assumes that the policyholder is fully rational and the surrender option is exercised optimally from a financial perspective. Other papers analyse PPSO in which surrender occurs as a randomised event (see Cheng and Li \cite{CHENG}) or without assuming rational behaviour of the policyholder (see Nolte and Schneider \cite{NOLTE}).
Several papers adopt a numerical approach to analyse PPSO without solvency requirements for the insurance company (see Andreatta and Corradin \cite{ANDREATTA}, Bacinello \cite{BACINELLO1}, Bacinello, Biffis and Millossovich \cite{BACINELLO3}, Grosen and J{\o}rgensen \cite{GROSEN} among others).
Chu and Kwok \cite{CHUKWOK} provide an analytical approximation for the  price of a participating policy without taking into account the surrender option. Finally, Siu \cite{SIU} considers the fair valuation of a PPSO when the market value of the portfolio backing the policy is modelled by a Markov-modulated Geometric Brownian Motion. In \cite{SIU} the author approximates the solution of a free boundary problem by a system of second-order piecewise linear ordinary differential equations.

In this paper we develop a fully theoretical analysis of participating policies with minimum rate guarantee, embedded surrender option and early termination due to solvency requirements. Following the example of other papers on this topic (see, e.g., Chu and Kwok \cite{CHUKWOK}, Fard and Siu \cite{FS13}, Grosen and J{\o}rgensen \cite{GROSEN}, Siu \cite{SIU}), we focus purely on the financial aspects of the policy and ignore the demographic risk, in the sense that the contract does not account for a possible demise of the policyholder. From the point of view of applications we may imagine that there are multiple beneficiaries of the policy, so that the demographic risk is negligible. Moreover, it is well-known (see, e.g., Cheng and Li \cite{CHENG}, Stabile \cite{Stabile}) that the assumption of a constant force of mortality, independent of the financial market,
results in a shift in the discount rate adopted for pricing; this does not affect the methods we employ and the qualitative outcomes of our work. The study becomes substantially more involved if one considers a time-dependent mortality force (as, e.g., in De Angelis and Stabile \cite{DeAS2}) or worse a stochastic mortality. We leave these extensions for future work.

Our main contributions are: (i) the analytical study of the pricing formula for the PPSO, (ii) a characterisation of the optimal exercise strategy for the surrender option, in terms of an optimal exercise boundary, and (iii) an extensive numerical analysis of the option value and the surrender boundary. The arbitrage-free price of the policy is obtained as the value function of a suitable finite-time horizon optimal stopping problem, on a two-dimensional degenerate diffusion which lives in an orthant of the plane and is absorbed upon leaving the orthant. The diffusive coordinate of such process models the dynamics of the portfolio backing the policy, while the other coordinate represents the dynamics of the policy reserve. After a suitable transformation, the dynamics is reduced to a one dimensional diffusion in the form of a stochastic differential equation (SDE) with absorption upon hitting zero. This process corresponds to the so-called bonus distribution rate of the contract, which will be introduced in more detail in the next section. We are then led to consider a state process $(t,X_t)\in [0,T]\times\R_+$ which is absorbed if $X_t$ reaches zero, so that our state space is the $(t,x)$-strip with $t\in[0,T]$ and $x\ge 0$.

The optimal stopping problem poses a number of challenges: it is set on a finite-time horizon, hence it is not amenable to explicit solutions using the associated free boundary problem (which is indeed parabolic); the SDE that describes the stochastic process does not admit an explicit solution, so that numerous tricks often used in optimal stopping problems, and relying on an explicit dependence of the process on its initial value, are not applicable (see, e.g., the American put problem in Peskir and Shiryaev \cite{PS}); the stopping payoff is independent of time but, as a function of $x$, it is convex with discontinuous first derivative.

The combination of the above ingredients produces a very peculiar shape of the optimal stopping region, which we derive from a detailed analysis of the value function.  We observe that the stopping region $\cS$, i.e., the points $(t,x)$ at which the policyholder should instantly surrender the contract, is not connected in the $x$-variable, for each value of $t$ given and fixed. Instead, $\cS$ may have two connected components for each $t\in[0,T]$ (see Figure \ref{fig1}), corresponding to two distinct `stop-loss' boundaries and a `too-good-to-persist' boundary (following \cite{EV20}, here we say that a boundary is `stop-loss' or `too-good-to-persist' if $\cS$ can be locally represented as a set $\{(t,x):x\le b(t)\}$ or $\{(t,x):x\ge b(t)\}$, respectively, for some $b$). This result was not observed in prior work on the same model, where a numerical approach to the problem could not detect this unusual feature (see, e.g., Siu \cite{SIU}). As it turns out, the shape of the stopping set is closely related to the bonus mechanism included in the PPSO and it has fine implications on the optimal exercise of the surrender option. We will elaborate further on this point in Section \ref{sec:num}, once the mathematical details have been laid out more clearly.

The stopping set is connected in the $t$-variable, for each given $x\ge 0$. This leads naturally to consider an optimal stopping boundary as a function of $x$ (rather than as a function of $t$, as in the vast majority of papers in the area). We obtain a wealth of fine properties of the map $x\mapsto c(x)$ on $[0,\infty)$, which are of independent mathematical interest for the probabilistic theory of free boundary problems. Indeed we show that $c(\,\cdot\,)$ is continuous on $(0,\infty)$ and piecewise monotonic, with two {\em strictly increasing} portions and a {\em strictly decreasing} one. It is important to remark that questions of continuity of the optimal boundary $x\mapsto c(x)$ are much harder to address than in the more canonical setting of time dependent boundaries $t\mapsto b(t)$. Here we resolve the issue in Theorem \ref{prop:cont}, by providing a probabilistic proof which is new in the literature and makes use of suitably constructed reflecting diffusions. Our proof provides a conceptually simple way to show (in more general examples) that time dependent optimal boundaries $t\mapsto b(t)$ cannot exhibit flat stretches, unless the smooth-fit property fails.

The rest of the paper is organised as follows. In Section \ref{sec:setup} we set up the model in a rigorous mathematical framework and then we state our main results in Section \ref{sec:mainres} (Theorems \ref{thm:v} and \ref{thm:boundary}). In particular, in Section \ref{sec:num} we obtain numerical illustrations of the value function, the stopping set and the related sensitivity analysis, accompanied by a financial interpretation. Section \ref{sec:value} contains preliminary technical results on the continuity and monotonicity of the value function. In Section \ref{sec:freeb} we analyse in detail the free boundary problem associated with the PPSO and we prove Theorems \ref{thm:v} and \ref{thm:boundary} stated in Section \ref{sec:mainres}. Section \ref{sec:fees} extends our framework to include management fees in the valuation of the PPSO. The paper is completed by a short technical appendix.


\section{Actuarial model and problem formulation}\label{sec:setup}

In this section we provide a mathematical description of the price of a PPSO in a complete market, under a risk-neutral probability measure. We align our setup and part of our notations to those already used in other papers on this topic as, e.g.,~Chu and Kwok \cite{CHUKWOK}, Grosen and J{\o}rgensen \cite{GROSEN} and Siu \cite{SIU}.

Given $T>0$, we consider a market with finite time horizon $[0,T]$ on a complete probability space $(\Omega, \cF, (\cF_t)_{t\in[0,T]}, \Q)$ that carries a
one-dimensional Brownian motion $\WW:=(\WW_{t})_{t\in[0,T]}$. With no loss of generality we assume that the filtration $(\cF_t)_{t\in[0,T]}$ is generated by the Brownian motion and it is completed with $\Q$-null sets. Our market is complete and $\Q$ is the risk-neutral probability measure.

An investor can purchase a PPSO at time zero by making a lump payment $V_0$ to an insurance company. In return, the insurer invests an amount $R_0$ into a financial portfolio and commits the company to credit interests to the policyholder's {\em policy reserve} according to a mechanism that will be described below. Thanks to the {\em surrender option} embedded in the contract, the policyholder has the right to withdraw her investment at any time prior to the policy's maturity $T$. In this case, she receives the so-called {\em intrinsic value} of the policy.

\subsection{The policy reserve}
First we describe the rate at which the amount $R_0$ invested by the insurer accrues interest, based on the performance of the portfolio backing the policy (the \textit{reference portfolio}). We let $A:=(A_{t})_{t\in[0,T]}$ be the process denoting the value of such portfolio and assume that $A$ evolves as a geometric Brownian motion under $\Q$; that is,
\begin{align}\label{dynA}
\left \{
\begin{array}{l}
\ud A_{t}= A_{t}\big( r  \ud t + \sigma  \ud \WW_{t}\big),  \\
A_{0} = a_0,
\end{array} \right.
\end{align}
where $r$, $\sigma$ and $a_0$ are positive constants and $r$ is the risk free rate.

During the lifetime of the policy, the \emph{policy reserve} is denoted by $R:=(R_{t})_{t\in[0,T]}$ and accrues interest based on a two-layer mechanism. First, the insurance company guarantees a minimum fixed interest rate, which we denote by $r^G$ and, in line with financial practice, we assume
\begin{align}\label{def:rg}
r^G\in(0,r).
\end{align}
Second, at times when the portfolio performs particularly well, the policyholder participates in the returns. In particular, we define the so-called \emph{bonus reserve} $B_t:=A_t-R_t$ and, as in \cite{CHUKWOK}, \cite{SIU}, \cite{SFH} and \cite{FS13}, we consider a {\em bonus distribution rate} (BDR) of the form
\begin{align}\label{eq:BDR}
\ln\big(1+B_t/R_t\big)=\ln\big(A_t/R_t\big).
\end{align}

The BDR measures the performance of the portfolio against the performance of the policy reserve. The insurance company compares the BDR to a constant, long-term target $\beta>0$, known as \emph{target buffer ratio}. If the BDR exceeds the target buffer ratio, a proportion $\delta>0$ of the excess is shared with the policyholder.

Combining the minimum guaranteed interest rate with the bonus rate gives the instantaneous rate of interest on the policy reserve, that is
\begin{equation}\label{scheme}
c(A_t,R_t) = \delta \bigg ( \ln \Big( \frac{A_t}{R_t}\Big) - \beta \bigg ) \vee r^G.
\end{equation}

It follows that the policy reserve $R$ evolves under $\Q$ according to the dynamics
\begin{align}\label{dynP}
\left \{ \begin{array}{l}
\ud R_{t}= c(A_{t},R_{t}) R_{t} \ud t,  \\
R_{0} = \alpha\,a_0,
\end{array} \right.
\end{align}
where $\alpha\in(0,1)$ is fixed by the insurer. Hence, the initial reserve $R_0$ covers $\alpha$ shares of the reference portfolio.

\begin{remark}
Notice that in the specification of the bonus mechanism in \eqref{scheme} we may equivalently consider $\ln(\alpha A_t/R_t)$ instead of $\ln(A_t/R_t)$. This would emphasise that the policyholder only receives a bonus proportional to her share of the portfolio backing the policy. From the mathematical point of view, of course there is no difference since $\ln(\alpha A_t/R_t)=\ln \alpha+\ln(A_t/R_t)$ and the additional term, $\ln\alpha$, is absorbed in the specification of the target buffer ratio $\beta>0$.
\end{remark}

\subsection{Intrinsic value of the policy and arbitrage-free price}
Next we describe the so-called intrinsic value of the policy, which is the value that the policyholder receives either at the maturity of the policy or at an earlier time, should she decide to exercise the surrender option.

The intrinsic value is equal to the policy reserve plus a bonus component. The latter, is activated when the value of the policyholder's shares in the portfolio $A$ exceeds the current value of the policy reserve; that is, when ${\alpha A_t}>R_{t}$. In this case the policyholder receives a bonus fraction $\gamma$ of the surplus of her $\alpha$-share.

From the mathematical point of view, the intrinsic value of the policy may be written as
\begin{align}\label{gdef}
g(A_{t},R_{t}) :=R_{t} + \gamma \left[\alpha A_{t} - R_{t}\right]^{+},
\end{align}
where $[x]^+:=\max\{x,0\}$ and $\gamma\in(0,1)$ is the so-called \emph{participation coefficient}.

The model also takes into account that the company may fail to meet the solvency requirement at any time before $T$. In fact, denoting by $\tau^{\dagger}$ the stopping time ({\em insolvency time})
\begin{align}\label{taumorto}
\tau^\dagger:=\inf\{t\ge0\,:\,A_{t} \le R_{t}\},
\end{align}
the company's solvency requirement is satisfied for $t<\tau^\dagger$. In the event of $\tau^\dagger < T$ the policy is liquidated and the policyholder receives (cf.~\eqref{gdef})
\begin{align*}
g(A_{\tau^{\dagger}}, R_{\tau^{\dagger}}) = R_{\tau^{\dagger}},
\end{align*}
i.e., the policy reserve value.

Finally, we can define $V_0$, the arbitrage-free price of the PPSO at time zero. Notice that $V_0=V_0(\alpha)$, in the sense that the contract is specified by indicating the portion $\alpha$ of the portfolio which backs the policy.  Recalling $\eqref{dynA}$, \eqref{dynP}, \eqref{gdef} and \eqref{taumorto}, we have
\begin{equation}
\label{value1}
V_0 = \sup_{0\leq\tau\leq T}\EQ\Big[e^{-r\, (\tau\wedge{\tau^\dagger})} g(A_{\tau\wedge\tau^\dagger}, R_{\tau\wedge\tau^\dagger})\Big],
\end{equation}
where $\EQ$ is the expectation under the measure $\Q$ and the supremum is taken over all stopping times $\tau\in[0,T]$ with respect to $(\cF_t)_{t\in[0,T]}$. In what follows we will refer to \eqref{value1} as the \emph{PPSO problem}.

The value of the surrender option embedded in the contract (usually referred to as Early Exercise Premium in the mathematical finance literature) can be obtained as
\begin{align}\label{eq:pricing}
V^{\text{opt}}:=V_0-V^E_0,
\end{align}
where $V^E_0$ is the arbitrage-free price of the contract without the possibility of an early surrender, that is
\begin{align}\label{VE}
V^E_0 =\EQ\Big[e^{-r\, (T\wedge{\tau^\dagger})} g(A_{T\wedge\tau^\dagger}, R_{T\wedge\tau^\dagger})\Big].
\end{align}

It is worth noticing that, in practice, the use of surrender options may be disincentivised by insurance companies, who agree to pay out only a fraction of the policy reserve in case of early surrender. In our case that would correspond to take $\lambda g(A_\tau,R_\tau)$ in \eqref{value1} on the event $\{\tau<\tau^\dagger\wedge T\}$, with $\lambda\in[0,1)$. Here we focus on the model set out in \eqref{value1}, i.e., $\lambda=1$, which is consistent with the existing literature (see, e.g., \cite{GROSEN}, \cite{SIU}) and provides an upper bound for the prices of contracts with $\lambda\in[0,1)$. As we will see below, this model is also the source of interesting mathematical findings from the point of view of optimal stopping theory.


\subsection{Dimension reduction and bonus distribution rate}\label{sec:dimensionred}

As noticed in \cite{CHUKWOK} and \cite{SIU}, the PPSO problem can be made more tractable by considering the bonus distribution rate \eqref{eq:BDR} (i.e., the logarithm of the ratio $A/R$) as the observable process in the optimal stopping formulation of \eqref{value1}. Indeed, set $X:=(X_t)_{t\in[0,T]}$ with
\begin{align}\label{defX}
X_{t} := \ln \left(\frac{A_{t}}{R_{t}}\right)\qquad\textrm{for $t\in[0,T]$, $\Q$-a.s.}
\end{align}
Then, by \eqref{dynA} and \eqref{dynP}, one gets
\begin{align}\label{dynX}
\ud X_{t}= \Big(r-r^{G} -\tfrac{1}{2}\sigma^{2}-\big[\delta( X_{t}-\beta)-r^{G}\big]^{+} \Big)\ud t + \sigma \ud \WW_{t},
\end{align}
with initial condition
\begin{align}\label{xalpha}
X_0=x_\alpha:=\ln(1/\alpha)>0.
\end{align}
In terms of $X$ the insolvency time becomes
\begin{equation}\label{taumorto1}
\tau^{\dagger} = \inf\{t\ge 0\,:\, X_{t} \le 0 \}.
\end{equation}
Next we write the intrinsic value of the policy $g(A_t,R_t)$ (see \eqref{dynP}) in terms of $X_t$. For $x\in\R_+:=[0,\infty)$ define the gain function
\begin{align}\label{defh}
h(x) :=e^{-x} +\gamma\big[\alpha-e^{-x}\big]^{+},
\end{align}
and notice that
\begin{align}\label{eq:ph}
x\mapsto h(x)\:\:\text{is convex, striclty decreasing and}\:\:|h(x)-h(y)|\le |x-y|,
\end{align}
since $\gamma\in(0,1)$. Then
\begin{align}\label{gX}
g(A_t,R_t)=A_t\Big(e^{-X_t}+\gamma\big[\alpha-e^{-X_t}\big]^+\Big)=A_t\,h(X_t),\qquad\text{$t\in[0,T]$, $\Q$-a.s.}
\end{align}
From the expression above we notice that for $X_t>x_\alpha$ the participation bonus in the intrinsic value of the policy is strictly positive. So we can think of $x_\alpha$ as the activation threshold for the participation bonus.

Now the key to the dimension reduction is a change of measure. Define the martingale process $M:=(M_t)_{t\in[0,T]}$ by
\begin{align}\label{gXb}
M_t:=e^{\sigma \WW_t-\frac{\sigma^2}{2}t}=e^{-rt}A_t/a_0,
\end{align}
and the probability measure $\P$ equivalent to $\Q$ on $\cF_T$ given by $\ud\P=M_T\, \ud\Q$. By Girsanov theorem the process $W:=(W_t)_{t\in[0,T]}$ with
\begin{align}\label{Wtilde}
W_{t} := \WW_{t} - \sigma t,
\end{align}
is a $\P$-Brownian motion. Then, under the new measure $\P$, the dynamics of $X$ reads
\begin{equation}\label{X1}
\begin{cases}
\ud X_{t}= \pi(X_{t})\ud t + \sigma \ud W_{t},  \\
X_{0} = x_\alpha>0,
\end{cases}
\end{equation}
where
\begin{align}\label{pi}
\pi(x):=r-r^{G} +\frac{1}{2}\sigma^{2}-\big[\delta( x-\beta)-r^{G}\big]^{+}.
\end{align}
For future reference, it is worth defining 
\begin{align}\label{eq:xbar}
\bar x_0:=\beta+\frac{r}{\delta}\quad\text{and}\quad x_G:=\beta+\frac{r^G}{\delta}.
\end{align}
Then $x_G<\bar x_0$ since $r^G<r$. When the BDR process exceeds $x_G$ (i.e., $X_t>x_G$) the policyholder receives the bonus interest rate on the reserve, above the minimum rate guaranteed $r^G$ (see \eqref{scheme}). Again by \eqref{scheme} the interest rate paid on the reserve is higher than the risk-free rate $r$ when the BDR process exceeds $\bar x_0$ (i.e., $X_t>\bar x_0$).

Using \eqref{gX}, \eqref{gXb} and the optional sampling theorem, for any stopping time $\tau\in[0,T]$ one easily obtains
\begin{align}\label{tiz10}
\EQ &\Big[e^{-r\, (\tau\wedge{\tau^\dagger})} g(A_{\tau\wedge\tau^\dagger}, R_{\tau\wedge\tau^\dagger})\Big]=a_0\,\EQ\Big[M_{\tau\wedge\tau^\dagger}\,h\big(X_{\tau\wedge\tau^\dagger}\big)\Big]=
a_0\,\E\Big[h\big(X_{\tau\wedge\tau^\dagger}\big)\Big],
\end{align}
with $\E[\,\cdot\,]$ denoting the $\P$-expectation. Hence, \eqref{value1} may be rewritten as $V_0=a_0v_0$, where
\begin{align}\label{tiz11}
v_0:=\sup_{0\le \tau\le T}\E\big[h\big(X_{\tau\wedge\tau^\dagger}\big)\big].
\end{align}

Life-insurance contracts often charge management fees to the holder. Methods developed in this paper can be used to deal with fees that are charged at a rate proportional to the value of the portfolio $A$ and to the policy reserve $R$. If we add proportional fees to our model, the arbitrage-free price of the PPSO changes from its expression in \eqref{value1} to
\begin{align}\label{value:fees}
V_0 = \sup_{0\leq\tau\leq T}\EQ\Big[e^{-r\, (\tau\wedge{\tau^\dagger})} g(A_{\tau\wedge\tau^\dagger}, R_{\tau\wedge\tau^\dagger})-\int_0^{\tau\wedge\tau^\dagger}e^{-rs}\Big(pA_s+qR_s\Big)\ud s\Big],
\end{align}
for some $p,q\ge 0$ (notice in particular that $p=p(\alpha)$ should depend on the fraction $\alpha$ of the portfolio backing the policy purchased by the policyholder). Also in this case we can perform the change of measure displayed above and arrive at $V_0=a_0v_0$, where now
\begin{align}\label{v0:fees}
v_0=\sup_{0\le \tau\le T}\E\Big[h\big(X_{\tau\wedge\tau^\dagger}\big)-\int_0^{\tau\wedge\tau^\dagger}\Big(p+q e^{-X_s}\Big)\ud s\Big].
\end{align}

It is important to emphasise that, given a participation level $\alpha>0$, the value $V_0$ of the PPSO is specific to such value of $\alpha$. Indeed both the initial value of the reserve $R_0$ and the participation bonus on the intrinsic value of the policy depend on $\alpha$. So we should think of the PPSO's value as $V_0=V_0(\alpha)$ (and equivalently $v_0=v_0(\alpha)$). In order to solve the problem, i.e.,  determine $V_0$ and the optimal exercise time of the surrender option, in the next section we will embed our problem in a Markovian setting.

\section{Summary of main results}\label{sec:mainres}

Thanks to the Markovian nature of the process $X$ the value $v_0$ only depends on the initial value of the process $X_0=x_\alpha$ (see \eqref{xalpha}) and on the maturity $T$ of the contract. However, in order to be able to characterise $v_0$ and the associated optimal stopping rule, we must embed our problem into a larger state-space by considering all possible initial values of the time-space dynamics $(t,X)$. For that we denote by $X^x$ the process $X$ starting at time $0$ from an arbitrary point $x\ge 0$ and evolving according to \eqref{X1}. Similarly, we denote by $X^{t,x}$ the process $X$ starting at time $t$ from $x\ge 0$ and evolving according to \eqref{X1}. For future reference, when notationally convenient we will use $\E_x[\,\cdot\,]:=\E[\,\cdot\,|X_0=x]$. Since $X$ is time-homogeneous, it holds
\[
\mathsf{Law}\big((s,X^{t,x}_{s})_{s\ge t}\big)=\mathsf{Law}\big((t+s,X^x_s)_{s\ge 0}\big).
\]
Then we can identify the dynamics $(s,X^{t,x}_s)_{s\in[t,T]}$ and $(t+s,X^x_s)_{s\in[0,T-t]}$, and use the latter in the problem formulation below. Thanks to time-homogeneity, we also have that $\tau^\dagger$ defined in \eqref{taumorto1}
is independent of time. Sometimes we use $\tau^\dagger(x)$ to emphasise that $\tau^\dagger$ depends on $X_0=x$.

From now on we will study the finite-time horizon optimal stopping problem given by
\begin{align}\label{valuev}
v(t,x):=\sup_{0\le\tau\le T-t}\E\Big[h(X^{x}_{\tau\wedge\tau^\dagger})\Big],\qquad (t,x)\in[0,T]\times\mathbb{R}_+,
\end{align}
which embeds the PPSO problem \eqref{tiz11}. It is clear that we can go back to our original problem in two steps: first $v_0=v(0,x_\alpha)$, and then $V_0=a_0v_0$.
In the presence of management fees, by the same procedure we obtain the analogue of \eqref{valuev}:
\begin{align}\label{v:fees}
v(t,x)=\sup_{0\le\tau\le T-t}\E\Big[h(X^{x}_{\tau\wedge\tau^\dagger})-\!\int_0^{\tau\wedge\tau^\dagger}\!\!\Big(p\!+\!q e^{-X^x_s}\Big)\ud s\Big],\quad (t,x)\in[0,T]\times\mathbb{R}_+.
\end{align}
The study of \eqref{valuev} and \eqref{v:fees} are equivalent from the methodological point of view (see Section \ref{sec:fees} for a detailed discussion) and therefore we focus on \eqref{valuev} in the interest of notational simplicity. We show in Section \ref{sec:fees} (Figure \ref{fig4}) how the addition of management fees affects the qualitative properties of the surrender policy.

\subsection{Theoretical results: value function and optimal exercise boundary}
In this section we provide the main theoretical results of the paper, whose proofs are given at the end of Section \ref{sec:freeb} and build upon technical results obtained in Sections \ref{sec:value} and \ref{sec:freeb} for the ease presentation.
Let $\cL$ be the second order differential operator associated to the diffusion \eqref{X1}, i.e.
\begin{align}\label{eq:L}
(\mathcal{L} f)(x):=\tfrac{\sigma^2 }{2} \partial_{xx}f (x)+\pi(x) \partial_xf(x),\quad\text{for any $f\in C^2(\R_+)$},
\end{align}
with $\partial_x$ and $\partial_{xx}$ denoting the first and second order partial derivatives with respect to $x$, respectively. We shall also denote the partial derivative with respect to time by $\partial_t$. As usual in optimal stopping theory, let us introduce
\begin{align}\label{setC0}
\mathcal{C}=&\, \big\{ (t,x)\in [0,T]\times \R_+ \ : \ v(t,x)>h(x) \big\},
\end{align}
and
\begin{align}\label{setS0}
\mathcal{S}=&\, \big\{ (t,x)\in [0,T]\times \R_+ \ : \ v(t,x)=h(x) \big\},
\end{align}
that are the so-called continuation and stopping regions, respectively. For future reference let $\partial\cC$ be the boundary of the set $\cC$ (notice that $\{T\}\times \R_+\subseteq\partial\cC$) and introduce the first entry time of $(t+s,X_s)$ into $\mathcal{S}$, i.e.
\begin{equation}
\label{entry}
\tau^*(t,x):= \inf \left \{ s \in [0,T-t] \ :  \ (t+s,X_s^x) \in \cS \right \}.
\end{equation}

On the value function $v$ of \eqref{valuev} we have the next result.
\begin{theorem}[The value function]\label{thm:v}
The function $v$ is non-negative, continuous and bounded by $1$ on the set $[0,T]\times\R_+$, with $v\ge h$. The mappings $t\mapsto v(t,x)$ and $x\mapsto v(t,x)$ are both non-increasing. Moreover, $v\in C^1\big([0,T)\times(0,\infty)\big)$, the second derivative $\partial_{xx}v$ exists and is continuous on the set $\overline\cC\cap\big([0,T)\times(0,\infty)\big)$ and $v$ solves (uniquely) the free boundary problem
\begin{align}\label{PDEu0}
\left\{
\begin{array}{ll}
\partial_t v+\cL v = 0,& \text{in $\cC$},\\
\partial_t v+\cL v \le 0,& \text{in $\cS$},\\
v\ge h,& \text{on $[0,T]\times\R_+$},\\
v=h,& \text{on $\partial\cC$.}
\end{array}
\right.
\end{align}
\end{theorem}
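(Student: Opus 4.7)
The plan is to establish the listed properties in order of increasing depth: elementary bounds and monotonicity first, then continuity, and finally the regularity and PDE characterisation.

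For the bounds, $0\le h\le 1$ is straightforward (using $\alpha,\gamma\in(0,1)$ and $h(0)=1=\max h$), whence $0\le v\le 1$, and $v\ge h$ by taking $\tau\equiv 0$. Monotonicity in $t$ is immediate since shortening the horizon shrinks the class of admissible stopping times. For monotonicity in $x$ I would use pathwise comparison for \eqref{X1}: the noise is additive and the drift $\pi$ is Lipschitz (indeed piecewise affine), so two solutions driven by the same $W$ preserve the ordering of their initial values, $X^{x_1}\le X^{x_2}$, and in particular $\tau^\dagger(x_1)\le\tau^\dagger(x_2)$. Since $h$ is strictly decreasing and attains its maximum $1$ at $0$, a case split on $\{\tau\le\tau^\dagger(x_1)\}$ versus $\{\tau^\dagger(x_1)<\tau\}$ gives $h(X^{x_1}_{\tau\wedge\tau^\dagger(x_1)})\ge h(X^{x_2}_{\tau\wedge\tau^\dagger(x_2)})$ pathwise, whence $v(t,x_1)\ge v(t,x_2)$. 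Continuity follows from standard arguments: Lipschitz continuity in $x$ from \eqref{eq:ph} combined with the Gronwall bound $|X^{x_1}_t-X^{x_2}_t|\le|x_1-x_2|e^{Lt}$ (and the case analysis above to handle different $\tau^\dagger$'s via $h(0)=1$); continuity in $t$ via the dynamic programming principle, path-continuity of $X$ and dominated convergence. These details are developed in Section~\ref{sec:value}, which I would cite.

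The PDE characterisation and $C^{1,2}$ regularity then follow the standard route for non-degenerate parabolic optimal stopping. In the open set $\cC=\{v>h\}$, the strong Markov property gives $v(t,x)=\E[v(t+\tau_B,X^x_{\tau_B})]$ on any open ball $B\subset\cC$ with exit time $\tau_B$, which is the martingale form of $\partial_t v+\cL v=0$. Since $\sigma$ is a positive constant and $\pi$ is continuous (and piecewise smooth, with its only kink at $x_G$), interior Schauder/$L^p$ theory for parabolic equations yields $v\in C^{1,2}(\cC)$ classically; continuity of $\partial_{xx}v$ across $\{x=x_G\}$ then follows by solving the PDE algebraically, $\partial_{xx}v=-(2/\sigma^2)(\partial_t v+\pi\,\partial_x v)$, and using continuity of $\partial_t v$, $\partial_x v$ and $\pi$. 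Smooth fit (i.e.\ $v\in C^1$ across $\partial\cC$, giving $\partial_x v=h'$ at boundary points where $h'$ exists) is obtained by the classical one-sided excursion argument that exploits non-degeneracy of the diffusion. The inequality $\partial_t v+\cL v\le 0$ on $\cS$ follows from the supermartingale property of the Snell envelope together with It\^o's formula applied separately on either side of the kink of $h$. Uniqueness in the class of $C^{1,2}$-up-to-$\partial\cC$ functions is a verification argument: for any $\tilde v$ satisfying \eqref{PDEu0}, It\^o's formula gives $\tilde v(t,x)\ge\E[h(X^x_{\tau\wedge\tau^\dagger})]$ for every $\tau$, with equality at $\tau=\tau^*$.

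The main obstacle I anticipate is reconciling the global $C^1$ claim on $[0,T)\times(0,\infty)$ with the fact that the obstacle $h$ has a kink at $x=x_\alpha$: a boundary point lying on that kink cannot produce a $C^1$ extension via smooth fit, since $h'$ itself jumps there. Establishing $C^1$ therefore requires showing that $(t,x_\alpha)\in\cC$ for every $t<T$, i.e.\ that $x_\alpha$ lies in the interior of the continuation region for all sub-maturity times. This is not automatic from the general theory and must be derived from a careful local analysis at $x_\alpha$ combining the convex nature of the kink of $h$ with the structure of the drift $\pi$; in my view this interplay between smooth-fit regularity and the fine structure of $\partial\cC$ is the technical lynchpin that drives the detailed free-boundary analysis in Section~\ref{sec:freeb}.
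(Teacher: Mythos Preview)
Your outline is essentially correct and matches the paper's overall architecture, including the correct identification of the key technical point: that $C^1$ regularity across $\partial\cC$ hinges on first establishing $[0,T)\times\{x_\alpha\}\subset\cC$ (Proposition~\ref{prop1}). A few differences in execution are worth noting.

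First, the paper's workhorse for both the time-regularity estimate and the $x_\alpha$ analysis is the It\^o--Tanaka reformulation \eqref{u}, which recasts $v-h$ as an optimal stopping problem with running reward $H$ plus local time $L^{x_\alpha}$. Lipschitz continuity in $t$ (not just continuity) comes from the local time bound in Lemma~\ref{lem:loctime}, and the fact that $(t,x_\alpha)\in\cC$ comes from Lemma~\ref{lem:tech}, which shows the local time term dominates for short times. Your ``DPP plus dominated convergence'' would give continuity in $t$, but the paper actually needs the stronger local Lipschitz estimate \eqref{eq:ct} later to prove $\partial_t v\to 0$ at $\partial\cC$ (Proposition~\ref{prop:gradient}).

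Second, the paper does not derive smooth fit by a direct one-sided excursion argument. Instead it first establishes the geometry of $\cS$ (piecewise monotonicity of $c$, Propositions~\ref{prop:monot-c2}--\ref{prop:monot3}), then uses that geometry to prove probabilistic regularity of $\partial\cC$ (Lemma~\ref{lem:reg}), and only then obtains $C^1$ regularity via explicit gradient bounds (Lemma~\ref{lem:vx}, Proposition~\ref{prop:gradient}) in the spirit of \cite{DeAPe18}. Your direct route to smooth fit in $x$ is viable for non-degenerate one-dimensional diffusions, but you should be aware that the paper's order of dependence is the reverse of what you sketch: boundary structure feeds into $C^1$, not the other way round.

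Finally, for $\partial_t v+\cL v\le 0$ in $\cS$, the paper simply observes that $v=h$ there and $\cL h=H\le 0$ on $\cS\setminus(\{T\}\times\R_+)$ by \eqref{eq:R} and \eqref{x0bar}; your supermartingale-plus-It\^o argument is correct but less direct.
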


In the PDE literature, the above result is often presented in terms of a variational inequality. That is, $v$ is the unique solution, in the a.e.\ sense of the obstacle problem:
\[
\max\{\partial_t v+\cL v,h-v\}=0,\quad\text{on $[0,T)\times\R_+$\,,}
\]
with boundary conditions $v(T,x)=h(x)$ for $x\in\R_+$ and $v(t,0)=h(0)$ for $t\in[0,T)$. Uniqueness in Theorem \ref{thm:v} refers to the class of continuous functions $w$ such that $w\in C^1([0,T)\times(0,\infty))$ with $w_{xx}\in L^\infty_{\ell oc}([0,T)\times(0,\infty))$.

From continuity of $v$ we deduce that $\cC$ is open and $\cS$ is closed. Hence in particular $\partial\cC\subset\cS$. Moreover, standard optimal stopping results (see \cite[Cor.\ 2.9, Sec.\ 2]{PS}) guarantee that the entry time $\tau_*$ to the stopping set $\cS$ \eqref{entry} is optimal for $v(t,x)$.
It is then of interest to determine the geometry of $\cS$.

At first we notice that since $t\mapsto v(t,x)-h(x)$ is non-increasing, then we can define 
\begin{align}\label{eq:c}
c(x):=\inf\{t\in[0,T]: v(t,x)=h(x)\},\quad \text{for $x\in\R_+$ (recall $\R_+:=[0,\infty)$).}
\end{align}
This gives us a parametrisation of the stopping set as 
\begin{align}\label{eq:Sc}
\cS=\{(t,x)\in[0,T]\times\R_+: t\ge c(x)\}. 
\end{align}
In optimal stopping theory and its financial applications it is often preferable to describe the set $\cS$ in terms of time-dependent boundaries. So, in our analysis in Sections \ref{sec:value}--\ref{sec:freeb} we use the boundary $c(\,\cdot\,)$ as a useful technical tool but we are also able to prove that it can be inverted locally and we present our results here in terms of time-dependent boundaries $b_1$, $b_2$ and $b_3$.

It turns out that there are two different shapes of $\cS$ depending on the model parameters. Recalling $\bar x_0=\beta+\frac{r}{\delta}$ and $x_\alpha=\ln(1/\alpha)$ we will address separately the cases $x_\alpha<\bar x_0$ and $x_\alpha\ge \bar x_0$.
Our second main result is summarised below, where we denote $f(t-)$ the left limit of a function $f$ at a point $t$ and we adopt the convention $[t,t)=\varnothing$ for any $t\in\R$.

\begin{theorem}[The optimal boundary]\label{thm:boundary}
The following holds:
\begin{itemize}
\item[(a)] If $x_\alpha\ge\bar x_0$, there exists a function $b_1:[0,T)\to [0,\bar x_0]$ and a constant $t_0\in[0,T)$ such that $b_1(t)=0$ for $t\in[0,t_0)$, $b_1$ is strictly increasing and continuous on $[t_0,T)$ with $b_1(T-)=\bar x_0$, and the stopping region is of the form
\[
\cS=\big\{(t,x)\in[0,T)\times\R_+: x\le b_1(t)\big\}\cup\big(\{T\}\times\R_+\big).
\]
Thus the optimal stopping time $\tau_*$ reads
\[
\tau^*=\inf\{t\in[0,T)\,:\,X_t\le b_1(t)\}\wedge T.
\]

\item[(b)] If $x_\alpha<\bar x_0$, there exist constants $t_0\in[0,T)$ and $\hat c\in[0,T]$, and functions $b_1:[0,T)\to [0, x_\alpha]$ and $b_2,b_3:[\hat c,T)\to [x_\alpha,\bar x_0]$, such that:
\begin{itemize}
\item[  i)] $b_1(t)=0$ for $t\in[0,t_0)$ and $b_1$ is strictly increasing and continuous on $[t_0,T)$;
\item[ ii)] $b_2$ is strictly decreasing and continuous while $b_3$ is strictly increasing and continuous on $[\hat c,T)$;
\item[iii)] $b_1(t)\le b_2(t)\le b_3(t)$ for $t\in[\hat c,T)$ with $b_2(\hat c)=b_3(\hat c)$ if $\hat c>0$, and $\hat c=0$ if $b_2(\hat c)<b_3(\hat c)$;
\item[ iv)] $b_1(T-)=b_2(T-)=x_\alpha$ and $b_3(T-)=\bar x_0$;
\item[  v)] the stopping region is of the form
\[
\cS=\big\{(t,x)\in[0,T)\times\R_+: x\le b_1(t)\:\:\text{or}\:\:x\in[b_2(t),b_3(t)]\big\}\cup\big(\{T\}\cup\R_+\big),
\]
thus the optimal stopping time $\tau_*$ reads
\[
\tau^*=\inf\{t\in[0,T)\,:\,X_t\le b_1(t)\:\:\text{or}\:\:X_t\in[b_2(t),b_3(t)]\}\wedge T.
\]
\end{itemize}
\end{itemize}
\end{theorem}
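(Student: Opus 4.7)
My plan is to work with the horizontal threshold $c(\,\cdot\,)$ defined in \eqref{eq:c} and exploit the slice representation $\cS=\{(t,x):t\ge c(x)\}$, which is a direct consequence of the monotonicity of $t\mapsto v(t,x)-h(x)$ from Theorem \ref{thm:v}. The three time-dependent boundaries $b_1,b_2,b_3$ are then recovered by inverting $c$ on suitable subintervals of $\R_+$.

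The backbone of the argument is a sign analysis of $\cL h$. Since $h$ is convex with a unique kink at $x_\alpha$, a Tanaka--Meyer expansion yields, for every stopping time $\tau$ and initial point $x\ge 0$,
\[
\E\bigl[h(X^x_\tau)\bigr]=h(x)+\E\!\left[\int_0^\tau(\cL h)(X^x_s)\,\ud s\right]+\tfrac{1}{2}\gamma e^{-x_\alpha}\E\bigl[L^{x_\alpha}_\tau\bigr],
\]
where $L^{x_\alpha}$ is the local time of $X^x$ at $x_\alpha$. A direct computation gives $\cL h(x)<0$ on $(0,\bar x_0)\setminus\{x_\alpha\}$ and $\cL h(x)>0$ on $(\bar x_0,\infty)\setminus\{x_\alpha\}$. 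Plugging these signs into the identity above with $\tau$ an exit time from a small neighbourhood of $x$, I obtain: (i) no $x>\bar x_0$ can belong to $\cS$ for $t<T$, since $\cL h>0$ makes a short continuation strictly beneficial; and (ii) $\{x_\alpha\}\times[0,T)\subset\cC$, since on short horizons the strictly positive local-time term dominates. Hence $\cS\cap([0,T)\times\R_+)$ lies in $[0,\bar x_0]\setminus\{x_\alpha\}$ and naturally splits into a lower piece $\cS_-\subset[0,T)\times[0,x_\alpha)$ and a (possibly empty) upper piece $\cS_+\subset[0,T)\times(x_\alpha,\bar x_0)$.

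For each fixed $t<T$, I would show that the slice $\cS_-\cap(\{t\}\times\R_+)$ is an interval $[0,b_1(t)]$: down-closedness in $x$ follows from pathwise monotonicity of $X^x$ in $x$ combined with convexity of $h$ on $[0,x_\alpha)$. The slice $\cS_+\cap(\{t\}\times\R_+)$ is an interval $[b_2(t),b_3(t)]$: its connectedness in $x$ follows from the constant strict sign of $\cL h$ on $(x_\alpha,\bar x_0)$ together with a monotonicity analysis of $\partial_x v$ across the strip. The $t$-monotonicity of $\cS$ then forces $b_1,b_3$ non-decreasing and $b_2$ non-increasing in $t$; \emph{strict} monotonicity is obtained by contradiction, since a flat stretch would force $\cL h\equiv 0$ on an open rectangle via the PDE in \eqref{PDEu0}, contradicting the strict sign of $\cL h$. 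Continuity of each $b_i$ follows from continuity of $v$ together with the probabilistic continuity result for $c(\,\cdot\,)$ of Theorem \ref{prop:cont}. The terminal values are read off as $t\uparrow T$: in case (a), the strip $(x_\alpha,\bar x_0)$ is empty, so $\cS_+=\varnothing$ and $b_1(T-)=\bar x_0$; in case (b), the pinching at the kink as local time effects vanish in the limit yields $b_1(T-)=b_2(T-)=x_\alpha$, while $\cL h$ changing sign at $\bar x_0$ gives $b_3(T-)=\bar x_0$. The constant $t_0$ is the latest time at which, starting from $x=0^+$, the process has enough time to rebound so as to make continuation strictly profitable, and $\hat c$ is simply the birth time of the bubble $\cS_+$.

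The main obstacle is the analysis of the upper bubble in case (b). Because $\cL h$ keeps a constant negative sign on $(x_\alpha,\bar x_0)$, the raw sign analysis cannot locate $b_2$ or $b_3$; both boundaries arise from a delicate balance between the negative drift of $h(X)$ (which favours continuation) and the risk of being absorbed at $0$ before maturity (which favours early stopping while the reserve is still well covered). Quantifying this trade-off requires careful monotonicity of $\partial_x v$ across $x_\alpha$ and the reflecting-diffusion arguments of Theorem \ref{prop:cont} to exclude flat portions of $c(\,\cdot\,)$. Showing $b_2(\hat c)=b_3(\hat c)$ when $\hat c>0$ is also delicate: it amounts to proving that the bubble is born at a single point, which I would establish by contradiction, comparing $v$ with an explicit sub-solution in a small rectangle centred at any supposed interior gap $x_*\in(b_2(\hat c),b_3(\hat c))$.
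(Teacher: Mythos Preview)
Your overall strategy---analyse the sign of $\cL h$, show $[0,T)\times\{x_\alpha\}\subset\cC$ via local time, establish piecewise monotonicity and continuity of $x\mapsto c(x)$, then invert on each monotone piece to obtain $b_1,b_2,b_3$---is exactly the route the paper takes (Propositions~\ref{prop:monot-c2}, \ref{prop:monot3}, Lemma~\ref{prop:strip}, Theorem~\ref{prop:cont}, then the inversion formulas \eqref{inv1}--\eqref{inv3}). However, two steps in your proposal are either wrong or unnecessarily complicated.

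First, your interpretative paragraph has the sign backwards. On $(x_\alpha,\bar x_0)$ one has $\cL h<0$, so $h(X)$ is a local supermartingale there; this favours \emph{stopping}, not continuation. What makes continuation attractive inside the bubble is the possibility of reaching $x_\alpha$ (local-time gain) or exceeding $\bar x_0$ (where $\cL h>0$), not ``the negative drift of $h(X)$''. Likewise, absorption at $0$ pays $h(0)=1$, the maximal value of $h$, so it does not by itself incentivise early stopping. This mis-reading does not affect the formal proof, but it would mislead you when locating the bubble.

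Second, two of your structural arguments do not work as stated. For the down-closedness of $\cS_-$, pathwise monotonicity of $x\mapsto X^x$ together with convexity of $h$ does \emph{not} yield $(t,z)\in\cS\Rightarrow(t,z')\in\cS$ for $z'<z$: both $v$ and $h$ are decreasing in $x$, so no useful inequality for $v-h$ comes out directly. The paper instead uses the same barrier argument you invoke for $\cS_+$: since $(t,0)\in\cS$ and $H<0$ on $(0,x_\alpha)$, any supposed continuation point $z'\in(0,z)$ leads to $u(t,z')<0$ because the process must exit through $\cS$ without touching $x_\alpha$ (Lemma~\ref{prop:strip}(i)). For $b_2(\hat c)=b_3(\hat c)$ when $\hat c>0$, no sub-solution comparison is needed: the same Lemma~\ref{prop:strip} already shows that $c(\,\cdot\,)$ cannot be strictly positive and constant on any subinterval of $(x_\alpha,\bar x_0)$, so if $\hat c>0$ the argmin of $c$ on $[x_\alpha,\bar x_0]$ is a singleton, whence $x_2=x_3$ and $b_2(\hat c)=b_3(\hat c)$. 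Finally, $b_3(T-)=\bar x_0$ requires more than ``$\cL h$ changes sign at $\bar x_0$'': one must show $c(x)<T$ on $(x_\alpha,\bar x_0)$ (Lemma~\ref{lem:tech2}) and then use continuity of $c$ (Corollary~\ref{cor:lim-c}).
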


\begin{remark}\label{rem:xa}
A close inspection of the theorem above shows that $[0,T)\times\{x_\alpha\}\subset\cC$ in all cases (see Proposition \ref{prop1} for the proof).
\end{remark}

As anticipated in the Introduction, in case (b) we have two `stop-loss' boundaries (i.e., $b_1$ and $b_3$), which trigger the surrender option when the BDR process crosses them downwards, and a `too-good-to-persist' boundary (i.e., $b_2$), which triggers the surrender option when the BDR process crosses it upwards. In case (a) instead we only observe a single stop-loss boundary. These results will be interpreted in Section \ref{sec:num} below.

At the technical level, the strict monotonicity and continuity of the time-dependent boundaries in (a) and (b) of the theorem above are derived by analogous properties for the $x$-dependent boundary from \eqref{eq:c}. In particular, in case (a) we will prove that the function $x\mapsto c(x)$ is continuous on $(0,\infty)$, there exists $x_1\in[0,\bar x_0)$ such that $c(x)=0$ for $x\in[0,x_1)$ and it is strictly increasing on $(x_1,\bar x_0)$. So we have 
\[
b_1(t):=\inf\{x\in[0,\bar x_0): c(x)>t\},\qquad t\in[0,T).
\]
In case (b) instead the geometry is more involved. We will prove that there exist $x_1\in[0,x_\alpha)$ and $x_\alpha<x_2\le x_3<\bar x_0$ such that:
\begin{itemize} 
\item[(i)] $x\mapsto c(x)$ is continuous on $(0,\infty)$, $c(x)=0$ for $x\in[0,x_1)$ and it is strictly increasing on $(x_1,x_\alpha)$; 
\item[(ii)] the set of minimisers of $x\mapsto c(x)$ in $(x_\alpha,\bar x_0)$ is the closed interval $[x_2,x_3]\subset (x_\alpha,\bar x_0)$  where $c(\,\cdot\,)$ takes the value $\hat c$\,; if $x_2<x_3$ then $\hat c=0$;
\item[(iii)] $c(\,\cdot\,)$ is  strictly decreasing on $(x_\alpha,x_2)$ and strictly increasing on $(x_3,\bar x_0)$. 
\end{itemize}
Then, the boundaries $b_1$, $b_2$ and $b_3$ are obtained as
\begin{align*}
&b_1(t):=\inf\{x\in[0,x_\alpha):c(x)>t\},\qquad\:\:\:\: t\in[0,T),\\
&b_2(t):=\sup\{x\in(x_\alpha,x_2):c(x)>t\},\qquad t\in[\hat c,T),\\
&b_3(t):=\inf\{x\in(x_3,\bar x_0):c(x)>t\},\qquad\:\: t\in[\hat c,T).
\end{align*}
See Figures \ref{fig2}, \ref{fig3} and \ref{fig4} for various illustrations with both $\hat c=0$ and $\hat c>0$ and with $t_0=0$ and $t_0>0$ (notice that $t_0=\lim_{x\downarrow 0}c(x)$).

\subsection{Some technical remarks}
The choice to work with the boundary $x\mapsto c(x)$ in our theoretical analysis is dictated by the fact that {\em a priori} it seems too difficult to establish existence of the three boundaries $b_1$, $b_2$ and $b_3$. Indeed this would normally require to prove piecewise monotonicity of the map $x\mapsto v(t,x)-h(x)$ and/or convexity of the map $x\mapsto v(t,x)$, plus developing arguments that guarantee non-emptyness of the set $\{x\in\R_+:v(t,x)=h(x)\}$ depending on the choice of $t\in[0,T)$. Neither of these tasks follows by standard arguments because of the lack of an explicit solution for the SDE \eqref{X1} and due to the absorption at $x=0$ for the dynamics of $X$.

The probabilistic proof of the strict monotonicity of the optimal exercise boundaries in Theorem \ref{thm:boundary} is an interesting technical result in its own right and so far it was missing from the optimal stopping literature. In the PDE literature strict monotonicity of free boundaries (and even their smoothness) are well-known results. Classical references are the monographs \cite{LSU} and \cite{Fr} for a general treatment, while for parabolic problems with one spatial dimension (i.e., closer to our set-up) one can also refer to \cite{Cannon} and \cite{Fr75}. The techniques developed in those seminal contributions were then employed and tailored for optimal stopping problems in mathematical finance as for example in \cite{BX09,CC07} for American option pricing and \cite{DY09} for optimal investment with transaction costs. 

For the smoothness of the boundary (understood as its continuous differentiability or higher), PDE arguments often require smoothness of the obstacle (i.e., the option's payoff) and in all cases continuous differentiability of the coefficients of the SDE underlying the stochastic optimisation. When the obstacle is not smooth (as in the American put problem) one often takes advantage of the explicit transition density of the underlying stochastic process (typically a geometric Brownian motion). 
In our case we have neither a smooth payoff (see \eqref{defh}) nor continuously differentiable coefficients (see \eqref{pi}). Moreover, the transition density of our process $X$ is not known. So we cannot apply classical results from the PDE literature to derive continuous differentiability of the optimal boundary and we set this question aside. 
 
For the strict monotonicity of free boundaries the PDE literature relies upon an application of Hopf's lemma and an argument by contradiction. 
Our probabilistic proof complements those PDE techniques by employing methods more familiar to probabilists working on optimal stopping.

The shape of the stopping region in (b) of Theorem \ref{thm:boundary} is somewhat remarkable and it was never observed in the context of participating policies with surrender options. Not only the stopping region is disconnected, but when $\hat c>0$ there is a point in the stopping region at which one of the stop-loss boundaries meets the too-good-to-persist boundary (see Figure \ref{fig1}). Similar geometries of optimal stopping regions in the time-space plane have been observed numerically (see, e.g., \cite[Fig.4]{EV20}) but a complete theoretical analysis is not usually available. An instance of such study, revealing a similar geometry in a finite horizon optimal stopping problem, is \cite{DuTPe07}. However, the problem studied in \cite{DuTPe07} concerns the optimal prediction of the maximum of a Brownian motion with drift, whereas the one studied in \cite{EV20} concerns stopping of a partially observable Brownian bridge. Hence, the similarities with the stopping rule in our set-up appear to be a mere coincidence. 

Before presenting the full proofs of Theorems \ref{thm:v} and \ref{thm:boundary}, in the next section we discuss in detail the financial interpretation of our results with the aid of extensive numerical tests. The complete theoretical analysis that leads to Theorems \ref{thm:v} and \ref{thm:boundary} is performed in Sections \ref{sec:value} and \ref{sec:freeb} for the interested reader.


\subsection{Numerical results and financial interpretation}\label{sec:num}

In order to investigate the shape of the continuation and stopping regions we implement a binomial-tree algorithm  based on the diffusion approximation scheme proposed in \cite{NR}. We take a partition of $[0,T]$ with $N+1$ equally spaced time points. At each node in the tree we associate a value of the underlying process $X$ and of the corresponding time: that is, in the $(n,j)$-node we have the couple $(n,x^j_n)$ for $j=0,1,\ldots n$. At the subsequent time-step the process can move to one of the two nodes $(n+1,x^j_n\pm\sigma\sqrt{\Delta})$ with $\Delta:=T/N$, so that the tree is recombining. If $x^j_n>0$ the probability $p^j_n$ of moving upwards from the $n$-th node is calculated as in \cite{NR} as $p^j_n=0\vee[1\wedge(\tfrac{1}{2}+\sqrt{\Delta}\cdot\pi(x^j_n)/2\sigma)]$. If instead $x^j_n\le 0$ the process can only  move to $(n+1,0)$ with probability one. We compute the numerical approximation of the value function $\tilde v_n(x^j_n)$ of the PPSO with the usual backward recursion, starting from $\tilde v_N(x^j_N)=h(x^j_N)$ for $x^j_N\ge 0$. For any $n<N$, if $x^j_n\le 0$, then $\tilde v_n(x^j_n)=h(0)=1$; if instead $x^j_n> 0$, then $\tilde v_n(x^j_n)=\max\{h(x^j_n),\E[\tilde v_{n+1}(X_{n+1})|X_n=x^j_n]\}$. Since the binomial-tree has recombining nodes, the evaluation of the continuation value $\E[\tilde v_{n+1}(X_{n+1})|X_n=x^j_n]$ reduces to the average of the payoff at the next two nodes.

\begin{remark}
Notice that the regularity we have obtained for the value function $v$ allows us, in principle, to obtain an integral equation for the optimal boundary (see \cite{PS} for some examples). However, as the explicit form of the transition density of the process $X$ is not known, solving such integral equation numerically would not be possible. This motivates our use of binomial-trees.
\end{remark}

Unless otherwise specified, in the rest of the section we set the following values for the parameters (time is expressed in years while $r$, $r^G$ and $\sigma$ are annual rates) 
\begin{equation}\label{eq:param}
T=10,\, r=1.5\%,\, \sigma=18\%,\, r^G=1\%,\, \delta=0.1,\, \gamma=0.4,\, \beta=3,\, \alpha=0.1.
\end{equation}
For such parameter values $x_\alpha=2.3$, $\bar{x}_0=3.15$ and $x_G=3.1$ (see \eqref{xalpha} and \eqref{eq:xbar} respectively), therefore we are in the setting of $x_\alpha<\bar x_0$ (see (b) in Theorem \ref{thm:boundary}).

\subsubsection{Financial interpretation of the surrender region}

We recall that the structural properties of the intrinsic value of the policy (i.e., the function $h$ in \eqref{tiz11}) and the initial value of the reserve $R_0$ are determined by the choice of $\alpha$. From the financial perspective the PPSO is priced for each fixed value of the parameter $\alpha$, that is $V_0(\alpha)=a_0v_0(\alpha)$ (recall \eqref{value1} and \eqref{tiz11}), and the policyholder's initial BDR process at time zero is $X_0=x_\alpha$. 
Figure \ref{fig1} shows the optimal surrender region $\cS$ and the boundary $c(\cdot)$ as in \eqref{eq:Sc} on the $(x,t)$ plane.

\subsubsection*{Optimal surrender and default} Notice that in Figure \ref{fig1} the solvency requirement is always fulfilled if the policyholder exercises the SO optimally (i.e., $\tau^*<\tau^\dagger$). This corresponds to $t_0=0$ in (i) of Theorem \ref{thm:boundary}-(b) and, in particular, $x_1:=b_1(0)>0$ (compare also with $x_1$ in Proposition \ref{prop:monot-c2}-(i)). Hence the optimal termination of the contract can only occur due to surrender or at maturity. A different situation appears in Figure \ref{fig2}-(i), where $t_0=c(0+)>0$ and early termination due to solvency requirements may occur if the dynamics of $X$ hits zero prior to time $t_0$.
\begin{figure}[t!]
	\centering
	\begin{center}
		\hspace*{-0.7cm}
		\includegraphics[scale=0.4]{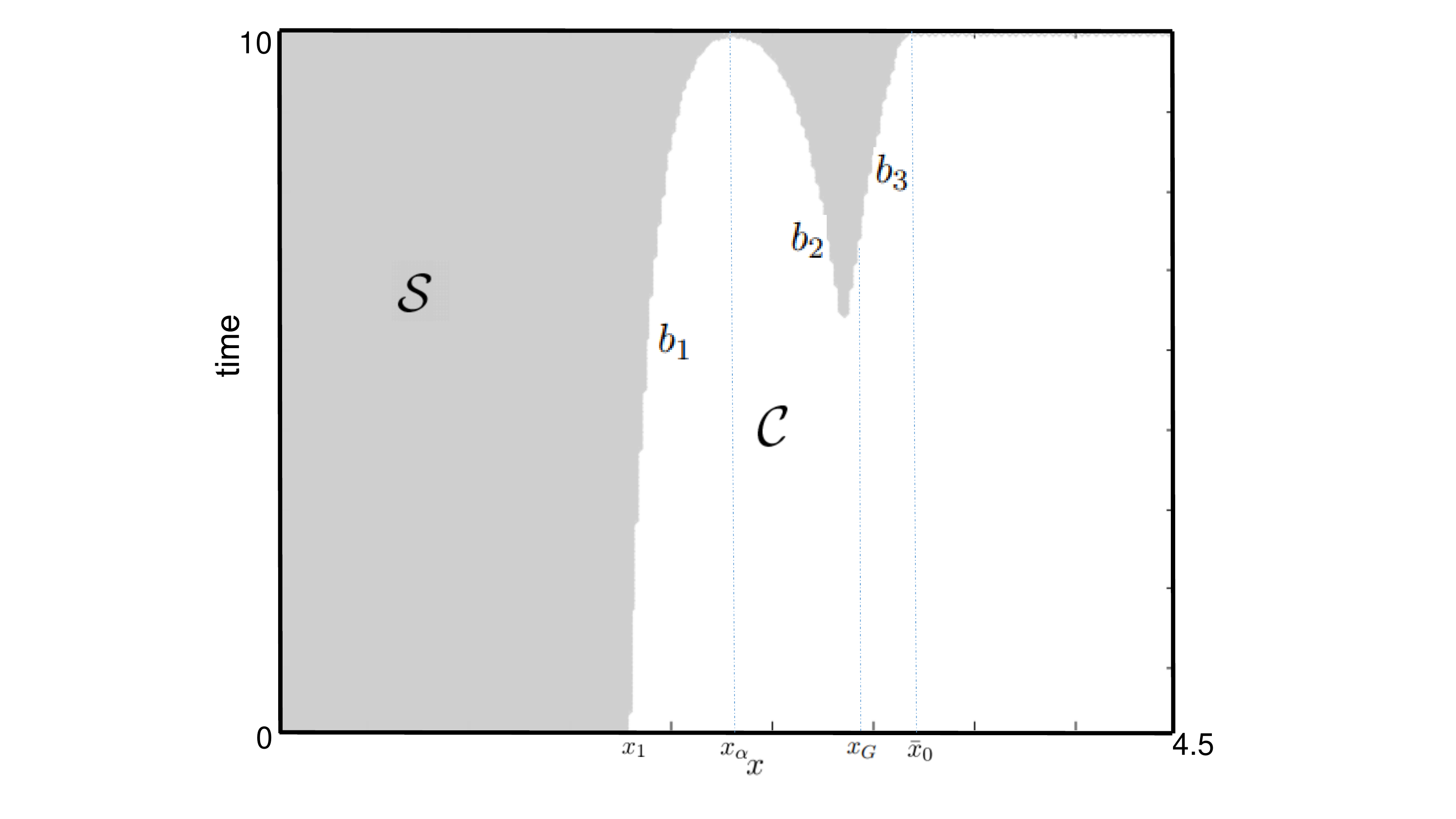}
		\vspace{-0.5cm}
	\end{center}
	\caption{The optimal surrender regions and boundaries in the case $x_1=b_1(0)>0$ and $b_2(\hat c)=b_3(\hat c)$ with $\hat c>0$   
(see Theorem \ref{thm:boundary}-(b)).}
	\label{fig1}
\end{figure}

\subsubsection*{Minimum rate guarantee, bonus rate and a stop-loss boundary} There is a natural interpretation for the shape of the surrender region for $X$ close to zero and for $X\ge \bar x_0$. On the one hand, when $X$ is close to $0$, $c(A,R) =r^G$ and $g(A,R) =R$ (see \eqref{scheme} and \eqref{gdef} respectively); thus, the policy reserve grows at rate $r^G$ which is lower than the discount rate $r$ used in \eqref{value1}. So the policyholder has an incentive to surrender in order to avoid an erosion of the present value of the reserve (which is due to the gap $r^G-r<0$). Hence it is natural to interpret the boundary $b_1$ as a stop-loss boundary. On the other hand, for values of $X$ larger than $\bar{x}_0$, the policy reserve grows at a rate greater than $r$, due to the bonus mechanism in \eqref{scheme}. In this case the policyholder has no incentive to surrender the contract and the stopping region disappears.

\subsubsection*{Stop-loss and too-good-to-persist boundaries for $X\in(x_\alpha,\bar x_0)$}
The peculiar shape of the surrender region between $x_\alpha$ and $\bar x_0$ can be explained as follows. The value $x_\alpha$ is the critical value at which the participation bonus in the intrinsic value of the policy becomes active (see \eqref{gdef}). Then, if $X=x_\alpha$, the investor delays the surrender with a view to possibly receiving the bonus. Moreover, in a neighbourhood of $x_\alpha$ the drift in the dynamics \eqref{X1} is positive (hence pulling the BDR process towards the bonus), so that the policyholder has an incentive to wait also if $X<x_\alpha$ but not too small. When $X>x_\alpha$ the participation bonus in the policy's intrinsic value is active and can be collected by the policyholder upon immediate surrender. This gives origin to the too-good-to-persist boundary $b_2$. When $X>x_G$ the bonus on the policy reserve's growth rate is also active (see \eqref{scheme}) and creates an incentive to wait. If the BDR process is close to $\bar x_0$ surrendering is not appealing. In fact the policyholder stays in the contract hoping that $X$ will exceed $\bar x_0$ and the reserve will grow at a higher rate than the risk-free rate. However, if $X$ decreases, the participation bonus in the intrinsic value of the policy lessens. At the same time, for $X\le \bar x_0$ the growth rate of the reserve is still smaller than the risk free rate. So, as the maturity approaches, the combined effect of these two mechanisms creates the stop-loss boundary $b_3$. 

\subsubsection{Sensitivity analysis} Here we discuss the impact of various parameters on the shape of the surrender region and on the value of both the policy and the surrender option. In what follows both the parametrisations of the boundary $\partial\cC$ in terms of $c$ and in terms of $b_1$, $b_2$, $b_3$ will be used.

\subsubsection*{The $\alpha$ fraction of the initial portfolio}   Figure \ref{fig2} shows possible shapes of the optimal surrender boundary $c(\cdot)$ that complement the one presented in Figure \ref{fig1}. The plots are obtained for several values of the parameter $\alpha$, or equivalently $x_\alpha$ (see \eqref{xalpha}). The possible presence of a portion of the continuation region below the local minimum of $c(\,\cdot\,)$ (as in Figure \ref{fig1}) and the value $\hat c$ of the minimum itself depend on several factors, including the value of $\alpha$. Large values of $\alpha$ push the initial BDR $X_0=x_\alpha$ towards zero so that the chances of benefiting from the bonus on the policy reserve's growth rate \eqref{scheme} are slim and the policyholder will prioritise the participation bonus in the intrinsic value of the policy \eqref{gdef}. That widens the area above the local minimum of $c(\,\cdot\,)$ until the continuation region \eqref{setC0} becomes completely disconnected (see Figure \ref{fig2}-(i)). It should be emphasised that since $X_0=x_\alpha$, a completely disconnected surrender region means that the policyholder will surrender the contract as soon as the BDR process leaves the continuation region between the lower stop-loss boundary $b_1$ and the too-good-to-persist boundary $b_2$. As $\alpha$ increases this mechanism is reversed: the activation thresholds $x_\alpha$ of the participating bonus and $x_G$ of the bonus on the reserve's growth rate become closer. Then the portion of stopping region above the local minimum of $c(\,\cdot\,)$ shrinks as $x_\alpha$ approaches $\bar x_0$ from the left (see Figure  \ref{fig2}-(ii) and Figure \ref{fig1} where $x_\alpha=2.3$). In the limit we arrive to the case of $x_\alpha\ge \bar x_0$ (cf. \eqref{eq:case2}), which is also illustrated in Figure  \ref{fig2}-(iii).
There the situation is less involved because the participating bonus kicks in after the process $X$ has already exceeded $\bar x_0$, so that the reserve is already growing at a rate higher than the discount rate. In that case the policyholder's waiting strategy is aimed at collecting both a large reserve and the participating bonus. The exercise of the SO in this setting is only optimal when $X$ is sufficiently small and it is purely triggered by the stop-loss mechanism due to discounting.

\begin{figure}[t!]
	\centering
	\begin{center}
		\hspace*{-0.7cm}
		\includegraphics[scale=0.4]{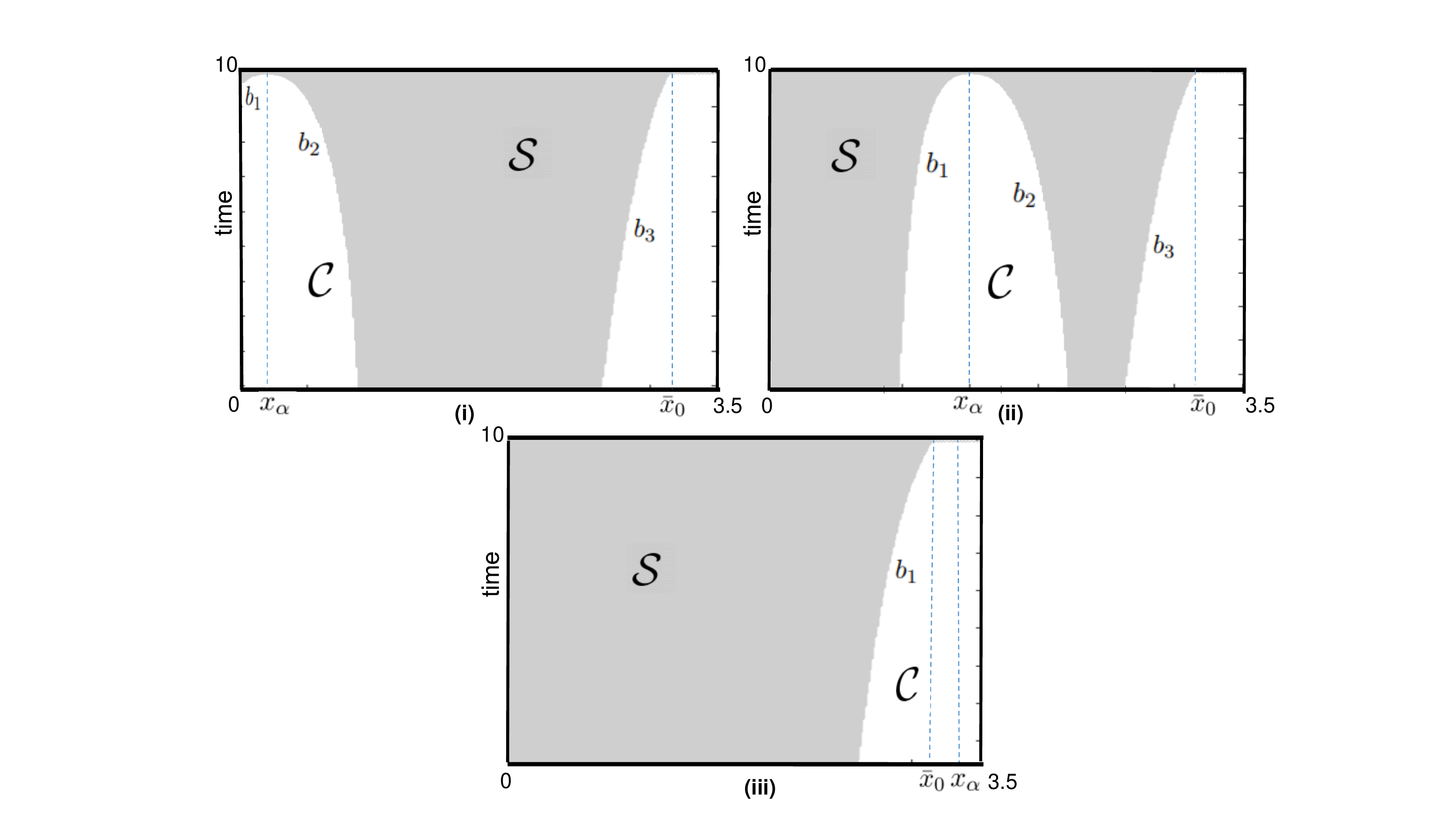}
		\vspace{-0.5cm}
	\end{center}
\caption{The optimal surrender boundary when varying $x_\alpha$. Here $\bar x_0=3.15$ and \textbf{(i)} $ x_\alpha=0.2$, \textbf{(ii)} $x_\alpha=1.5$, \textbf{(iii)} $x_\alpha=3.3$.}
	\label{fig2}
\end{figure}

\subsubsection*{Participation coefficient and minimum rate guarantee}
In Figure \ref{fig3} we study the sensitivity of the optimal surrender boundary $c(\cdot)$ with respect to $\gamma$ (left plot) and to $r^G$ (right plot). We remark that $\gamma$ only affects the intrinsic value of the policy (see \eqref{gdef}) but it does not affect either the reference portfolio $A$ nor the policy reserve $R$. As $\gamma$ increases, the participating bonus increases and counters the effect of discounting. So the policyholder is inclined to stay in the contract longer to see if the the bonus mechanism on the reserve will also be activated. Likewise, as $r^G$ increases the policyholder has progressively more benefits from staying in the contract, then the SO becomes less appealing and the area between the boundaries $b_2$ and $b_3$ shrinks. In particular, as $r^G$ approaches the risk free rate $r$ the stop-loss boundary $b_3$ and the too-good-to-persist boundary $b_2$ do not disappear but become less extended (theoretically this is expected because the interval $(x_\alpha,\bar x_0)$ does not depend on $r^G$ and it is shown in Lemma \ref{prop:strip} that $\cS\cap\big([0,T)\times(x_\alpha,\bar x_0)\big)\neq\varnothing$). In both situations the incentive to surrender the contract decreases and the continuation region expands. As a result the optimal boundary $c(\cdot)$ is pushed upwards in our plots. 

\subsubsection*{Values of the policy and of the surrender option} 
We conclude the section by analysing how the bonus distribution mechanism and the minimum interest rate guarantee in the policy reserve impact on the value of the policy and on the value of the embedded SO. The value of the SO is obtained, as in \eqref{eq:pricing}, by comparing the value of the PPSO to the value of its European counterpart (i.e., with no SO).
\begin{figure}[ht!]
	\centering
	\begin{center}
		\hspace*{-2.4cm}
		\includegraphics[scale=0.55]{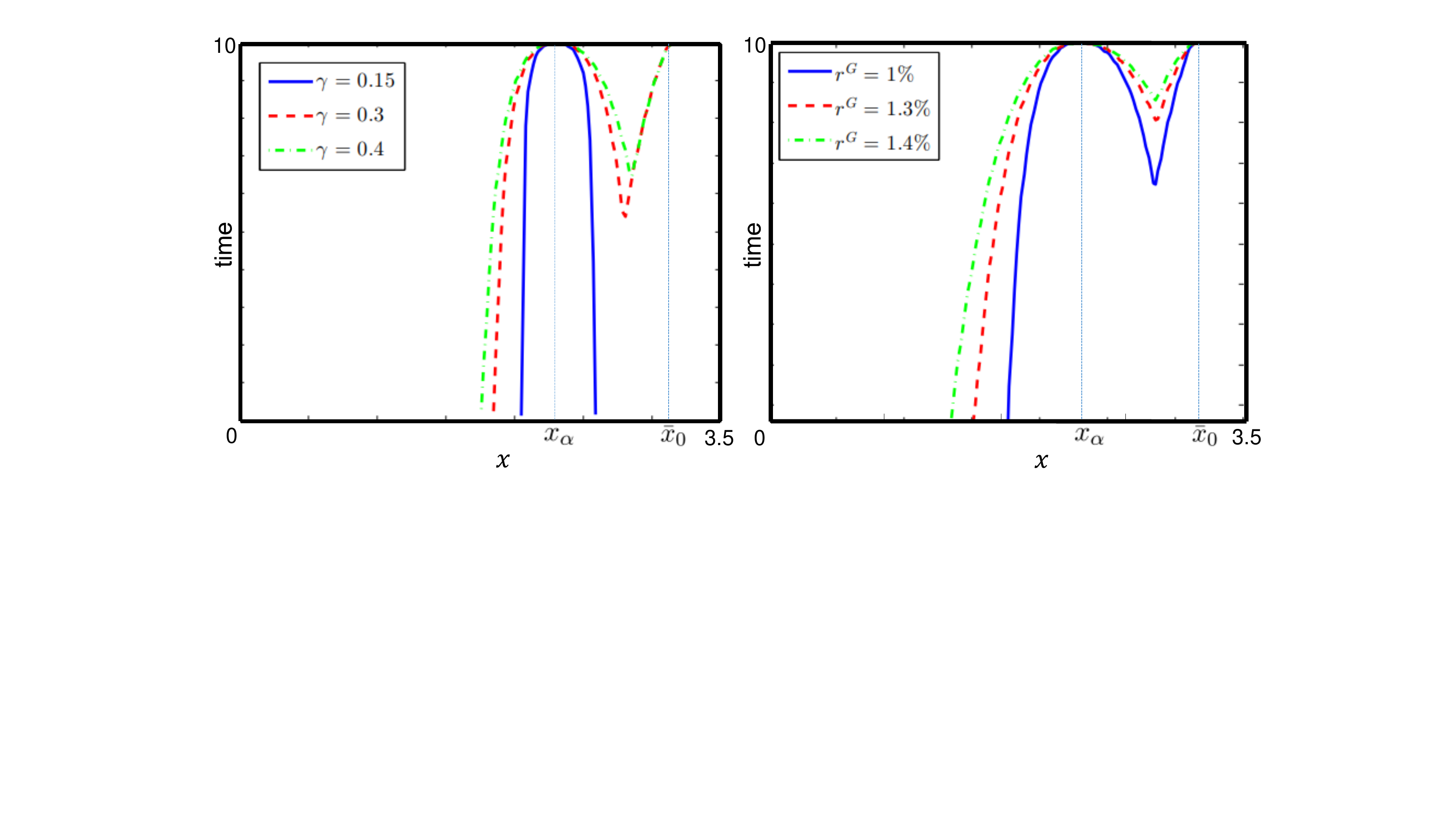}
		\vspace{-4.5cm}
	\end{center}
	\caption{ Sensitivity of the optimal surrender boundary $c(\cdot)$ with respect to $\gamma$ (left plot) and to $r^G$ (right plot). For $\gamma=0.15$ (left plot) we omit the rightmost portion of the boundary which is never reached when $X_0=x_\alpha$ and the policyholder stops optimally.}
		\label{fig3}
\end{figure}
Now we fix the initial portfolio value $A_0=1,000$ so that $R_0=100$.
We collect in Table \ref{tab1} the value $V_0$ of the PPSO (see \eqref{value1}), the value $V_0^E$ of the contract without SO (see \eqref{VE}) and the value $V^{\text{opt}}$ of the SO.
As in \cite{GROSEN}, we consider the following three scenarios depending on the level of participation in the returns generated by the reference portfolio: \textit{low} ($\delta=0.1$ and $\beta=3.4$), \textit{medium} ($\delta=0.25$ and $\beta=2.7$), \textit{high} ($\delta=0.6$ and $\beta=2$). The lower the value of $\delta$, the less the policyholder participates in the reserve via \eqref{scheme}. Moreover, the higher the value of the target buffer ratio $\beta$, the smaller is the surplus that the policyholder receives.
The value of $V_0^E$ is evaluated by using the same binomial-tree method described above, without the complication of the optimisation which is required at each node in the PPSO. As expected, $V_0$ is always greater than $V_0^E$. Their difference gives the option value $V^{\text{opt}}$.
\begin{table}[ht!]
	\centering
	\begin{tabular}{ |p{2.5cm} p{2cm}|p{2cm}|p{2cm}|p{2cm}|  }
		\hline
		Spread & Scenario &$V_0$&$V_0^E$&$V^{\text{opt}}$\\
		\hline
		$r-r^G=0.5\%$   & low    &100.7&   99.44 & 1.26\\
		& medium    &104.16&   103.43 & 0.73\\
		& high   &160.93&   160.41 & 0.52
		\vspace{0.5cm}\\
		$r-r^G=0.8\%$   & low    &100.27&   94.92 & 5.35\\
		& medium    &102.17&   99.29 & 2.88\\
		& high    &158.14&   156.47 & 1.67
		\vspace{0.5cm}\\
		$r-r^G=1.5\%$   & low    &100.14&   88.98 & 11.16\\
		& medium    &100.64&   93.93 & 6.71\\
		& high    &154.81&   151.38 & 3.43 \\
		\hline
	\end{tabular}
	\vspace{0.5cm}
	\caption{The values $V_0$ of the PPSO, $V_0^E$ of the contract without surrender option and $V^{\text{opt}}$ of the surrender option.}
	\label{tab1}
\end{table}

In the {\em low} scenario the value of the European contract $V_0^E$ is below par for all values of the spread $r-r^G$, i.e. $V^E_0<R_0$. If the spread is relatively large, i.e. $0.8\%$ or $1.5\%$, the European contract trades below par also in the \textit{medium} scenario. When the minimum interest rate guaranteed $r^G$ is much smaller than the market risk-free rate and the level of participation in the returns is also relatively small, the policy is not financially appealing to an investor if compared, for example, to bond investments. However, since $R_0>V^E_0$ (see Table \ref{tab1}) an investor who purchases the European contract incurs an initial outlay which is smaller than the initial amount credited to the reserve. This makes the policy potentially appealing as a form of secure savings.

The value $V_0$ of the PPSO is always at or above par, i.e. $V_0\ge R_0$, due to the American-type option embedded in the contract (at par the SO is immediately exercised).
Contract values $V_0$ and $V_0^E$ increase moving from \textit{low} towards \textit{high} scenario whereas the value of $V^{\text{opt}}$ decreases. This shows that the incentive to exercise the SO is reduced by higher participation of the investor in the returns. On the contrary, as $r-r^G$ increases the contract values decrease, whereas $V^{\text{opt}}$ increases. This is in line with the intuition that the higher the spread, the less the contract is profitable for the policyholder, hence creating a big incentive to exercise the SO.


\section{Properties of the value function}\label{sec:value}

In this section we collect some facts about the underlying stochastic process $X$, defined in \eqref{X1}, which will then be used to infer regularity of the value function \eqref{valuev}.

\subsection{Path properties of the underlying process}
First we observe that since the drift function $\pi(\,\cdot\,)$ is Lipschitz continuous and the diffusion coefficient is constant, there exists a modification $\widetilde X$ of $X$ such that the stochastic flow $(t,x)\mapsto \widetilde X^x_t(\omega)$ is continuous for a.e.~$\omega\in\Omega$ (see, e.g., \cite[Chapter V.7]{Protter}). As usual, throughout the paper we work with the continuous modification which we still denote by $X$ for simplicity.

\begin{lemma}\label{lemma-X}
For any $\P$-a.s.~finite stopping time $\tau\ge 0$ it holds
\begin{align}
\label{eq:lipX}&\big| X^{x}_{\tau} - X^{y}_{\tau} \big|\le \big| x-y \big| e^{\delta \tau}, \quad\text{$\P$-a.s.~for $x,y \in \R_+$},\\
\label{eq:bdX}& X^{y}_{\tau} - X^{x}_{\tau}\ge (y-x)(2-e^{\delta \tau}), \quad\text{$\P$-a.s.~for $y\ge x\ge 0$}.
\end{align}
\end{lemma}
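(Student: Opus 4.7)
The plan is to exploit the fact that $X^x$ and $X^y$ are driven by the \emph{same} Brownian motion $W$, so that the stochastic integrals cancel in the difference $X^y-X^x$. This reduces both estimates to pathwise Gronwall-type inequalities.

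First I would observe that the drift $\pi$ defined in \eqref{pi} is non-increasing and $\delta$-Lipschitz on $\R$: indeed $x\mapsto[\delta(x-\beta)-r^G]^+$ is the positive part of an affine function with slope $\delta$, hence non-decreasing and $\delta$-Lipschitz, and $\pi$ is a constant minus this map. Subtracting the integral forms of \eqref{X1} started at $y$ and at $x$, I obtain the pathwise identity
\begin{equation*}
X^y_t(\omega)-X^x_t(\omega)=(y-x)+\int_0^t\bigl[\pi(X^y_s(\omega))-\pi(X^x_s(\omega))\bigr]\,\ud s,
\end{equation*}
valid for every $\omega\in\Omega$ (no Brownian term appears). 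Taking absolute values, using $|\pi(X^y_s)-\pi(X^x_s)|\le\delta|X^y_s-X^x_s|$, and applying Gronwall's inequality to $t\mapsto|X^y_t-X^x_t|$ gives \eqref{eq:lipX} for every deterministic $t\ge 0$; since the sample paths $s\mapsto X^y_s-X^x_s$ are continuous, evaluating at $t=\tau(\omega)$ on the $\P$-full set where $\tau$ is finite yields the bound at the stopping time.

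For \eqref{eq:bdX}, assuming $y\ge x$, I would substitute the bound \eqref{eq:lipX} (already proven) back into the same integral identity to control the drift term from below:
\begin{equation*}
X^y_t-X^x_t\ge (y-x)-\delta\int_0^t|X^y_s-X^x_s|\,\ud s\ge (y-x)-\delta(y-x)\int_0^t e^{\delta s}\,\ud s=(y-x)(2-e^{\delta t}),
\end{equation*}
and then evaluate at $t=\tau(\omega)$ pathwise, exactly as above.

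There is no real technical obstacle: the whole argument is deterministic once one notices that (i) the noise cancels in the difference thanks to the shared driving Brownian motion and constant diffusion coefficient, and (ii) the drift is globally Lipschitz with constant $\delta$. The non-differentiability of $\pi$ at the kink $x=x_G$ is irrelevant since only the Lipschitz property is used, and no integrability condition on $\tau$ is required because all the estimates are pointwise in $\omega$.
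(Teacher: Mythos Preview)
Your proof is correct and follows essentially the same approach as the paper: cancel the Brownian term in the difference, use the $\delta$-Lipschitz property of $\pi$ together with Gronwall's inequality for \eqref{eq:lipX}, and then substitute \eqref{eq:lipX} back into the integral identity to obtain \eqref{eq:bdX}. Your exposition is slightly more explicit about the pathwise nature of the argument and the evaluation at $\tau(\omega)$, but the underlying computation is identical.
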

\begin{proof}
From the integral form of \eqref{X1} (with $X^x_0=x$ and $X^y_0=y$), and noticing that $\pi(\cdot)$ is Lipschitz with constant $\delta>0$, it is immediate to see
\begin{align*}
\big| X^{x}_{\tau} - X^{y}_{\tau} \big|
& \le \big| x-y \big|+\int_{0}^{\tau}\big|\pi(X^{x}_{t})-\pi(X^{y}_{t}) \big| \ud t \le \big| x-y \big|+\delta \int_{0}^{\tau}\big|X^{x}_{t}-X^{y}_{t} \big| \ud t.
\end{align*}
Then, an application of Gronwall's inequality gives \eqref{eq:lipX}

The argument for \eqref{eq:bdX} is similar. Using that $y>x$ and $\pi(\cdot)$ Lipschitz we have
\begin{align*}
X^{y}_{\tau} - X^{x}_{\tau}& \ge y-x -\delta \int_{0}^{\tau}\big|X^{y}_{t}-X^{x}_{t} \big| \ud t  \notag\\
& \ge y-x -(y-x) \int_{0}^{\tau}\delta e^{\delta t}\ud t =(y-x)(2-e^{\delta \tau}),
\end{align*}
where the second inequality uses \eqref{eq:lipX}.
\end{proof}
The next estimate on the local time of the process $X$ is particularly useful to establish that the value function is Lipschitz in the time variable. In the rest of the paper we denote $L^z:=(L^z_t)_{t\in[0,T]}$ the local time of the process $X$ at a point $z\ge 0$, which is defined as (see, e.g., \cite[Eq.~(3.3.29), p.~68]{PS})
\begin{align}\label{def:lt}
L^z_t(X):=\lim_{\eps\downarrow 0}\frac{1}{2\eps}\int_0^t \mathds{1}_{\{|X_s-z|\le \eps\}}\ud \langle X\rangle_s,\quad\text{$\P$-a.s.}
\end{align}
Recall that $\E_x[\,\cdot\,]=\E[\,\cdot\,|X_0=x]$.
\begin{lemma}\label{lem:loctime}
Let $0< t_1\le t_2\le T$, fix $N> 0$ and recall $x_\alpha$ from \eqref{xalpha}. Then, there exists a positive constant $\kappa:=\kappa(t_1,N;x_\alpha)$ such that
\begin{align}\label{eq:L0}
\sup_{x\in[0,N]}\E_x\left[L^{x_\alpha}_{t_2}-L^{x_\alpha}_{t_1}\right]\le \kappa (t_2-t_1).
\end{align}
\end{lemma}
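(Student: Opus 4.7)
The plan is to reduce the estimate to a uniform bound on the transition density of $X$ evaluated at the fixed point $x_\alpha$. Starting from the definition \eqref{def:lt} and using the occupation-times formula for continuous semimartingales (see, e.g., Revuz-Yor Ch.~VI), together with Fubini, one obtains the representation
$$
\E_x\bigl[L^{x_\alpha}_{t_2}-L^{x_\alpha}_{t_1}\bigr]=\sigma^2\int_{t_1}^{t_2}p_s(x,x_\alpha)\,\ud s,
$$
where $p_s(x,\cdot)$ denotes the density of $X^x_s$ under $\P$. Existence and continuity of $p_s$ for $s>0$ follow from the Lipschitz continuity of the drift $\pi$ in \eqref{pi} and the non-degenerate constant diffusion coefficient in \eqref{X1}, e.g., via Malliavin calculus or standard parabolic PDE theory.

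It thus suffices to establish the uniform bound
$$
\sup_{x\in[0,N],\,s\in[t_1,T]} p_s(x,x_\alpha)\le \kappa'
$$
for some $\kappa'=\kappa'(t_1,N,x_\alpha)>0$. If $\pi$ were bounded, this would be immediate from the classical Aronson-type Gaussian upper estimate $p_s(x,y)\le C s^{-1/2}\exp(-(y-x)^2/(Cs))$, which would give $\kappa'$ of order $1/\sqrt{t_1}$. In our setting, however, $\pi$ is only bounded from above and has linear growth as $x\to+\infty$, so Aronson's bound does not apply directly. To circumvent this, I would localise: for $R$ large (say $R>N\vee x_\alpha+1$) introduce the truncated drift $\pi^R(x):=\pi(x\wedge R)$, which is bounded and Lipschitz, and the associated diffusion $X^R$, whose density $p^R_s$ satisfies Aronson's bound. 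Setting $\tau_R:=\inf\{t\ge 0:X_t\ge R\}$, the processes $X$ and $X^R$ coincide on $\{\tau_R>s\}$; combining this with the strong Markov property at $\tau_R$ gives a pointwise estimate of $p_s(x,x_\alpha)$ in terms of $p^R_s(x,x_\alpha)$ plus a correction controlled by $\P_x(\tau_R\le s)$ and by $\sup_{r\in[0,T]}p^R_r(R,x_\alpha)$. Since $\pi$ is bounded above and the diffusion coefficient is constant, standard moment estimates for $X$ yield $\P_x(\tau_R\le T)\to 0$ as $R\to\infty$ uniformly in $x\in[0,N]$; taking $R$ sufficiently large therefore produces the required uniform bound.

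Plugging the density estimate into the occupation-time identity yields
$$
\sup_{x\in[0,N]}\E_x\bigl[L^{x_\alpha}_{t_2}-L^{x_\alpha}_{t_1}\bigr]\le \sigma^2\kappa'(t_2-t_1),
$$
so that $\kappa:=\sigma^2\kappa'$ fulfils the statement of the lemma. The main technical obstacle is the unbounded-below, linear-growth nature of the drift $\pi$, which rules out a direct application of classical Aronson-type Gaussian density estimates; the truncation plus tail-probability argument outlined above is what overcomes it, and the dependence of $\kappa$ on $t_1$, $N$ and $x_\alpha$ is transparent from this scheme (in particular, $\kappa$ blows up like $1/\sqrt{t_1}$ as $t_1\downarrow 0$, consistent with the fact that the estimate genuinely requires $t_1>0$).
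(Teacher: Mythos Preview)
Your reduction to the occupation-time identity
\[
\E_x\bigl[L^{x_\alpha}_{t_2}-L^{x_\alpha}_{t_1}\bigr]=\sigma^2\int_{t_1}^{t_2}p_s(x,x_\alpha)\,\ud s
\]
and the subsequent uniform bound on the density is the right strategy, and it is essentially the route the paper takes as well. The paper, however, avoids Aronson estimates and the truncation machinery entirely: it invokes the classical fact that a one-dimensional diffusion with Lipschitz drift and constant diffusion coefficient possesses a transition density $p(s,x,y)$ with respect to its speed measure that is \emph{jointly continuous} on $(0,\infty)\times\R^2$ (citing \cite{RW} and \cite{IMcK}). Continuity on the compact $[t_1,T]\times[0,N]\times[x_\alpha-\eps_0,x_\alpha+\eps_0]$ then immediately yields the required supremum bound, with no need to worry about the unbounded drift. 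This is shorter and sidesteps the technicalities you flag.

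Your truncation argument is a legitimate alternative, but one step is not quite right as written. After $\tau_R$ the strong Markov property restarts the \emph{original} process $X$ from $R$, not the truncated process $X^R$; hence the correction term involves $p_{s-\tau_R}(R,x_\alpha)$ rather than $p^R_{s-\tau_R}(R,x_\alpha)$, and you are back to bounding the density of $X$ itself. This is not fatal---one can, for instance, use a last-exit decomposition from the level $R$ (after the last downcrossing of $R$ before time $s$ the process stays below $R$ and therefore coincides with $X^R$), or iterate the argument and sum a geometric series in $\P_x(\tau_R\le T)$---but it does require more than the single sentence you devote to it. The paper's continuity-on-compacts argument buys you exactly the avoidance of this bookkeeping.
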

\begin{proof}
Thanks to \eqref{def:lt} we can select a sequence $(\eps_n)_{n\ge 1}$ such that $\eps_n\downarrow 0$ as $n\to \infty$ and
\begin{align*}
L^{x_\alpha}_{t_2}-L^{x_\alpha}_{t_1}=\lim_{n\to\infty}\frac{1}{2\eps_n}\int_{t_1}^{t_2} \mathds{1}_{\{|X_s-x_\alpha|\le \eps_n\}}\ud \langle X\rangle_s,\qquad\P_x-a.s.
\end{align*}
Then, using Fatou's lemma we get
\begin{align}\label{eq:L1}
\E_x\left[L^{x_\alpha}_{t_2}-L^{x_\alpha}_{t_1}\right]\le\liminf_{n\to\infty}\frac{1}{2\eps_n}\int_{t_1}^{t_2} \P_x\left(|X_s-x_\alpha|\le \eps_n\right)\sigma^2\ud s.
\end{align}
It is well-known that $X$ admits a transition density with respect to its speed measure (see, e.g., \cite[Thm.~50.11]{RW} or \cite[Sec.~4.11]{IMcK}). That is
\[
\P_x\left(|X_s-x_\alpha|\le \eps_n\right)=\int^{x_\alpha+\eps_n}_{x_\alpha-\eps_n}p(s,x,y)\frac{2\ud y}{\sigma^2 S'(y)},
\]
where $S'$ is the derivative of the scale function and reads
\begin{align}\label{eq:scale}
S'(y)=\exp\left(-\frac{2}{\sigma^2}\int_0^y\pi(z)\ud z\right).
\end{align}
Moreover, the map $(s,x,y)\mapsto p(s,x,y)$ is continuous on $(0,\infty)\times \R^2$ and clearly $S'$ is continuous too. Hence, letting $\eps_n\le \eps_0$, for all $n\ge 1$ and some $\eps_0>0$, and setting
\[
\kappa(t_1,N;x_\alpha):=2\sup_{(s,x,y)}\frac{p(s,x,y)}{S'(y)},
\]
with the supremum taken over $(s,x,y)\in[t_1,T]\times[0,N]\times[x_\alpha-\eps_0,x_\alpha+\eps_0]$, it is immediate to obtain \eqref{eq:L0} from \eqref{eq:L1}.
\end{proof}
\begin{remark}
Notice that in the lemma above $t_1$ must be taken strictly positive as the constant $\kappa(t_1,N;x_\alpha)$ might (and will) explode as $t_1\to 0$.
\end{remark}

\subsection{Continuity and monotonicity of the value function}
Some parts of the analysis in our paper are more conveniently performed by considering a different formulation of problem \eqref{valuev}.
Recall the infinitesimal generator $\cL$ of $X$ (see \eqref{eq:L}) and define the function
\begin{equation}\label{H}
H(x):=\begin{cases}
e^{-x}\left(\tfrac{\sigma^2 }{2} -\pi(x)\right), & x\le x_\alpha \\
\left(1-\gamma\right)e^{-x}\left(\tfrac{\sigma^2 }{2} -\pi(x)\right), & x>x_\alpha,
\end{cases}
\end{equation}
with $x_\alpha$ from \eqref{xalpha}. For future reference it is worth noticing that, since $r^G<r$,
\begin{align}\label{eq:bH}
-(r-r^G)\le H(x)\le \delta \quad\text{for $x\in\R_+$}.
\end{align}
Clearly $H$ is discontinuous at $x_\alpha$ and it is easy to check that $H(x)=(\cL h)(x)$ for $x\neq x_\alpha$ (recall that $h$ depends on $\alpha$).
Since $x\mapsto h(x)$ (see \eqref{defh}) is a convex function and its first derivative has a single jump
\[
\partial_x h(x_\alpha+)-\partial_xh(x_\alpha-)=\gamma \alpha,
\]
we can apply It\^o-Tanaka's formula to $h(X_{\tau\wedge\tau^{\dagger}})$ in \eqref{valuev}, to obtain the following equivalent formulation of problem \eqref{valuev}
\begin{align}\label{u}
u(t,x):=&\, v(t,x)-h(x)\\
=&\sup_{0 \leq \tau \leq T-t}  \E_{x} \bigg[ \int_{0}^{\tau\wedge\tau^\dagger} H(X_s) \mathds{1}_{\{X_s \neq x_\alpha\}} \ud s+\frac{\gamma\alpha}{2}
L^{x_\alpha}_{\tau\wedge\tau^\dagger}\bigg],\notag
\end{align}
where $(L^{z}_t)_{t\ge 0}$ is the local time of $X$ at a point $z>0$ (see \eqref{def:lt}).
Notice that $u$ is non-negative since $v(t,x) \geq h(x)$, for all $(t,x)\in[0,T]\times\R_+$, by \eqref{valuev}. We will show in Proposition \ref{prop1} that the presence of local time $L^{x_\alpha}$ in \eqref{u} implies that it is never optimal to stop when the process $X$ is equal to $x_\alpha$.

\begin{proposition}\label{prop-Vfun}
The following properties hold for the value function of the optimal stopping problem \eqref{valuev}:
\begin{itemize}
\item[  i)] the map $t\mapsto v(t,x)$ is decreasing and $v(T,x)=h(x)$ for any fixed $x\ge 0$;
\item[ ii)] the map $x\mapsto v(t,x)$ is decreasing and $v(t,0)=h(0)=1$ for any fixed $t\in[0,T]$.
\end{itemize}
Moreover, for any $0\le x_1\le x_2<+\infty$ and any $t\in[0,T]$ it holds
\begin{align}
\label{eq:cx}&0\le v(t,x_1)-v(t,x_2)\le \kappa_0 (x_2-x_1),
\end{align}
with $\kappa_0:=e^{\delta T}$. The map $t\mapsto v(t,x)$ is continuous on $[0,T]$ for any $x\in\R_+$ and, finally, for any $0\le t_1\le t_2< T$ and any $x\in[0,N]$, with fixed $N>0$, there is a constant $\kappa_1=\kappa_1(t_2,N;x_\alpha)>0$ such that
\begin{align}
\label{eq:ct}&0\le v(t_1,x)-v(t_2,x)\le \kappa_1(t_2-t_1).
\end{align}
\end{proposition}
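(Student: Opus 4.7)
The plan is to handle the four properties in turn, using pathwise comparison of the SDE \eqref{X1} for the spatial monotonicity, a suboptimal stopping-time construction for both Lipschitz bounds, and the It\^o--Tanaka rewriting \eqref{u} to isolate the time dependence of $v$ in the $t$-Lipschitz estimate.

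Item (i) is immediate from \eqref{valuev}: if $t_1\le t_2$ then the admissible stopping times for $v(t_1,x)$ contain those for $v(t_2,x)$, and $v(T,x)=h(x)$ because only $\tau=0$ is admissible at $t=T$. For item (ii) I first establish the pathwise comparison $X^x_t\le X^y_t$ for $x\le y$ and $t\ge 0$, $\P$-a.s. This follows by noting that $Z_t:=X^y_t-X^x_t$ solves the random ODE $\ud Z_t=(\pi(X^y_t)-\pi(X^x_t))\ud t$ with $Z_0\ge 0$; Lipschitz continuity of $\pi$ combined with pathwise uniqueness for the SDE \eqref{X1} prevents $Z$ from crossing zero. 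Consequently $\tau^\dagger(x)\le\tau^\dagger(y)$, and a case split over $\tau<\tau^\dagger(x)$, $\tau^\dagger(x)\le\tau<\tau^\dagger(y)$, and $\tau\ge\tau^\dagger(y)$ yields
\[
h\bigl(X^x_{\tau\wedge\tau^\dagger(x)}\bigr)\ge h\bigl(X^y_{\tau\wedge\tau^\dagger(y)}\bigr)\quad\P\text{-a.s.,}
\]
since $h$ is decreasing (see \eqref{eq:ph}) with maximum $h(0)=1$. Taking expectation and the supremum over $\tau$ gives monotonicity in $x$; for $x=0$, $\tau^\dagger=0$ forces $v(t,0)=h(0)=1$.

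For the Lipschitz bound \eqref{eq:cx}, fix $x_1\le x_2$ and let $\tau^*\in[0,T-t]$ be $\varepsilon$-optimal for $v(t,x_1)$. The key construction is $\sigma:=\tau^*\wedge\tau^\dagger(x_1)$: since $\sigma\le\tau^\dagger(x_1)\le\tau^\dagger(x_2)$, one has $\sigma\wedge\tau^\dagger(x_2)=\sigma$, hence $v(t,x_2)\ge\E\bigl[h(X^{x_2}_\sigma)\bigr]$. Using the $1$-Lipschitz property of $h$ from \eqref{eq:ph} and the pathwise estimate \eqref{eq:lipX},
\[
v(t,x_1)-v(t,x_2)\le \E\bigl|X^{x_1}_\sigma-X^{x_2}_\sigma\bigr|+\varepsilon\le (x_2-x_1)e^{\delta T}+\varepsilon,
\]
and letting $\varepsilon\downarrow 0$ yields \eqref{eq:cx} with $\kappa_0=e^{\delta T}$.

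For the Lipschitz bound \eqref{eq:ct} in $t$ it is convenient to work with $u=v-h$ in the form \eqref{u}, since $h$ is $t$-independent. For $0\le t_1\le t_2<T$, pick $\tau_1^*\in[0,T-t_1]$ that is $\varepsilon$-optimal for $u(t_1,x)$ and set $\sigma:=\tau_1^*\wedge(T-t_2)\in[0,T-t_2]$; then
\[
u(t_1,x)-u(t_2,x)\le \E_x\!\int_{\sigma\wedge\tau^\dagger}^{\tau_1^*\wedge\tau^\dagger}\! H(X_s)\mathds{1}_{\{X_s\neq x_\alpha\}}\ud s+\tfrac{\gamma\alpha}{2}\E_x\bigl[L^{x_\alpha}_{\tau_1^*\wedge\tau^\dagger}-L^{x_\alpha}_{\sigma\wedge\tau^\dagger}\bigr]+\varepsilon.
\]
The first summand is bounded by $C_H(t_2-t_1)$ with $C_H:=\max(\delta,r-r^G)$, thanks to \eqref{eq:bH} and $(\tau_1^*-\sigma)^+\le t_2-t_1$. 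The second summand is nonzero only on $\{\tau_1^*>T-t_2\}\cap\{\tau^\dagger>T-t_2\}$, where $\sigma\wedge\tau^\dagger=T-t_2$ and $\tau_1^*\wedge\tau^\dagger\le T-t_1$; thus the local-time increment is dominated by $L^{x_\alpha}_{T-t_1}-L^{x_\alpha}_{T-t_2}$, whose expectation Lemma \ref{lem:loctime} controls by $\kappa(T-t_2,N;x_\alpha)(t_2-t_1)$ uniformly in $x\in[0,N]$. Collecting these bounds and sending $\varepsilon\downarrow 0$ produces \eqref{eq:ct} with $\kappa_1=C_H+\tfrac{\gamma\alpha}{2}\kappa(T-t_2,N;x_\alpha)$. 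Continuity at $t=T$ must be argued separately, because $\kappa_1$ may blow up as $t_2\uparrow T$: it follows from $0\le v(t,x)-h(x)=u(t,x)\le C_H(T-t)+\tfrac{\gamma\alpha}{2}\E_x[L^{x_\alpha}_{T-t}]\to 0$ as $t\uparrow T$, using continuity of $s\mapsto L^{x_\alpha}_s$ at $s=0$ and monotone convergence. The only delicate point in the whole proof is precisely this control of the local-time increment, which is the reason Lemma \ref{lem:loctime} has been prepared in advance.
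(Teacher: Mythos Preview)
Your proof is correct and follows essentially the same approach as the paper's: pathwise comparison for monotonicity in $x$, suboptimality of $\tau\wedge\tau^\dagger(x_1)$ for the spatial Lipschitz bound via \eqref{eq:lipX}, and the It\^o--Tanaka reformulation \eqref{u} combined with Lemma~\ref{lem:loctime} for the temporal Lipschitz bound. The only cosmetic differences are your use of $\varepsilon$-optimal stopping times in place of a direct supremum, your constant $C_H=\max(\delta,r-r^G)$ where the paper uses simply $\delta$ (since only the upper bound of $H$ is needed), and your more explicit separate treatment of continuity at $t=T$.
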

\begin{proof}
The monotonicity in point $i)$ follows from time-independence of $h$ and $\tau^\dagger$, whereas the value of $v$ at $T$ follows from \eqref{valuev}. As for $ii)$, $v(t,0)=h(0)$ since $\tau^\dagger(0)=0$ $\P$-a.s. To show monotonicity of $v$ in $x$, fix $x_1<x_2$ and note that by uniqueness of the solution to \eqref{X1} follows $X^{x_1}_{s\wedge\tau^\dagger(x_1)}\le X^{x_2}_{s\wedge\tau^\dagger(x_2)}$ $\P$-a.s.~for all $s\in [0,T]$. Since the inequality also holds if we replace $s$ by a stopping time and the gain function $h$ is decreasing, we obtain
\begin{align}\label{tiz13}
v(t,x_2)-v(t,x_1)\le\sup_{0\le \tau\le T-t}\E\Big[h(X^{x_2}_{\tau\wedge\tau^\dagger(x_2)})-h(X^{x_1}_{\tau\wedge\tau^\dagger(x_1)})\Big]\le 0.
\end{align}

Next we prove \eqref{eq:cx}. Fix $t\in[0,T]$, consider $0\le x_1<x_2$ and denote by $\tau^{\dagger}_{1} := \tau^{\dagger}(x_{1})$ and  $\tau^{\dagger}_{2} := \tau^{\dagger}(x_{2})$ the first hitting time at zero of $X^{x_1}$ and $X^{x_2}$, respectively. From pathwise uniqueness of the solution of \eqref{X1} we have $\tau^{\dagger}_{1}\leq\tau^{\dagger}_{2}$. Then $\tau\wedge\tau^\dagger_1\wedge\tau^\dagger_2=\tau\wedge\tau^\dagger_1$, $\P$-a.s.~for every admissible stopping time $\tau$. Recalling that $v(t,\,\cdot\,)$ is decreasing and that $h$ is strictly decreasing and 1-Lipschitz (see \eqref{eq:ph}) we have
\begin{align}\hspace{-8pt}
0\leq &\, v(t,x_{1})-v(t,x_{2})\le \sup_{0\leq\tau\leq T-t} \E\Big[h\big(X^{x_{1}}_{\tau\wedge\tau^{\dagger}_{1}}\big)-h\big(X^{x_{2}}_{\tau\wedge
\tau^{\dagger}_{1}}\big)\Big]\\
\le&\,\E\Big[\sup_{0\leq s\leq T-t}\Big|X^{x_{1}}_{s}-X^{x_{2}}_{s}\Big|\Big]\le e^{\delta T}(x_2-x_1),\nonumber
\end{align}
where the first inequality is obtained by taking $\tau\wedge\tau^\dagger_1$ in $v(t,x_2)$ and the last inequality follows by \eqref{eq:lipX}.

It remains to prove \eqref{eq:ct}. For that it is convenient to use \eqref{u} and notice that for $0\le t_1\le t_2< T$ and $x\in\R_+$ we have
\[
0\le v(t_1,x)-v(t_2,x)=u(t_1,x)-u(t_2,x),
\]
where the inequality is due to $i)$ above. For any stopping time $\tau\in[0,T-t_1]$ we have that $\tau\wedge(T-t_{2})$ is admissible for the problem with value $u(t_2,x)$. Then, by direct comparison (recall that $\tau^\dagger$ only depends on $x\in\R_+$) and with $x\in[0,N]$, we have
\begin{align}\label{eq:lt1}
0\le& u(t_1,x)-u(t_2,x)\notag\\
\le& \sup_{0\le \tau\le T-t_1}\E_x\!\left[\mathds{1}_{\{\tau\wedge\tau^\dagger>T-t_2\}}\!\!\left(\!\int^{\tau\wedge\tau^\dagger}_{T-t_2}\!\!\!\mathds{1}_{\{X_s\neq x_\alpha\}}H(X_s)\ud s+\frac{\alpha\gamma}{2}\!\left(L^{x_\alpha}_{\tau\wedge\tau^\dagger}-L^{x_\alpha}_{T-t_2}\right)\!\right)\right]\\
\le &\, \delta (t_2-t_1)+\frac{\alpha\gamma}{2}\E_x\left[L^{x_\alpha}_{T-t_1}-L^{x_\alpha}_{T-t_2}\right],\notag
\end{align}
where in the final inequality we used \eqref{eq:bH} and the fact that the local time $t\mapsto L^{x_\alpha}_t$ is non-decreasing. Continuity of $t\mapsto u(t,x)$ (hence of $t\mapsto v(t,x)$) is now clear by continuity of local time's sample paths.  Recalling Lemma \ref{lem:loctime} we also obtain \eqref{eq:ct} by setting $\kappa_1(t_2,N;x_\alpha):=\delta+\alpha\gamma/2\cdot \kappa(T-t_2,N;x_\alpha)$.
\end{proof}
An immediate consequence of the proposition above, and the fact that $h$ is bounded and non-negative, is given in the next corollary.
\begin{corollary}\label{cont-vfun}
The value function $v$ of the optimal stopping problem \eqref{valuev} is non-negative, continuous on $[0,T]\times\R_+$ and bounded by 1.
\end{corollary}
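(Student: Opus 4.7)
The statement is essentially a repackaging of Proposition \ref{prop-Vfun} together with elementary bounds on the gain function $h$, so the plan is a short three-step argument with no real obstacle.

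For non-negativity, both summands in the definition \eqref{defh} are non-negative on $\R_+$, so $h\ge 0$; since $\tau=0$ is admissible in \eqref{valuev}, this gives $v(t,x)\ge h(x)\ge 0$. For the upper bound $v\le 1$, I would first verify pointwise that $h(x)\le 1$ for all $x\ge 0$ and then push the bound through the supremum. If $e^{-x}\ge\alpha$ then $h(x)=e^{-x}\le 1$ since $x\ge 0$; if instead $e^{-x}<\alpha$ then, using $\alpha\in(0,1)$ and $\gamma\in(0,1)$,
\[
h(x)=(1-\gamma)e^{-x}+\gamma\alpha<(1-\gamma)\alpha+\gamma\alpha=\alpha<1.
\]
Consequently $\E[h(X_{\tau\wedge\tau^\dagger})]\le 1$ for every admissible $\tau$, hence $v(t,x)\le 1$.

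For joint continuity on $[0,T]\times\R_+$, fix $(t_0,x_0)\in[0,T]\times\R_+$ and use the triangle inequality
\[
|v(t,x)-v(t_0,x_0)|\le |v(t,x)-v(t,x_0)|+|v(t,x_0)-v(t_0,x_0)|.
\]
By \eqref{eq:cx} the first term is bounded by $\kappa_0|x-x_0|$ uniformly in $t\in[0,T]$, so it vanishes as $x\to x_0$. The second term vanishes as $t\to t_0$ by the continuity of $t\mapsto v(t,x_0)$ on the whole of $[0,T]$, which is already established in Proposition \ref{prop-Vfun}. Joint continuity at $(t_0,x_0)$ follows.

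The only delicate point worth flagging is that the $t$-Lipschitz estimate \eqref{eq:ct} degenerates as $t_2\uparrow T$ (the constant $\kappa_1=\kappa_1(t_2,N;x_\alpha)$ blows up through the local time bound in Lemma \ref{lem:loctime}); this is why, for joint continuity at points of the form $(T,x_0)$, the argument must rely on the separately-stated pointwise continuity of $t\mapsto v(t,x_0)$ on the closed interval $[0,T]$ rather than on \eqref{eq:ct} itself.
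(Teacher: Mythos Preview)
Your proposal is correct and follows exactly the route the paper intends: the corollary is stated as an immediate consequence of Proposition \ref{prop-Vfun} together with the elementary bounds $0\le h\le 1$, and you have simply spelled out the details (including the correct observation that joint continuity at $t=T$ must rely on the separately-stated pointwise $t$-continuity rather than on \eqref{eq:ct}). There is nothing to add.
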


Recalling the sets $\cC$ and $\cS$ defined in \eqref{setC0} and \eqref{setS0}, we can express them in terms of the function $u$ from \eqref{u} as
\begin{align*}
\mathcal{C}
= \big\{ (t,x)\in [0,T]\times \R_+ \ : \ u(t,x)>0 \big\}\notag
\end{align*}
and
\begin{align*}
\mathcal{S}
= \big\{ (t,x)\in [0,T]\times \R_+ \ : \ u(t,x)=0 \big\}.\notag
\end{align*}
Continuity of $v$ and $h$ imply that the sets $\cC$ and $\cS$ are open and closed, respectively. Moreover, \cite[Cor.\ 2.9, Sec.\ 2]{PS} guarantees that $\tau_*$ defined in \eqref{entry} is optimal for $v(t,x)$ for any $(t,x)\in[0,T]\times\R_+$. Finally, \cite[Thm.\ 2.4, Sec.\ 2]{PS} ensures that the process $V^{t,x}:=(V^{t,x}_s)_{s\in[0,T-t]}$ given by $V^{t,x}_s= v(t+s,X^x_s)$ is a supermartingale while $(V^{t,x}_{s\wedge \tau^*})_{s\in[0,T-t]}$ is a martingale, for any $(t,x)\in[0,T]\times\R_+$.
Using the martingale property and continuity of the value function we obtain the next well-known result (see, e.g.~\cite[Sec.~7.1, Chapter III]{PS}, for a proof).
\begin{proposition}\label{prop:PDE}
The value function $v$ lies in $C^{1,2}(\cC)$ and it solves the boundary value problem
\begin{align}\label{PDEu}
\partial_t v+\cL v = 0,\qquad \text{in $\cC$}
\end{align}
with $v=h$ on $\partial\cC$.
\end{proposition}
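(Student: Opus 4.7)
The plan is to establish local $C^{1,2}$ regularity of $v$ inside $\cC$ by comparing $v$ with the unique classical solution of a Dirichlet problem on a small rectangle, and then to read off the PDE. The argument is standard in optimal stopping but needs a small bit of care because of the absorption at $0$.

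Fix an arbitrary $(t_0,x_0)\in\cC$. Since $\cC$ is open and $x_0>0$ (otherwise $\tau^\dagger(x_0)=0$ and $v(t_0,0)=h(0)$, so $(t_0,0)\in\cS$), I can choose an open rectangle $R=(t_1,t_2)\times(x_1,x_2)$ with $(t_0,x_0)\in R$, $\overline R\subset \cC$ and $x_1>0$. On $\overline R$ the operator $\partial_t+\cL$ has bounded continuous coefficients and is uniformly parabolic (the diffusion coefficient is the constant $\sigma^2/2$), so classical parabolic theory (see, e.g., the references cited after Theorem \ref{thm:v}, in particular Friedman or Ladyzhenskaya-Solonnikov-Ural'tseva) yields a unique solution $w\in C^{1,2}(R)\cap C(\overline R)$ of
\begin{equation*}
\partial_t w+\cL w=0\ \text{in }R,\qquad w=v\ \text{on }\partial_P R,
\end{equation*}
where $\partial_P R$ denotes the parabolic boundary of $R$ (continuity of $v$ from Corollary \ref{cont-vfun} provides admissible boundary data).

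The key step is to identify $w\equiv v$ on $R$. Let $\sigma_R:=\inf\{s\ge 0:(t_0+s,X^{x_0}_s)\notin R\}$. Applying It\^o's formula to $w(t_0+s,X^{x_0}_s)$ on $[0,\sigma_R]$, using that $w$ solves the PDE and has bounded first derivatives on $\overline R$ (so the stochastic integral is a true martingale), I obtain
\begin{equation*}
w(t_0,x_0)=\E\bigl[w(t_0+\sigma_R,X^{x_0}_{\sigma_R})\bigr].
\end{equation*}
On the other hand, since $\overline R\subset\cC$, the optimal stopping time $\tau^*$ from \eqref{entry} satisfies $\tau^*\ge \sigma_R$ almost surely. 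The martingale property of $(v(t_0+s,X^{x_0}_s))_{s\le\tau^*}$ recalled just before Proposition \ref{prop:PDE}, combined with optional sampling at the bounded stopping time $\sigma_R$, gives
\begin{equation*}
v(t_0,x_0)=\E\bigl[v(t_0+\sigma_R,X^{x_0}_{\sigma_R})\bigr].
\end{equation*}
Because $(t_0+\sigma_R,X^{x_0}_{\sigma_R})\in\partial_P R$ where $w=v$, the two right-hand sides coincide. The same reasoning applies at every point of $R$, so $v\equiv w$ on $R$, and therefore $v\in C^{1,2}(R)$ with $\partial_t v+\cL v=0$ at $(t_0,x_0)$. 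Since $(t_0,x_0)$ was arbitrary in $\cC$, the PDE holds everywhere on $\cC$ and $v\in C^{1,2}(\cC)$. The boundary condition $v=h$ on $\partial\cC$ follows at once from $\partial\cC\subset\cS$ (since $\cC$ is open and $\cS$ closed, by continuity of $v-h$) together with the definition of $\cS$.

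The main obstacle is not really an obstacle but a regularity check: one needs the Dirichlet problem on $R$ to admit a classical solution with the merely continuous boundary data $v|_{\partial_P R}$. This is what forces the choice $x_1>0$, so that the PDE stays uniformly parabolic with Lipschitz drift $\pi$ (as in \eqref{pi}) and the rectangle has a regular parabolic boundary; the existence/uniqueness theory then applies with no additional work. Everything else is a routine application of It\^o's formula and the supermartingale/martingale characterisation of the value function.
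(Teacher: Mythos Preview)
Your argument is correct and is precisely the standard one the paper defers to by citing Peskir--Shiryaev: solve the Dirichlet problem on a rectangle compactly contained in $\cC$, represent its solution stochastically via It\^o's formula, and match it with the martingale representation of $v$ stopped at the exit time. The only cosmetic point is that $w\in C^{1,2}(R)\cap C(\overline R)$ does not automatically give bounded $\partial_x w$ on $\overline R$; the usual fix---either localise on a slightly smaller rectangle first, or note that the stopped process $w(t_0+\cdot,X^{x_0}_\cdot)$ is bounded so the local martingale part is a true martingale---is routine and does not affect the argument.
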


The next simple technical lemma is consequence of the maximum principle and it will be used later to prove continuity and strict monotonicity of the stopping boundary.
\begin{lemma}\label{lem:ut}
For all $(t,x)\in \cC$ it holds $\partial_t v(t,x)<0$.
\end{lemma}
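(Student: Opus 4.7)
The plan is to apply the strong maximum principle to $w := \partial_t v$ inside $\cC$. By Proposition~\ref{prop:PDE}, $v \in C^{1,2}(\cC)$ satisfies $\partial_t v + \cL v = 0$ there, and since the coefficients of $\cL$ are bounded with $\pi$ Lipschitz (indeed smooth away from the kink at $x_G$), interior parabolic regularity upgrades $v$ enough to differentiate the PDE once more in $t$, producing
\begin{equation*}
\partial_t w + \cL w = 0 \qquad \text{in } \cC
\end{equation*}
in the classical sense on any open subset of $\cC$ not meeting the line $\{x = x_G\}$. The monotonicity statement in Proposition~\ref{prop-Vfun} immediately gives $w \le 0$ throughout $\cC$.

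Suppose for contradiction there exists $(t_0, x_0) \in \cC$ with $w(t_0, x_0) = 0$. As $\cC$ is open, fix $\eps > 0$ so that $Q := (t_0 - \eps, t_0 + \eps) \times (x_0 - \eps, x_0 + \eps) \subset \cC$, choosing $\eps$ small enough that $(x_0 - \eps, x_0 + \eps)$ avoids $x_G$ (after a small horizontal perturbation of $x_0$ if necessary). Passing to reversed time via $\tilde w(s,x) := w(T - s, x)$ converts the equation into a standard uniformly parabolic equation $\partial_s \tilde w - \cL \tilde w = 0$ on the reversed cylinder, with $\tilde w \ge 0$ attaining its minimum value $0$ at the interior point $(T - t_0, x_0)$. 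The strong minimum principle then forces $\tilde w \equiv 0$ on $[T - t_0 - \eps, T - t_0] \times (x_0 - \eps, x_0 + \eps)$, equivalently $w \equiv 0$ on $[t_0, t_0 + \eps) \times (x_0 - \eps, x_0 + \eps)$.

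The next step iterates this local propagation in time along the vertical slice $\{x = x_0\}$. Define $A := \{t \in [t_0, c(x_0)) : w(t, x_0) = 0\}$. Then $t_0 \in A$; continuity of $w$ on $\cC$ makes $A$ relatively closed in $[t_0, c(x_0))$; and the local propagation applied at any $t_1 \in A$ gives a right-neighborhood of $t_1$ inside $A$. By a standard connectedness argument $A = [t_0, c(x_0))$, so $v(\cdot, x_0)$ is constant on $[t_0, c(x_0))$ with value $v(t_0, x_0) > h(x_0)$. Passing to the limit $t \uparrow c(x_0)$ and using continuity of $v$ (Corollary~\ref{cont-vfun}) together with $v(c(x_0), x_0) = h(x_0)$ (if $c(x_0) < T$) or $v(T, x_0) = h(x_0)$ (if $c(x_0) = T$), we obtain $v(t_0, x_0) = h(x_0)$, contradicting $(t_0, x_0) \in \cC$.

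The main obstacle is justifying that $w$ is a classical solution of the same parabolic equation, which requires enough interior regularity of $v$, and carefully handling the singular slice $\{x = x_G\}$ where $\pi$ is only Lipschitz. The first difficulty is overcome by Schauder-type interior estimates on subcylinders avoiding that slice; the second is dealt with either by the small horizontal perturbation used above, or more robustly by invoking a version of the strong maximum principle valid for uniformly parabolic operators with bounded measurable coefficients.
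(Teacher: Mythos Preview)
Your argument is essentially correct and reaches the conclusion, but there is a small sign slip: since $w=\partial_t v\le 0$, the reversed function $\tilde w(s,x)=w(T-s,x)$ satisfies $\tilde w\le 0$, so $0$ is its \emph{maximum} (not minimum) and you should invoke the strong \emph{maximum} principle. The propagation direction and the final contradiction are unaffected.

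The route, however, is genuinely different from the paper's. Both proofs start the same way---differentiating the PDE in $t$ (exploiting that the coefficients are time-independent) to see that $\hat v:=\partial_t v$ solves $(\partial_t+\cL)\hat v=0$ in $\cC$---but from there the paper argues probabilistically rather than via the strong maximum principle. The paper first uses $v(t_0,x_0)>h(x_0)=v(T,x_0)$ to locate a later time $t_1\in(t_0,T)$ and a spatial interval on which $\hat v(t_1,\cdot)<-\eps/2$, then applies Dynkin's formula to $\hat v$ on the rectangle $(t_0,t_1)\times(x_0-\delta,x_0+\delta)$ starting from $(t_0,x_0)$: since $\hat v\le 0$ on the lateral sides and $\hat v<-\eps/2$ on the terminal segment (which the process hits with positive probability), one gets $0=\hat v(t_0,x_0)=\E[\hat v(\text{exit point})]<0$. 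Your SMP-plus-propagation argument is the standard PDE counterpart of this, and arguably cleaner from an analytic viewpoint; the paper's version has the advantage of being a single stochastic representation step, avoids citing the parabolic strong maximum principle, and stays within the probabilistic toolkit used throughout. Note also that your concern about the kink at $x_G$ is not really needed: since $\pi$ is Lipschitz and time-independent, the usual difference-quotient argument in $t$ already yields $\partial_t v\in C^{1,2}_{\mathrm{loc}}(\cC)$ across $x_G$, so neither proof requires avoiding that line.
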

\begin{proof}
By contradiction we assume there is $(t_0,x_0)\in\cC$ such that $\partial_t v(t_0,x_0)=0$. Since $v(t_0,x_0)>h(x_0)$ and $v(T,x_0)=h(x_0)$, by continuity there must exists $t_1\in(t_0,T)$ such that $(t_1,x_0)\in \cC$ and $\partial_t v(t_1,x_0)<-\eps$, for some $\eps>0$. By continuity of $\partial_t v$ inside $\cC$, and the fact that $\cC$ is open, there exists $\delta>0$ such that $\partial_t v(t_1,x)<-\eps/2$ for $x\in(x_0-\delta,x_0+\delta)$ and $\{t_1\}\times(x_0-\delta,x_0+\delta)\subset \cC$.

Now, letting $\cO:=(t_0,t_1)\times(x_0-\delta,x_0+\delta)$ we have that $\cO\subseteq\cC$ and $\partial_t v\in C^{1,2}(\cO)$, thanks to internal regularity results for solutions of partial differential equations applied to \eqref{PDEu} (see, e.g., \cite[Thm.~10, Ch.~3, Sec.~5]{friedman}). Moreover, differentiating \eqref{PDEu} with respect to time and using Proposition \ref{prop-Vfun}-(i) with the observations above, we obtain that $\hat v:= \partial_t v$ solves
\begin{align}
&(\partial_t\hat v+\cL \hat v)(t,x) = 0,\qquad\text{for $(t,x)\in\cO$}\\
&\hat v(t,x_0\pm\delta)\le 0,\qquad\qquad\text{for $t\in[t_0,t_1)$}\\
&\hat v(t_1,x)<-\eps/2,\qquad\qquad\text{for $x\in(x_0-\delta,x_0+\delta)$}.
\end{align}
Setting $\tau_\cO:=\inf\{s\ge 0\,:\,(t_0+s,X^{x_0}_s)\notin\cO\}$, an application of Dynkin's formula gives
\begin{align}
0=\hat v(t_0,x_0)=\E\left[\hat v (t_0+\tau_{\cO},X^{x_0}_{\tau_\cO})\right]\le -\frac{\eps}{2}\P\big(\tau_\cO=t_1-t_0\big),
\end{align}
which leads to a contradiction as the process $(t_0+s,X^{x_0}_s)$ exits $\cO$ by crossing the segment $\{t_1\}\times(x_0-\delta,x_0+\delta)$ with positive probability.
\end{proof}

It is clear by \eqref{u} that $u$ inherits the same continuity and boundedness properties of $v$ (see \eqref{eq:ct}, \eqref{eq:cx} and Corollary \ref{cont-vfun}). Moreover, $\partial_t u<0$ in $\cC$ with $u\in C^{1,2}$ in $\cC\setminus\left([0,T]\times\{x_\alpha\}\right)$ due to \eqref{defh} and \eqref{H}. Finally, in $\cC\setminus\left([0,T]\times\{x_\alpha\}\right)$ the function $u$ solves
\begin{align}\label{PDEu2}
\partial_t u+\cL u = -H,
\end{align}
with $u=0$ on $\partial\cC$.


\section{The free boundary problem}\label{sec:freeb}

In this section we study the free boundary problem associated with the stopping problem \eqref{u}. We derive geometric properties of the continuation region $\cC$ and regularity of its boundary $\partial\cC$. These have a close interplay with the smoothness of the value function $v$ in the whole space.

\subsection{Analysis of the stopping region} We start the study of the stopping region by noting that for any $(t,x)\in[0,T]\times\R_+$ it holds
\begin{equation}
\label{monot}
(t,x)\in \cS \Rightarrow [t,T]\times\{x\}\in \cS,
\end{equation}
since $t\mapsto u(t,x)$ is non-increasing (see (i) in Proposition \ref{prop-Vfun}). 

Some of the arguments that we need in order to characterise the stopping region require the next lemma. Its proof is somewhat standard but we provide it in the Appendix for completeness.
\begin{lemma}\label{lem:tech}
For $\eps>0$ define
\begin{align*}
\rho_{\varepsilon} := \inf\{s\ge 0\,:\, X^{x_\alpha}_{s} \notin (x_\alpha-\varepsilon ,x_\alpha+ \varepsilon)\}.
\end{align*}
Then, for any $\ell>0$ there exists $t_{\eps,\ell}>0$ such that
\[
\E_{x_\alpha} \left[L^{x_\alpha}_{s\wedge \rho_\varepsilon}\right]>\ell\, \E_{x_\alpha} \left[s\wedge \rho_\varepsilon \right]\quad \text{for all $s\in(0,t_{\eps,\ell})$.}
\]
\end{lemma}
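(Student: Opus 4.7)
The plan is to apply Tanaka's formula to $|X-x_\alpha|$ and then exploit the classical scaling mismatch between expected local time (which grows like $\sqrt{s}$ near $s=0$) and elapsed time (which grows linearly). Since $\sqrt{s}/s\to\infty$, the ratio $\E_{x_\alpha}[L^{x_\alpha}_{s\wedge\rho_\eps}]/\E_{x_\alpha}[s\wedge\rho_\eps]$ will diverge as $s\downarrow 0$, and the claim follows at once.

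First, since $X$ is a continuous semimartingale with $\mathrm{d}X_t=\pi(X_t)\,\mathrm{d}t+\sigma\,\mathrm{d}W_t$, Tanaka's formula (with the normalization of local time fixed in \eqref{def:lt}) gives, starting from $X_0=x_\alpha$,
\[
\big|X_{s\wedge\rho_\eps}-x_\alpha\big|=\int_0^{s\wedge\rho_\eps}\!\mathrm{sign}(X_u-x_\alpha)\,\pi(X_u)\,\mathrm{d}u+\sigma\!\int_0^{s\wedge\rho_\eps}\!\mathrm{sign}(X_u-x_\alpha)\,\mathrm{d}W_u+L^{x_\alpha}_{s\wedge\rho_\eps}.
\]
The stochastic integral is a true martingale on $[0,s\wedge\rho_\eps]$ because the integrand is bounded by $1$ and the stopping time is bounded. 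Taking $\E_{x_\alpha}$ I would obtain
\[
\E_{x_\alpha}\!\left[L^{x_\alpha}_{s\wedge\rho_\eps}\right]=\E_{x_\alpha}\!\left[|X_{s\wedge\rho_\eps}-x_\alpha|\right]-\E_{x_\alpha}\!\left[\int_0^{s\wedge\rho_\eps}\!\mathrm{sign}(X_u-x_\alpha)\,\pi(X_u)\,\mathrm{d}u\right].
\]
The drift integral is bounded in absolute value by $M_\eps s$, where $M_\eps:=\sup_{|y-x_\alpha|\le\eps}|\pi(y)|<\infty$ by continuity of $\pi$.

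Second, I would lower-bound $\E_{x_\alpha}|X_{s\wedge\rho_\eps}-x_\alpha|$ by a constant multiple of $\sqrt{s}$. Writing $X_u-x_\alpha=\int_0^u\pi(X_r)\,\mathrm{d}r+\sigma W_u$ for $u\le\rho_\eps$ yields $|X_{s\wedge\rho_\eps}-x_\alpha|\ge \sigma|W_{s\wedge\rho_\eps}|-M_\eps s$. Using $\E|W_s|=\sqrt{2s/\pi}$ together with the Cauchy--Schwarz bound $\E[|W_s|\mathds{1}_{\{\rho_\eps<s\}}]\le\sqrt{s\,\P(\rho_\eps<s)}$, and the elementary fact that $\P(\rho_\eps<s)\to 0$ as $s\downarrow 0$ (because $X$ starts in the interior of $(x_\alpha-\eps,x_\alpha+\eps)$ and is continuous), one gets $\E|W_{s\wedge\rho_\eps}|=\sqrt{2s/\pi}\,(1+o(1))$ as $s\downarrow 0$.

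Combining the two steps, and using $\E_{x_\alpha}[s\wedge\rho_\eps]\le s$, I would conclude
\[
\frac{\E_{x_\alpha}\!\left[L^{x_\alpha}_{s\wedge\rho_\eps}\right]}{\E_{x_\alpha}[s\wedge\rho_\eps]}\ge\frac{\sigma\sqrt{2s/\pi}\,(1+o(1))-2M_\eps s}{s}=\frac{\sigma\sqrt{2/\pi}}{\sqrt{s}}\,(1+o(1))\;\longrightarrow\;\infty
\]
as $s\downarrow 0$. For any $\ell>0$ this gives a $t_{\eps,\ell}>0$ such that the ratio exceeds $\ell$ on $(0,t_{\eps,\ell})$, which is exactly the claim. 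The only delicate point is controlling the event $\{\rho_\eps<s\}$ quantitatively enough to keep the $\sqrt{s}$ asymptotics of $\E|W_{s\wedge\rho_\eps}|$; this is routine via dominated convergence (or via a standard exit-time estimate combined with Girsanov to remove the drift), so no genuine obstacle arises.
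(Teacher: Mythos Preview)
Your proof is correct and follows the same overall strategy as the paper: apply Tanaka's formula to $|X-x_\alpha|$, take expectations, bound the drift contribution by a constant times $s$, and then exploit that the Brownian part produces a lower bound of order $\sqrt{s}$, so that the ratio $\E_{x_\alpha}[L^{x_\alpha}_{s\wedge\rho_\eps}]/\E_{x_\alpha}[s\wedge\rho_\eps]$ blows up as $s\downarrow 0$.

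The difference is purely in how the $\sqrt{s}$ lower bound is extracted. You work with first moments directly: from $|X_{s\wedge\rho_\eps}-x_\alpha|\ge\sigma|W_{s\wedge\rho_\eps}|-M_\eps s$ you use the explicit value $\E|W_s|=\sqrt{2s/\pi}$ and control the correction on $\{\rho_\eps<s\}$ via Cauchy--Schwarz. The paper instead passes to $(1+p)$-th moments through the trick $|X_{s\wedge\rho_\eps}-x_\alpha|\ge\eps^{-p}|X_{s\wedge\rho_\eps}-x_\alpha|^{1+p}$, then applies the inequality $|a+b|^{1+p}\ge 2^{-(1+p)}|a|^{1+p}-|b|^{1+p}$ together with the Burkholder--Davis--Gundy inequality to get a bound of order $s^{(1+p)/2}$. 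Your route is the more elementary of the two and avoids BDG entirely; the paper's argument is slightly more robust in that it does not rely on the exact Gaussian first-moment formula, but both lead to the same conclusion with no essential difficulty.
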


Now we can use the lemma to show that it is never optimal to stop at $x_\alpha$.
\begin{proposition}\label{prop1}
It holds $[0,T)\times \{x_\alpha \} \subset \cC$.
\end{proposition}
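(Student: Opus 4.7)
The strategy is to exhibit, at any point $(t,x_\alpha)$ with $t<T$, an admissible stopping rule whose expected payoff in the representation \eqref{u} is strictly positive. Since the integral term in \eqref{u} is bounded, while the local time term $\tfrac{\gamma\alpha}{2}L^{x_\alpha}$ is non-negative, the point is to pick a small stopping time that causes $L^{x_\alpha}$ to accumulate enough to dominate the (possibly negative) running cost $\int_0^\cdot H(X_s)\mathds{1}_{\{X_s\neq x_\alpha\}}\ud s$. Lemma \ref{lem:tech} is tailor-made for exactly this comparison.

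\textbf{Step 1: choice of stopping time.} Fix $t\in[0,T)$ and pick $\varepsilon\in(0,x_\alpha)$. With the notation of Lemma \ref{lem:tech}, the process $X^{x_\alpha}$ cannot reach $0$ before exiting $(x_\alpha-\varepsilon,x_\alpha+\varepsilon)$, so $\rho_\varepsilon\le \tau^\dagger(x_\alpha)$ $\P_{x_\alpha}$-a.s. For $s\in(0,T-t)$ the stopping time $\tau:=s\wedge \rho_\varepsilon$ is admissible in \eqref{u} and moreover $\tau\wedge\tau^\dagger=\tau$.

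\textbf{Step 2: lower bound via \eqref{u}.} Using the estimate \eqref{eq:bH} we have $|H(x)|\le C:=\max(\delta,r-r^G)$ on $\R_+$, hence
\[
u(t,x_\alpha)\;\ge\;\E_{x_\alpha}\!\Big[\int_0^{s\wedge\rho_\varepsilon}\!H(X_u)\mathds{1}_{\{X_u\neq x_\alpha\}}\ud u+\tfrac{\gamma\alpha}{2}L^{x_\alpha}_{s\wedge\rho_\varepsilon}\Big]\;\ge\;\tfrac{\gamma\alpha}{2}\E_{x_\alpha}\!\big[L^{x_\alpha}_{s\wedge\rho_\varepsilon}\big]-C\,\E_{x_\alpha}\!\big[s\wedge\rho_\varepsilon\big].
\]

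\textbf{Step 3: dominate the cost with the local time.} Choose $\ell>2C/(\gamma\alpha)$. By Lemma \ref{lem:tech} there exists $t_{\varepsilon,\ell}>0$ such that $\E_{x_\alpha}[L^{x_\alpha}_{s\wedge\rho_\varepsilon}]>\ell\,\E_{x_\alpha}[s\wedge\rho_\varepsilon]$ for all $s\in(0,t_{\varepsilon,\ell})$. Picking any $s\in\bigl(0,(T-t)\wedge t_{\varepsilon,\ell}\bigr)$, the right-hand side of the previous display becomes
\[
\Big(\tfrac{\gamma\alpha}{2}\ell-C\Big)\E_{x_\alpha}\!\big[s\wedge\rho_\varepsilon\big]\;>\;0,
\]
since $\E_{x_\alpha}[s\wedge\rho_\varepsilon]>0$ (because $\rho_\varepsilon>0$ $\P_{x_\alpha}$-a.s.). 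Hence $u(t,x_\alpha)>0$, i.e.\ $(t,x_\alpha)\in\cC$, and since $t\in[0,T)$ was arbitrary we conclude $[0,T)\times\{x_\alpha\}\subset\cC$.

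\textbf{Main obstacle.} The only delicate point is the quantitative growth of the local time near $x_\alpha$ for small times, but this is precisely what Lemma \ref{lem:tech} delivers: intuitively, because of the diffusive behaviour of $X$ around the interior point $x_\alpha$, $L^{x_\alpha}_s$ grows like $\sqrt{s}$ while $s\wedge\rho_\varepsilon$ grows linearly, so for sufficiently small $s$ the local time term is the dominant contribution and outweighs any bounded running cost.
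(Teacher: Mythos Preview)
Your proof is correct and follows essentially the same approach as the paper: choose the stopping time $s\wedge\rho_\varepsilon$, bound the running cost by a constant times $\E_{x_\alpha}[s\wedge\rho_\varepsilon]$, and invoke Lemma~\ref{lem:tech} with a suitable $\ell$ so that the local-time term dominates. The only cosmetic differences are that the paper uses a local bound $c_\varepsilon$ for $-H$ on $(x_\alpha-\varepsilon,x_\alpha+\varepsilon)$ whereas you use the global bound $C$ from \eqref{eq:bH}, and you make explicit that $\rho_\varepsilon\le\tau^\dagger(x_\alpha)$; neither affects the argument.
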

\begin{proof}
Fix $\eps>0$ and let $\rho_\varepsilon$ be as in Lemma \ref{lem:tech}. Take $t\in[0,T)$ and $s\in [0,T-t)$. Since stopping at $s\wedge\rho_\eps$ is admissible for the problem with value function $u(t,x_\alpha)$, and
\[
\inf_{|\zeta|\le\eps} H(x_\alpha+\zeta)\ge -c_\varepsilon,
\]
for some $c_\eps>0$ only depending on $\eps$, one obtains
\begin{align*}
u(t,x_\alpha)& \ge \E_{x_\alpha} \bigg[ \int_{0}^{s\wedge \rho_\varepsilon} H(X_u) \mathds{1}_{\{X_u \neq x_\alpha\}} du+\frac{\gamma\alpha}{2}
L^{x_\alpha}_{s\wedge \rho_\varepsilon}\bigg]\\
& \ge  \tfrac{\gamma\alpha}{2} \E_{x_\alpha} \big[ L^{x_\alpha}_{s\wedge \rho_\varepsilon}\big]-c_{\varepsilon}\E_{x_\alpha} \big[s\wedge \rho_\varepsilon\big].
\end{align*}
Now, applying Lemma \ref{lem:tech} with $\ell=2c_\eps/(\gamma\alpha)$ and picking $s>0$ sufficiently small gives $u(t,x_\alpha)>0$. Hence $(t,x_\alpha)\in \cC$. Since $t\in[0,T)$ was arbitrary, the claim follows.
\end{proof}

For any initial point $(t,x)$ with $t\in[0,T)$ and $x\in\R_+\setminus\{x_\alpha\}$ such that $H(x)>0$, we can choose to stop at the first exit time from a small interval centered at $x$. Since $H>0$ in such interval, and the stopping time is strictly positive $\P$-a.s., this well-known argument gives $u(t,x)>0$. Then, it follows that $\mathcal{R}\subseteq\cC$, where
\begin{align}\label{R}
\mathcal{R}:=\{(t,x)\in[0,T)\times(\R_+ \backslash \{x_\alpha\}): H(x)>0\}.
\end{align}
Combining this observation with Proposition \eqref{prop1} we get
\begin{align}\label{eq:R}
\mathcal{R} \cup ([0,T)\times\{ x_\alpha\})\subseteq \cC.
\end{align}

It is clear that the shape of the set $\mathcal{R}$ varies depending on the parameters of the problem. Interestingly, this gives rise to two possible shapes of the stopping region, as we will see in the rest of the section. Let us start by noticing that
\begin{align}\label{x0bar}
H(x)>0 \ \Longleftrightarrow \ \frac{\sigma^2}{2}-\pi(x)>0 \ \Longleftrightarrow \ x>\bar x_0=\beta+\frac{r}{\delta},
\end{align}
where we used \eqref{pi} and $r>r_G$. Then, based on the fact that
\[
\cS\subseteq \big(\mathcal{R}^c\cap\{x\neq x_\alpha\}\big)\cup\big(\{T\}\times\R_+\big),
\]
where $\mathcal{R}^c$ is the complement of $\mathcal{R}$, we distinguish two cases:
\vspace{+5pt}

\noindent\textbf{Case 1}: $x_\alpha<\bar x_0$, then we have
\begin{align}\label{eq:case1}
\cS \subseteq \left\{[0,T)\times\Big([0,x_\alpha)\cup (x_\alpha,\bar x_0]\Big)\right\}\cup\Big(\{T\}\times\R_+\Big).
\end{align}
\vspace{+5pt}

\noindent\textbf{Case 2}: $x_\alpha\ge \bar x_0$, then we have
\begin{align}\label{eq:case2}
\cS \subseteq \Big\{[0,T)\times[0,\bar x_0)\Big\}\cup\Big(\{T\}\times\R_+\Big).
\end{align}
\vspace{+5pt}

We now focus on the study of the optimal stopping region in Case 1. Case 2 is easier and can be handled with simpler methods.
Thanks to \eqref{monot} we may write (recall \eqref{eq:c})
\begin{align}\label{def:c}
c(x)=\inf\{t\in[0,T]\,:\, u(t,x)=0\},\quad\text{for $x\in\R_+$.}
\end{align}
Notice that
\begin{align}\label{c-alpha}
c(x_\alpha)=T\qquad\text{and}\qquad c(x)=T,\quad\text{for $x>\bar x_0$},
\end{align}
due to \eqref{eq:R}. The next proposition is the main result in this subsection. It provides piecewise monotonicity and right/left-continuity of $c(\,\cdot\,)$.
\begin{proposition}\label{prop:monot-c2}
Assume $x_\alpha<\bar x_0$. The map $x\mapsto c(x)$ attains a global minimum $0\le \hat c\le T$ on $[x_\alpha,\bar x_0]$. Moreover, there exist $x_1\in[0,x_\alpha)$, $x_2\in(x_\alpha,\bar x_0)$ and $x_3\in[x_2,\bar x_0)$ such that $c(\cdot)$ is
\begin{itemize}
\item[(i)] equal to zero on $[0,x_1]$, strictly increasing on $(x_1,x_\alpha]$ and left-continuous on $[0,x_\alpha)$;
\item[(ii)] strictly decreasing on $[x_\alpha,x_2)$ and right-continuous on $[x_\alpha,x_3)$ ;
\item[(iii)] strictly increasing on $[x_3,\bar x_0)$ and left-continuous on $[x_2,\bar x_0)$.
\end{itemize}
(Notice that in $(ii)$ and $(iii)$ it might be $x_2=x_3$.)

In all cases $\hat c= c(x_2)=c(x_3)$ and, if $x_2<x_3$, then $\hat c=0$. Finally,
\begin{align}\label{eq:lim2}
\lim_{x\to x_\alpha} c(x)=c(x_\alpha)=T\quad\text{and}\quad
T=\lim_{x\downarrow \bar x_0} c(x)\ge c(\bar x_0)=\lim_{x\uparrow \bar x_0}c(x).
\end{align}
\end{proposition}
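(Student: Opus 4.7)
I approach the proposition in three layers: boundary limits via continuity of $v$; a structural analysis of the slices $\cS_t$; and a PDE boundary argument to upgrade weak monotonicity to strict monotonicity and to pin down $\hat c$.

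For the limits, \eqref{c-alpha} already gives $c(x_\alpha)=T$ and $c(x)=T$ for $x>\bar x_0$. For $\lim_{x\to x_\alpha}c(x)=T$ I fix $t<T$: Proposition \ref{prop1} gives $v(t,x_\alpha)>h(x_\alpha)$, and continuity of $v-h$ (Proposition \ref{prop-Vfun}) extends this strict inequality to a neighborhood of $x_\alpha$, so $c(x)>t$ there; letting $t\uparrow T$ gives the claim. The right-limit at $\bar x_0$ is immediate from \eqref{eq:R} and \eqref{x0bar}, which force $[0,T)\times(\bar x_0,\infty)\subset\cC$.

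Next I define $x_1:=\sup\{x\in[0,x_\alpha):c(x)=0\}$ (with $\sup\varnothing=0$), $\hat c:=\min\{c(x):x\in[x_\alpha,\bar x_0]\}$ (the minimum is attained by closedness of $\cS$), together with $x_2:=\inf\{x\in(x_\alpha,\bar x_0):c(x)=\hat c\}$ and $x_3:=\sup\{x\in(x_\alpha,\bar x_0):c(x)=\hat c\}$. The structural claim I aim to establish is that for every $t\in[0,T)$ the slice $\cS_t\cap([0,\bar x_0]\setminus\{x_\alpha\})$ is a union of at most two closed intervals, one of the form $[0,b_1(t)]\subset[0,x_\alpha)$ and one of the form $[b_2(t),b_3(t)]\subset(x_\alpha,\bar x_0]$. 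The proof uses a pathwise coupling of $X^x$ and $X^y$ through the same Brownian motion, which yields $X^x\le X^y$ pathwise and $\tau^\dagger(x)\le\tau^\dagger(y)$; combining this with the representation \eqref{u}, the sign structure $H\le 0$ on $[0,\bar x_0)\setminus\{x_\alpha\}$ from \eqref{x0bar}--\eqref{H}, and the monotone dependence of the local-time accrual at $x_\alpha$ on the initial condition, one shows that $\cS_t\cap[0,x_\alpha)$ is downward-closed and that $\cS_t\cap(x_\alpha,\bar x_0]$ is a single closed interval. This gives the weak monotonicity of $c$ on each of $[0,x_\alpha]$, $[x_\alpha,x_2]$ and $[x_3,\bar x_0]$.

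To upgrade to strict monotonicity, suppose by contradiction that $c$ is constant equal to $t_0>0$ on a non-degenerate subinterval $[y_1,y_2]$ of one of $(x_1,x_\alpha)$, $(x_\alpha,x_2)$ or $(x_3,\bar x_0)$. By the structural claim, $\{t_0\}\times[y_1,y_2]\subset\partial\cC$. For $(t,y)\in\cC$ approaching the flat stretch with $y\in(y_1,y_2)$, the interior PDE $\partial_tv+\cL v=0$ combined with the regularity of Theorem \ref{thm:v} (namely $v\in C^1$ and $\partial_{xx}v$ continuous on $\overline{\cC}\cap([0,T)\times(0,\infty))$) lets me pass to the limit: using $v(t_0,y)=h(y)$ and the identity $\cL h(y)=H(y)$ for $y\neq x_\alpha$, I obtain $\partial_tv(t_0,y)=-H(y)$. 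Since $y\in[0,\bar x_0)\setminus\{x_\alpha\}$ forces $H(y)<0$, this yields $\partial_tv(t_0,y)>0$, contradicting Proposition \ref{prop-Vfun}(i). The same argument applied to $\{\hat c\}\times(x_2,x_3)$ rules out $\hat c>0$ whenever $x_2<x_3$. Finally, one-sided continuity of $c$ on each strictly monotone piece is standard: for $x_n\to x_0$ along the piece, closedness of $\cS$ places $(\lim c(x_n),x_0)$ in $\cS$, giving one inequality, while strict monotonicity gives the reverse.

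The main obstacle is the structural claim of the second paragraph. Neither convexity of $h$ nor the global $x$-monotonicity of $v$ alone forces connectedness of the slice stopping sets, so the proof must exploit the precise sign of $H$ inside each strip together with the role of $x_\alpha$ as the local-time source in \eqref{u}.
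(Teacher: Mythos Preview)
There are two genuine gaps in your proposal.

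\textbf{Circularity.} Your strict-monotonicity argument invokes the regularity of Theorem~\ref{thm:v} (namely $v\in C^1$ and $\partial_{xx}v$ continuous on $\overline{\cC}$) to pass to the limit in the interior PDE and conclude $\partial_t v(t_0,y)=-H(y)>0$. But in the paper's logical order Theorem~\ref{thm:v} is proved \emph{after} Proposition~\ref{prop:monot-c2}: the boundary regularity comes from Corollary~\ref{cor:C1reg}, which rests on Proposition~\ref{prop:gradient} and Lemma~\ref{lem:reg}, and the proof of Lemma~\ref{lem:reg} explicitly uses the piecewise-monotone geometry of $c(\cdot)$ established here. So you are assuming what you want to prove. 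The paper avoids this by a weak-form argument (Lemma~\ref{prop:strip}): on a hypothetical flat stretch $c\equiv\bar c>0$ on $(z_1,z_2)$, one multiplies $\partial_t u$ by a test function $\varphi\in C^\infty_c(z_1,z_2)$, integrates by parts to move $\cL$ onto $\varphi$ (which only needs $u\in C^{1,2}(\cC)$ from Proposition~\ref{prop:PDE}), and lets $s\uparrow\bar c$ using only continuity of $u$. This yields $0\ge -\int H\varphi>0$, the desired contradiction, without any boundary regularity.

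\textbf{The structural claim.} You correctly identify the slice-connectedness of $\cS_t$ as the crux, but the proposed mechanism --- pathwise coupling plus ``monotone dependence of the local-time accrual at $x_\alpha$ on the initial condition'' --- is not a workable route. Local time at $x_\alpha$ does not depend monotonically on $X_0$ in any usable sense, and comparing $u(t,x)$ with $u(t,y)$ through coupling does not control the interplay between the $H$-integral and the local-time term. The paper's argument (Lemma~\ref{prop:strip}(i)--(ii)) is both simpler and of a different nature: it is a \emph{barrier} argument, not a coupling argument. If $(t,z_1),(t,z_2)\in\cS$ with $x_\alpha<z_1<z_2$, then for $z_3\in(z_1,z_2)$ the optimal stopping time $\tau^*(t,z_3)$ is dominated by the exit time from $(z_1,z_2)$ (since $[t,T]\times\{z_i\}\subset\cS$ by \eqref{monot}); on that event the process never reaches $x_\alpha$, so $L^{x_\alpha}\equiv 0$, while $H<0$ throughout, forcing $u(t,z_3)<0$. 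The same idea, using $\{0\}$ as the left barrier, handles $[0,x_\alpha)$. You should also note that the paper proves separately that $c(x)<T$ on $(0,x_\alpha)\cup(x_\alpha,\bar x_0)$ (Lemma~\ref{lem:tech2}), which is needed to place $x_2,x_3$ strictly inside $(x_\alpha,\bar x_0)$; your outline does not address this.
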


The proof relies on two technical lemmas which we are going to present first.

\begin{lemma}\label{prop:strip}
Assume $x_\alpha<\bar x_0$. Then
\begin{itemize}
\item[(i)] for $z<x_\alpha$ and $t\in(0,T]$ it holds
\[
(t,z)\in\cS\implies [t,T]\times[0,z]\subseteq \cS;
\]
\item[(ii)] for $z_1,z_2\in(x_\alpha,\bar x_0)$, with $z_1<z_2$, and $t\in(0,T]$ it holds
\[
(t,z_1),(t,z_2)\in\cS\implies[t,T]\times[z_1,z_2]\subseteq\cS.
\]
\end{itemize}
The map $x\mapsto c(x)$ is never strictly positive and constant (simultaneously) on intervals $(z_1,z_2)$ contained in $[0,x_\alpha)\cup(x_\alpha,\bar x_0)$. Finally, for every interval $(z_1,z_2)$ contained in $[0,x_\alpha)\cup(x_\alpha,\bar x_0)$ it holds
\begin{align}\label{nonempty}
\cS\cap\big((0,T)\times(z_1,z_2)\big)\neq \varnothing.
\end{align}
\end{lemma}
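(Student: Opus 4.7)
The statement contains four claims. I would prove (i) and (ii) first, then the non-constancy of $c$, and finally \eqref{nonempty}.

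For (i) and (ii) the common engine is the strong Markov property combined with two features of \eqref{u}: the strict inequality $H(x)<0$ on $[0,\bar x_0)\setminus\{x_\alpha\}$ (cf.\ \eqref{x0bar}) and the fact that the local time $L^{x_\alpha}$ cannot accumulate while $X$ stays strictly on one side of $x_\alpha$. For (i), fix $x'\in[0,z]$, $s\in[t,T)$ and let $\tau_z:=\inf\{u\ge 0:X^{x'}_u=z\}$. For any admissible $\tau\in[0,T-s]$, I would split the expectation in \eqref{u} according to $\{\tau\le\tau_z\wedge\tau^\dagger\}$ or its complement. On the first event the integrand is nonpositive (since $X^{x'}_u\in[0,z)\subset[0,x_\alpha)$) and $L^{x_\alpha}\equiv 0$. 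On the complement, strong Markov at $\tau_z\wedge\tau^\dagger$ produces a continuation value of either $u(\cdot,0)=0$ (if $\tau^\dagger<\tau_z$) or $u(s+\tau_z,z)=0$ (if $\tau_z<\tau^\dagger$, using $(t,z)\in\cS$ and the monotonicity in $t$ from Proposition \ref{prop-Vfun}). Hence every admissible $\tau$ yields a nonpositive expectation, so $u(s,x')=0$. Part (ii) follows identically with $\tau_{z_1,z_2}:=\inf\{u:X^{x'}_u\notin(z_1,z_2)\}$ in place of $\tau_z$; here $z_1>x_\alpha>0$ ensures that $X^{x'}$ stays away from both $0$ and $x_\alpha$ before exit.

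For the non-constancy of $c$, I would argue by contradiction: suppose $c\equiv t_0>0$ on some $(z_1,z_2)\subset[0,x_\alpha)\cup(x_\alpha,\bar x_0)$, so that $\{t_0\}\times(z_1,z_2)$ is a flat portion of $\partial\cC$. Fix $x_0\in(z_1,z_2)$. Since $\partial_t v\equiv 0$ on the interior of $\cS$ and $v\in C^1([0,T)\times(0,\infty))$ (Theorem \ref{thm:v}), continuity forces $\partial_t v(t_0,x_0)=0$. Moreover, $v(t_0,\cdot)\equiv h$ on the open interval $(z_1,z_2)$, so differentiating twice in $x$ along the slice $t=t_0$ gives $\partial_{xx}v(t_0,x_0)=h''(x_0)$; by continuity of $\partial_{xx}v$ on $\overline\cC\cap([0,T)\times(0,\infty))$ (again Theorem \ref{thm:v}) this value coincides with the $\cC$-side limit. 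Passing to this limit in the PDE $\partial_t v+\cL v=0$ of Proposition \ref{prop:PDE} yields
\[
0=\partial_t v(t_0,x_0)+\cL v(t_0,x_0)=\cL h(x_0)=H(x_0),
\]
contradicting $H(x_0)<0$.

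For \eqref{nonempty}, suppose toward contradiction that $(0,T)\times(z_1,z_2)\subseteq\cC$. Pick $x_0\in(z_1,z_2)$ and $\delta>0$ such that $(x_0-\delta,x_0+\delta)\subset(z_1,z_2)$ and $H\le -c_0<0$ on this sub-interval; set $\rho:=\inf\{s:|X^{x_0}_s-x_0|\ge\delta\}$. Since $(0,T)\times(z_1,z_2)\subseteq\cC$, the optimal stopping time $\tau^*$ from $(t,x_0)$ satisfies $\tau^*\ge\tau_{z_1,z_2}\wedge(T-t)$; combining this with $u(T,\cdot)\equiv 0$, $u\le 1$, and the vanishing of $L^{x_\alpha}$ before $\tau_{z_1,z_2}$ gives
\[
u(t,x_0)\le -c_0(T-t)\,\P_{x_0}(\rho\ge T-t)+\P_{x_0}(\tau_{z_1,z_2}<T-t).
\]
Standard small-time Gaussian exit estimates provide $\P_{x_0}(\tau_{z_1,z_2}<T-t)=o(T-t)$ while $\P_{x_0}(\rho\ge T-t)\to 1$ as $t\uparrow T$, so the right-hand side is strictly negative for $t$ sufficiently close to $T$, contradicting $u\ge 0$. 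The main obstacle is the non-constancy step, where the identification $\partial_{xx}v(t_0,x_0)=h''(x_0)$ requires the combination of $v\equiv h$ on an \emph{open} $x$-neighborhood (so the slice second derivative is meaningful), the global $C^1$ regularity of $v$, and the continuity of $\partial_{xx}v$ up to $\overline\cC$; absent any of these, the PDE cannot be evaluated at the flat boundary segment to produce the contradiction $H=0$.
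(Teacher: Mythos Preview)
Your arguments for (i), (ii) and \eqref{nonempty} are correct. For (i)--(ii) you use the same two observations as the paper ($H<0$ on the relevant strip and $L^{x_\alpha}$ cannot grow there), packaged via the strong Markov property rather than the paper's short contradiction with the optimal $\tau^*$; the content is the same. For \eqref{nonempty} your direct estimate is sound and is in fact more self-contained than the paper's one-line reduction to the non-constancy claim.

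The non-constancy step, however, is circular in the paper's development. You invoke $v\in C^1([0,T)\times(0,\infty))$ and the continuity of $\partial_{xx}v$ up to $\overline\cC$, i.e.\ Theorem~\ref{thm:v}/Corollary~\ref{cor:C1reg}. In the paper these are obtained \emph{after} Lemma~\ref{prop:strip} and depend on it: Corollary~\ref{cor:C1reg} rests on Proposition~\ref{prop:gradient}, which rests on Lemma~\ref{lem:reg}, whose proof uses the piecewise-monotone geometry of $\partial\cC$ established in Proposition~\ref{prop:monot-c2}; and Proposition~\ref{prop:monot-c2} explicitly uses the non-constancy part of Lemma~\ref{prop:strip}. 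So appealing to $C^1$-regularity here begs the question. (Your identification of the slice second derivative with the $\cC$-side limit of $\partial_{xx}v$ can indeed be justified by a uniform-convergence-of-derivatives argument once Corollary~\ref{cor:C1reg} is available, so the logic is valid \emph{modulo} that corollary; the problem is the dependency chain, not the local computation.)

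The paper avoids this circularity by a weak (test-function) argument that needs only interior regularity from Proposition~\ref{prop:PDE} and $\partial_t u\le 0$: integrate \eqref{PDEu2} against $\varphi\in C^\infty_c(z_1,z_2)$, transfer the spatial derivatives onto $\varphi$ via $\cL^*$, and let $s\uparrow \bar c$ by dominated convergence; the limit forces $\int H\varphi\ge 0$, contradicting $H<0$. Alternatively, note that your own \eqref{nonempty} argument adapts verbatim to prove non-constancy: if $c\equiv \bar c>0$ on $(z_1,z_2)$ then for $t<\bar c$ the optimal $\tau^*$ from $(t,x_0)$ dominates $(\bar c-t)\wedge\tau_{z_1,z_2}$, and your small-time exit estimates give $u(t,x_0)\le -c_0(\bar c-t)\,\P_{x_0}(\rho\ge \bar c-t)+\P_{x_0}(\tau_{z_1,z_2}<\bar c-t)<0$ for $t$ close to $\bar c$, with no regularity of $v$ across $\partial\cC$ required.
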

\begin{proof}
First we prove (i) and (ii). The two claims are similar since $(t,0)\in\cS$ for all $t\in[0,T]$ (see Proposition \ref{prop-Vfun}-(ii)).  Then it is enough to show (ii) as the proof of point (i) is analogous up to obvious changes.

Let $(t,z_1)$ and $(t,z_2)$ belong to $\cS$ and $x_\alpha<z_1<z_2\le \bar x_0$. If $t=T$ the result is trivial due to \eqref{c-alpha}. Then let $t<T$ and recall that $H(x)<0$ for $x\in(x_\alpha,\bar x_0)$. By \eqref{monot} we know that $[t,T]\times\{z_i\}\subseteq\cS$ for $i=1,2$. Then it suffices to show that also $\{t\}\times(z_1,z_2)\subseteq\cS$. Arguing by contradiction assume there exists $z_3\in(z_1,z_2)$ such that $(t,z_3)\in\cC$. Let $\tau^*_3=\tau^*(t,z_3)$ be optimal for the problem with value $u(t,z_3)$. Then
\begin{align*}
u(t,z_3)=  \E_{z_3} \bigg[ \int_{0}^{\tau^*_3\wedge\tau^\dagger} H(X_s) \mathds{1}_{\{X_s \neq x_\alpha\}} \ud s+\frac{\gamma\alpha}{2} L^{x_\alpha}_{\tau^*_3\wedge\tau^\dagger}\bigg].
\end{align*}
Since $[t,T]\times \{z_i\} \subseteq \cS$ for $i=1,2$, we have that
\[
\tau^*_3\le \zeta:=\inf\{s\ge 0\,:\,(t+s,X^{z_3}_s)\notin[t,T)\times(z_1,z_2)\}.
\]
Hence, $u(t,z_3)< 0$ because for all $s\ge 0$ we have $L^{x_\alpha}_{s\wedge\zeta}=0$ and $H(X_{s\wedge\zeta})<0$, $\P_{z_3}$-a.s.~and $\P_{z_3}(\tau^*_3>0)=1$ by assumption. Thus we have a contradiction.

Next we show that $c$ cannot be strictly positive and constant. The proof borrows some ideas from \cite{DeA15}. Arguing by contradiction, assume that there exists an interval $(z_1,z_2)\subset[0,x_\alpha)\cup (x_\alpha,\bar x_0)$ where $c(x)$ takes the constant value $\bar c>0$. Then the open set $\cO:=(0,\bar c)\times(z_1,z_2)$ is contained in $\cC$ and $u\in C^{1,2}(\cO)$ since so are $v$, by Proposition \ref{prop:PDE}, and $h$ (away from $x_\alpha$). It follows that $u$ satisfies
\begin{align}\label{PDE0}
\left\{
  \begin{array}{ll}
    \partial_t u+\mathcal{L}u=-H, & \hbox{\textrm{in} $\cO$,} \\
    u(\bar c,x)=0, & \hbox{$x \in (z_1,z_2)$ .}
  \end{array}
\right.
\end{align}
Pick $\varphi \in C^{\infty}_{c}(z_1,z_2)$, with $\varphi \ge 0$. Thanks to \eqref{PDE0}, for $s \in [0,\bar c)$ we have
\begin{align*}
\int_{z_1}^{z_2}\partial_t u(s,y) \varphi(y) \ud y &= -\int_{z_1}^{z_2}(\mathcal{L}u)(s,y) \varphi(y) \ud y - \int_{z_1}^{z_2}H(y) \varphi(y) \ud y\\
&= \int_{z_1}^{z_2}u(s,y)(\mathcal{L^*}\varphi)(y) \ud y - \int_{z_1}^{z_2}H(y) \varphi(y) \ud y\,
\end{align*}
where we used integration by parts and $\mathcal{L^*}$ is the adjoint operator of $\mathcal{L}$. Recalling that $u_t \leq 0$ (Proposition \ref{prop-Vfun}-(i)), we use dominated convergence to obtain
\begin{align}\label{contrPDE}
0 &\ge  \lim_{s\uparrow \bar c} \int_{z_1}^{z_2}u(s,y)(\mathcal{L^*}\varphi)(y) \ud y- \int_{z_1}^{z_2}H(y) \varphi(y) \ud y\notag\\
&=\int_{z_1}^{z_2}\lim_{s\uparrow \bar c} u(s,y)(\mathcal{L^*}\varphi)(y) \ud y- \int_{z_1}^{z_2}H(y) \varphi(y) \ud y\\
&= - \int_{z_1}^{z_2}H(y) \varphi(y) \ud y >0,\notag
\end{align}
where the last equality is due to $u(\bar c,y)=0$ and the final inequality follows from the facts that $H<0$ on $(0,\bar x_0)$ and $\varphi$ is arbitrary. Hence a contradiction.

Finally, by the same argument we can prove \eqref{nonempty}. Indeed, if $\cO\!:=\!(0,T)\!\times\!(z_1,z_2)\!\subseteq\! \cC$ for some interval $(z_1,z_2)\!\subset\! [0,x_\alpha)\cup(x_\alpha,\bar x_0)$. That would imply $c(x)\!=\!T$ on $(z_1,z_2)$, contradicting that $c$ cannot be strictly positive and constant.
\end{proof}

\begin{lemma}\label{lem:tech2}
Assume $x_\alpha<\bar x_0$. The map $x\mapsto c(x)$ is lower semi-continuous on $\R_+$ and it is continuous at $x_\alpha$ with $c(x_\alpha)=T$. Moreover 
\begin{align}\label{eq:c<T}
c(x)<T\quad\text{for $x\in[0,x_\alpha)\cup(x_\alpha,\bar x_0)$.}
\end{align}
\end{lemma}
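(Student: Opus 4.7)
The proof splits naturally into two independent parts, one for lower semi-continuity (together with the value at $x_\alpha$) and one for the strict inequality $c(x)<T$.

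\textbf{Part 1: lower semi-continuity and continuity at $x_\alpha$.} The plan is to leverage that $u$ is continuous (Corollary \ref{cont-vfun}), so $\cS=\{u=0\}$ is closed, and that $v(T,\cdot)=h$ gives $c\le T$ everywhere. Given $x_n\to x$, I would pass to a subsequence along which $c(x_n)\to \ell:=\liminf_n c(x_n)\in[0,T]$. Since $(T,x_n)\in\cS$ and $s\mapsto u(s,x_n)$ is continuous, the infimum in the definition of $c(x_n)$ is attained, so $(c(x_n),x_n)\in\cS$. Closedness of $\cS$ then forces $(\ell,x)\in\cS$, whence $c(x)\le\ell$: this is lower semi-continuity. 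For the value at $x_\alpha$, Proposition \ref{prop1} gives $[0,T)\times\{x_\alpha\}\subseteq\cC$, so $c(x_\alpha)=T$; combining with lsc and the universal bound $c\le T$ yields $\lim_{x\to x_\alpha}c(x)=T$.

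\textbf{Part 2: the strict inequality \eqref{eq:c<T}.} The plan is to exploit the nonemptiness statement \eqref{nonempty} together with the horizontal/vertical monotonicity encoded in \eqref{monot} and parts (i)--(ii) of Lemma \ref{prop:strip}. For $x=0$ the claim is trivial since $v(\cdot,0)\equiv h(0)$ gives $c(0)=0$. For $x\in(0,x_\alpha)$, I would apply \eqref{nonempty} on $(z_1,z_2)=(x,x_\alpha)$ to obtain $(t^*,z^*)\in\cS$ with $t^*\in(0,T)$ and $z^*\in(x,x_\alpha)$; Lemma \ref{prop:strip}-(i) then extends stopping to the whole segment $[t^*,T]\times[0,z^*]$, so $(t^*,x)\in\cS$ and $c(x)\le t^*<T$. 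For $x\in(x_\alpha,\bar x_0)$, I would apply \eqref{nonempty} twice, on $(x_\alpha,x)$ and on $(x,\bar x_0)$, to produce two stopping points $(t_1,z_1),(t_2,z_2)\in\cS$ with $x_\alpha<z_1<x<z_2<\bar x_0$ and $t_1,t_2\in(0,T)$. Set $t^*:=\max(t_1,t_2)<T$: by the vertical monotonicity \eqref{monot} both $(t^*,z_1)$ and $(t^*,z_2)$ lie in $\cS$, and Lemma \ref{prop:strip}-(ii) then fills in the horizontal segment $[t^*,T]\times[z_1,z_2]$, so $(t^*,x)\in\cS$ and $c(x)\le t^*<T$.

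\textbf{Main obstacle.} The nontrivial case is $x\in(x_\alpha,\bar x_0)$, since there $\cS$ is not downward-closed in $x$ (unlike the situation in $[0,x_\alpha)$), so a single local stopping point from \eqref{nonempty} is insufficient. The sandwich argument using two such points on either side of $x$, combined with Lemma \ref{prop:strip}-(ii), is the key device that converts the local (existence in every subinterval) information from \eqref{nonempty} into the pointwise conclusion $c(x)<T$. The remaining steps are routine bookkeeping with closedness of $\cS$ and monotonicity of $t\mapsto u(t,x)$.
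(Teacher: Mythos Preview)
Your proposal is correct and follows essentially the same approach as the paper's proof: lower semi-continuity via closedness of $\cS$ and $(c(x_n),x_n)\in\cS$, continuity at $x_\alpha$ from $c(x_\alpha)=T$ together with lsc and the bound $c\le T$, and the strict inequality \eqref{eq:c<T} via \eqref{nonempty} combined with Lemma~\ref{prop:strip}. The paper frames the last part as a proof by contradiction and writes $\bar t=c(z_1)\vee c(z_2)$ rather than $\max(t_1,t_2)$, but the sandwich argument using two stopping points on either side of $x$ and Lemma~\ref{prop:strip}-(ii) is identical to yours.
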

\begin{proof}
Recall that $c(x)=T$ for $x\in (\bar x_0,\infty)$ by \eqref{c-alpha}. Thus $c(\cdot)$ is continuous on $(\bar x_0,\infty)$.  

Now fix $z\in (0,\bar x_0]$ and take a sequence $(z_n)_{n\ge 1}\subseteq (0,\infty)$ with $z_n\to z$ as $n\to \infty$. Then
\[
\liminf_{n\to\infty}\big(c(z_n),z_n\big)=\big(\liminf_{n\to \infty}c(z_n),z\big),
\]
and since $(c(z_n),z_n)_{n\ge 1}\subseteq \cS$ and $\cS$ is closed, it must be
\[
\big(\liminf_{n\to \infty}c(z_n),z\big)\in\cS.
\]
The latter implies $\liminf_{n\to \infty}c(z_n)\ge c(z)$, by definition of $c(z)$, and lower semi-continuity follows.

If $z=0$, then $c(0)=0$ by Proposition \ref{prop-Vfun}-(ii). Obviously $\liminf_{n\to\infty}c(z_n)\ge 0$ for any $z_n\to 0$, hence lower semi-continuity holds at $z=0$ too. To prove continuity at $x_\alpha$ recall that $c(x_\alpha)=T$ by Proposition \ref{prop1}, then $\liminf_{n\to\infty} c(z_n)\ge c(x_\alpha)=T$ together with $c(\cdot)\le T$, imply
\[
\liminf_{n\to\infty}c(z_n)=T=\limsup_{n\to\infty}c(z_n)
\]
for any $z_n\to x_\alpha$. 

It remains to prove \eqref{eq:c<T}. Since $c(0)=0$ it suffices to assume there is $x\in (0,\bar x_0)\setminus\{x_\alpha\}$ such that $c(x)=T$ and then argue by contradiction. With no loss of generality assume $x\in(x_\alpha,\bar x_0)$ since the case of $x\in(0,x_\alpha)$ can be treated analogously. Then we can pick $z_1,z_2\in(x_\alpha,\bar x_0)$ such that $z_1<x<z_2$ and $\bar t:=c(z_1)\vee c(z_2)<T$, by \eqref{nonempty}. Hence (ii) in Lemma \ref{prop:strip} implies $(\bar t, x)\in\cS$, i.e., $c(x)\le \bar t$, which is a contradiction.
\end{proof}

\begin{proof}[{\em\bf Proof of Proposition \ref{prop:monot-c2}}]
From (i) in Lemma \ref{prop:strip} we immediately deduce that $x\mapsto c(x)$ is non-decreasing on $[0,x_\alpha)$. Moreover, \eqref{eq:c<T} and the fact that $c(\,\cdot\,)$ cannot be strictly positive and constant (Lemma \ref{prop:strip}) also guarantee that there exists $x_1\in[0,x_\alpha)$ such that $c(x)=0$ on $[0,x_1]$ and $c(\,\cdot\,)$ is strictly increasing on $(x_1,x_\alpha]$ (notice that it could be $x_1=0$ and $c(\,\cdot\,)>0$ on $(0,x_\alpha)$). Left-continuity of $c$ on $[0,x_\alpha)$ follows by its monotonicity and lower semi-continuity.

By lower semi-continuity on $\R_+$ and \eqref{eq:c<T} there must be a minimum of $c(\,\cdot\,)$ on $[x_\alpha,\bar x_0]$, denoted $\hat c\in[0,T)$. We have two possible cases: either $\hat c=0$ or $\hat c>0$.
\begin{itemize}
\item[(a)] If $\hat c=0$, then the minimum may occur at most on an interval $[x_2,x_3]\subseteq (x_\alpha,\bar x_0]$. Indeed, the $\mathrm{argmin}_{[x_\alpha,\bar x_0]} c(x)$ is closed by lower semi-continuity of $c(\,\cdot\,)$ and it must be connected by (ii) in Lemma \ref{prop:strip}. However  the interval $[x_2,x_3]$ may collapse into a single point $x_2=x_3$ (in which case $c(\cdot)>0$ on $[x_\alpha,x_2)\cup(x_2,\bar x_0]$);
\item[(b)] If $\hat c>0$, then it may only occur at a single point $x_2(=x_3)\in(x_\alpha,\bar x_0]$, again by (ii) in Lemma \ref{prop:strip} and since $c(\,\cdot\,)$ cannot be strictly positive and constant.
\end{itemize}

Strict monotonicity on $[x_\alpha,x_2)$ and $(x_3,\bar x_0]$ now follows from (ii) in Lemma \ref{prop:strip} and the fact that $c(\,\cdot\,)$ cannot be strictly positive and constant. Left/right-continuity are then obtained by monotonicity and lower semi-continuity.
Finally, the first limit in \eqref{eq:lim2} follows by Lemma \ref{lem:tech2}, whereas the second one is trivial.
\end{proof}

By arguments as above we obtain analogous results for the case of $x_\alpha\ge \bar x_0$. Therefore we omit the proof of the next proposition.
\begin{proposition}\label{prop:monot3}
Assume $x_\alpha\ge \bar x_0$. Then, on the interval $[0,\bar x_0)$ the map $x\mapsto c(x)$ is non-decreasing, left-continuous, with $c(x)<T$. On the interval $[\bar x_0,+\infty)$ it holds $c(x)=T$ and
\begin{align}\label{c-lim1}
\lim_{x\uparrow \bar x_0}c(x)=c(\bar x_0)\le T.
\end{align}
Moreover, there exists at most a point $x_1\le \bar x_0$ such that $c(x)=0$ for $x\in[0,x_1]$ and $c(\cdot)$ is strictly increasing on $(x_1,\bar x_0]$.
\end{proposition}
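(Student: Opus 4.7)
My plan is to mirror the proof of Proposition \ref{prop:monot-c2}, taking advantage of the simplifications afforded by the regime $x_\alpha\ge\bar x_0$. The inclusion \eqref{eq:case2} traps $\cS$ inside $([0,T)\times[0,\bar x_0))\cup(\{T\}\times\R_+)$, which immediately gives $c(x)=T$ for every $x\ge\bar x_0$. Moreover, throughout the strip $(0,\bar x_0)\subseteq(0,x_\alpha)$ one has $H<0$, and paths confined to this strip accumulate no local time at $x_\alpha$, so the delicate ``too-good-to-persist'' layer analysed in Proposition \ref{prop:monot-c2} does not arise: only a single monotone boundary below $\bar x_0$ remains to be described.

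The first step is to establish the natural analog of Lemma \ref{prop:strip}-(i): for every $z\in(0,\bar x_0)$ and $t\in(0,T]$, $(t,z)\in\cS$ implies $[t,T]\times[0,z]\subseteq\cS$. The proof is the verbatim contradiction argument of Lemma \ref{prop:strip}: if $(t,z_3)\in\cC$ for some $z_3\in(0,z)$, the optimal time $\tau^*_3$ is dominated by the exit time of $X^{z_3}$ from $(0,z)$; along such paths one has $L^{x_\alpha}\equiv 0$ (since $z<\bar x_0\le x_\alpha$) and $H(X_\cdot)<0$, so \eqref{u} forces $u(t,z_3)<0$, a contradiction. This yields non-decreasingness of $c(\cdot)$ on $[0,\bar x_0)$.

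Next I would port over the remaining two ingredients from Proposition \ref{prop:monot-c2}. The PDE/test-function argument leading to \eqref{contrPDE} carries over unchanged and rules out $c(\cdot)$ being strictly positive and constant on any subinterval $(z_1,z_2)\subseteq[0,\bar x_0)$; the same argument applied with $\bar c=T$ gives the non-emptiness $\cS\cap((0,T)\times(z_1,z_2))\neq\varnothing$. Combined with the new analog of Lemma \ref{prop:strip}-(i), this proves $c(x)<T$ on $[0,\bar x_0)$ exactly as in the derivation of \eqref{eq:c<T}: were $c(x_0)=T$ for some $x_0\in(0,\bar x_0)$, one could pick $z_2\in(x_0,\bar x_0)$ with $c(z_2)<T$ and apply the new (i) to derive a contradiction. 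Lower semi-continuity of $c$ on $\R_+$ then follows from closedness of $\cS$, as in Lemma \ref{lem:tech2}.

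Putting these facts together, non-decreasingness plus the absence of strictly positive constant stretches produces $x_1\in[0,\bar x_0]$ (possibly $x_1=0$) with $c\equiv 0$ on $[0,x_1]$ and $c$ strictly increasing on $(x_1,\bar x_0)$. Non-decreasingness combined with lower semi-continuity upgrades to left-continuity on $[0,\bar x_0]$; since $c<T$ on $[0,\bar x_0)$ while $c(\bar x_0)=T$, this delivers $\lim_{x\uparrow\bar x_0}c(x)=T=c(\bar x_0)$ and upgrades the strict increase from $(x_1,\bar x_0)$ to $(x_1,\bar x_0]$ as claimed. I do not anticipate a real obstacle: the only step requiring any care is the transfer of Lemma \ref{prop:strip}-(i), and the non-strict inequality $\bar x_0\le x_\alpha$ is precisely what guarantees that the local-time term plays no role along the relevant paths, which is the structural reason Case 2 is cleaner than Case 1.
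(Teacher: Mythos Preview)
Your proposal is correct and follows precisely the approach the paper intends: the paper omits the proof entirely, stating only that ``by arguments as above we obtain analogous results for the case of $x_\alpha\ge \bar x_0$,'' and your write-up faithfully ports over Lemma~\ref{prop:strip}-(i), the PDE/test-function non-constancy argument, the non-emptiness claim, and the lower semi-continuity of $c$ from Lemma~\ref{lem:tech2}, exploiting throughout the key simplification that paths confined to $(0,\bar x_0)\subseteq(0,x_\alpha]$ accumulate no local time at $x_\alpha$. Your deduction of left-continuity at $\bar x_0$ (and hence $\lim_{x\uparrow\bar x_0}c(x)=c(\bar x_0)=T$) from non-decreasingness plus lower semi-continuity is also sound.
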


\subsection{Higher regularity of the value function and of the optimal boundary}
Thanks to the geometry of the optimal boundary we obtain a lemma that will be used to establish global $C^1$-regularity of the value function (jointly in $(t,x)$).

As shown in \cite{DeAPe18} the key to $C^1$-regularity of the value function is the probabilistic regularity of the stopping boundary. Since the 2-dimensional process $(t,X_t)_{t\ge 0}$ is not of strong Feller type, we will actually use probabilistic regularity for the {\em interior} $\cS^\circ$ of the stopping region. For completeness we recall that a process $Z\in\R^d$ is said to be of strong Feller type if $z\mapsto \E_z[f(Z_t)]$ is continuous for any $t>0$ and any {\em bounded measurable} function $f:\R^d\to \R$.

More precisely, letting
\begin{align*}
&\sigma_*(t,x):=\inf\{s\in(0,T-t]:(t+s,X^x_s)\in\cS\},\quad\text{and}\\
&\sigma^\circ_*(t,x):=\inf\{s\in(0,T-t]:(t+s,X^x_s)\in\cS^\circ\}.
\end{align*}
we say that a boundary point $(t,x)\in\partial\cC$ is (probabilistically) regular for $\cS$ (or $\cS^\circ$) if
\begin{align}\label{def:reg-b}
\P(\sigma_*(t,x)=0)=1\qquad(\text{or}~\P(\sigma^\circ_*(t,x)=0)=1).
\end{align}
Clearly, probabilistic regularity for $\cS^\circ$ implies the one for $\cS$. However, regularity for $\cS^\circ$ is meaningless at points $(t_0,z_0)\in\partial\cC$ such that $\cS$ has empty interior in a neighbourhood of $(t_0,z_0)$. Therefore in what follows we need both.

\begin{lemma}\label{lem:reg}
The boundary $\partial\cC$ is probabilistically regular for $\cS$. Moreover, for any $(t_0,z_0)\in \partial\cC$ and any sequence $(t_n,x_n) \to(t_0,z_0)$ as $n\to\infty$, it holds
\begin{align}\label{eq:taulim}
\lim_{n \to \infty}\tau^*(t_n,x_n) =0, \quad \P\textrm{-a.s.}
\end{align}
where $\tau^*$ is defined in \eqref{entry}.
\end{lemma}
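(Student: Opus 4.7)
The approach is to reduce to the nontrivial situation $t_0<T$ and $z_0>0$ (the cases $t_0=T$ and $z_0=0$ are immediate from $\sigma_*(T,z_0)=0$ and from $X$ being absorbed at $0$, respectively), and then to exploit the Brownian-type oscillations of $X^{z_0}$ near $s=0$ against the monotone geometry of $\partial\cC$. Writing
\[
X^{z_0}_s=z_0+\sigma W_s+\int_0^s\pi(X^{z_0}_u)\ud u,
\]
the drift is $O(s)$ uniformly in $\omega$ on compact time intervals (since $\pi$ is bounded on compacts), while the Brownian term oscillates between $\pm\sigma\sqrt{2s\log\log(1/s)}\,(1+o(1))$ by the law of iterated logarithm (LIL), taking strictly positive and strictly negative values for $s$ in every right-neighbourhood of $0$, $\P$-a.s.

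Recalling that $\cS=\{(t,x)\in[0,T]\times\R_+:t\ge c(x)\}$, a boundary point $(t_0,z_0)\in\partial\cC$ with $t_0\in(0,T)$ satisfies $t_0=c(z_0)$. Using the piecewise monotonicity and one-sided continuity of $c(\cdot)$ from Propositions \ref{prop:monot-c2} and \ref{prop:monot3}, I would split into two local pictures. Either $c$ is strictly increasing and left-continuous near $z_0$ (as on $(x_1,x_\alpha)$ or $(x_3,\bar x_0)$), in which case the stop-loss geometry implies that for every small $\eta>0$ there is $\delta>0$ with $\{(t,x):x\le z_0-\eta,\,t_0\le t\le t_0+\delta\}\subset\cS$, so that any downward excursion of $X^{z_0}$ immediately lands in $\cS$; or $c$ is strictly decreasing and right-continuous near $z_0$ (as on $(x_\alpha,x_2)$), in which case the symmetric too-good-to-persist geometry ensures that upward excursions land in $\cS$. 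In either case LIL produces excursions of the relevant sign in any right-neighbourhood of $0$, yielding $\P(\sigma_*(t_0,z_0)=0)=1$. The flat case for $c$ is ruled out by Lemma \ref{prop:strip}, and at potential corner points of $\partial\cC$ (e.g.\ where a decreasing and an increasing portion of $c(\cdot)$ meet when $\hat c>0$) the same argument applies, since LIL allows me to choose whichever sign gives access to $\cS^\circ$.

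For the convergence \eqref{eq:taulim}, fix $\eps>0$ and a realisation $\omega$ in the full-measure event just constructed. A quantitative refinement of LIL produces some $s_\eps=s_\eps(\omega)\in(0,\eps)$ for which $X^{z_0}_{s_\eps}(\omega)$ lies strictly past the relevant side of the boundary, i.e.\ $(t_0+s_\eps,X^{z_0}_{s_\eps}(\omega))\in\cS^\circ$ with positive slack from $\partial\cC$. By the Lipschitz dependence of the flow on its initial condition (Lemma \ref{lemma-X}) one has $X^{x_n}_{s_\eps}(\omega)\to X^{z_0}_{s_\eps}(\omega)$, and by one-sided continuity of $c(\cdot)$ the slack absorbs both the perturbation in the trajectory and the shift $t_n-t_0$. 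Thus $(t_n+s_\eps,X^{x_n}_{s_\eps}(\omega))\in\cS$ for all $n$ sufficiently large, giving $\tau^*(t_n,x_n)(\omega)\le s_\eps<\eps$. A diagonal argument along a countable sequence $\eps_k\downarrow 0$ then yields $\limsup_n\tau^*(t_n,x_n)\le 0$ $\P$-a.s., proving \eqref{eq:taulim}.

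The main obstacle is turning the non-strict boundary touching produced by a naive application of LIL into a genuinely interior crossing with positive slack from $\partial\cC$, which is what allows the perturbation argument in the second part to go through. This requires combining a quantitative version of LIL (producing excursions of amplitude comparable to $\sqrt{s\log\log(1/s)}$) with a modulus-of-continuity estimate for $c(\cdot)$ on each of its strictly monotone pieces, with special care at the potential corner where two pieces of $\partial\cC$ merge.
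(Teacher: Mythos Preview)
Your approach is essentially the same as the paper's for ``generic'' boundary points: use LIL to produce oscillations of the correct sign and exploit piecewise monotonicity of $c(\cdot)$ to enter $\cS^\circ$. For the convergence \eqref{eq:taulim} at such points, the paper simply invokes \cite[Cor.~6]{DeAPe18} (regularity for $\cS^\circ$ implies the limit), whereas you sketch the same mechanism by hand via the Lipschitz flow. Note that your worry about needing a modulus of continuity for $c$ is unnecessary here: once $(t_0+s_\eps,X^{z_0}_{s_\eps})\in\cS^\circ$, openness of $\cS^\circ$ already provides the slack needed to absorb the perturbations $(t_n,x_n)\to(t_0,z_0)$.

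However, there is a genuine gap at the exceptional vertical stretches of $\partial\cC$, namely when $x_2=x_3$ with $\hat c<c(x_2-)\wedge c(x_2+)$ (and analogously when $x_1=0$ with $c(0+)>0$). For $t_0\in\big[\hat c,\,c(x_2-)\wedge c(x_2+)\big)$, the set $\cS$ near $(t_0,x_2)$ reduces to the single vertical segment $[\hat c,\,c(x_2-)\wedge c(x_2+)]\times\{x_2\}$, so $\cS^\circ$ is locally empty. Your stated plan---``choose whichever sign of excursion gives access to $\cS^\circ$''---cannot succeed here, because neither upward nor downward excursions enter $\cS^\circ$; both land in $\cC$. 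The paper's fix is different in kind: rather than seeking $\cS^\circ$, it uses the intermediate value theorem. By LIL there exist $0<s_1<s_2<\delta$ with $X^{x_2}_{s_1}<x_2<X^{x_2}_{s_2}$; continuity of the flow transfers this to $X^{x_n}$ for large $n$, and then continuity of $s\mapsto X^{x_n}_s$ forces $X^{x_n}_{s_n}=x_2$ for some $s_n\in(s_1,s_2)$. The point $(t_n+s_n,x_2)$ lies on the vertical segment, hence in $\cS$ (not $\cS^\circ$), giving $\tau^*(t_n,x_n)\le\delta$. This hitting-the-spike-exactly idea is the missing ingredient in your proposal.
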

\begin{proof}
By the law of iterated logarithm and the geometry of the stopping region, it is clear that
\begin{align}\label{eq:sst}
\sigma_*(t,z)=\sigma^\circ_*(t,z)=\tau^*(t,z),\quad \text{$\P$-a.s.}
\end{align}
for all $(t,z)\in\partial\cC$ except at most along vertical stretches of the boundary corresponding to $x_1=0$ and $x_2=x_3$, as defined in Proposition \ref{prop:monot-c2}. Indeed, at such points a {\em spike} may occur so that $\cS^\circ$ may be (locally) empty.
For simplicity let us denote
\[
\cE:=\Big(\big(0,c(x_1+)\big)\times\{x_1\}\Big)\cup \Big(\big(\hat c,c(x_2-)\wedge c(x_2+)\big)\times\{x_2\}\Big).
\]
By definition $\tau^*(t,z)=0$, $\P$-a.s., for all $(t,z)\in\partial \cC$. Then, by \eqref{eq:sst} we have regularity of $\partial\cC\setminus\cE$ for $\cS^\circ$ in the sense of \eqref{def:reg-b}. Hence \eqref{eq:taulim} holds for any $(t,z)\in\partial\cC\setminus\cE$ (see, e.g., Corollary 6 in \cite{DeAPe18}).

Thanks to lower semi-continuity of $c$, it only remains to consider regularity at $\cE$ in the cases: (a) $x_2=x_3$ but $\hat c<c(x_2\pm)$, and (b) $x_1=0$ but $c(x_1+)>0$. We give a full argument for case (a), then case (b) may be handled analogously.

Let us assume $x_2=x_3$ but $\hat c<c(x_2\pm)$. Then $\sigma_*(t,x_2)=\tau^*(t, x_2)$, $\P$-a.s., continues to hold for all $t\in[0,T)$ such that $(t,x_2)\in\partial\cC$, by the law of iterated logarithm. Hence the first in \eqref{def:reg-b} holds. Since the hitting time $\sigma^\circ_*(t,x_2)$ is no longer zero for $\hat c \le t <c(x_2+)\wedge c(x_2-)$, because there is no interior part to the stopping region in a neighbourhood of $(t,x_2)$, the argument provided in \cite{DeAPe18} to prove the analogue of \eqref{eq:taulim} needs a small tweak.

Fix $(t_0,x_2)\in\partial \cC$ with $\hat c \le t_0 <c(x_2+)\wedge c(x_2-)$ and a sequence $(t_n,x_n)_{n\ge 1}\subset \cC$ that converges to $(t_0,x_2)$ as $n\to\infty$. Recall that we work with a continuous modification of the stochastic flow and let us pick $\omega\in\Omega$ outside a null set such that $(t,x)\mapsto X^x_t(\omega)$ is continuous. Then for any $\delta>0$, there exist $0<s_{1,\omega}<s_{2,\omega}<\delta$ such that $X^{x_2}_{s_{1,\omega}}(\omega)<x_2<X^{x_2}_{s_{2,\omega}}(\omega)$, by the law of iterated logarithm. By continuity of $x\mapsto X^x$, for some $N_{\delta,\omega}\ge 1$ and  all $n\ge N_{\delta,\omega}$, we have $X^{x_n}_{s_{1,\omega}}(\omega)<x_2<X^{x_n}_{s_{2,\omega}}(\omega)$. With no loss of generality we may assume that $N_{\delta,\omega}$ is sufficiently large that $t_n\ge t_0-s_{1,\omega}$ for $n\ge N_{\delta,\omega}$. Hence, the points $(t_n+s_{1,\omega},X^{x_n}_{s_{1,\omega}}(\omega))$ and $(t_n+s_{2,\omega},X^{x_n}_{s_{2,\omega}}(\omega))$ lie in the two opposite half-planes that are adjacent to the segment $[t_0,T]\times\{x_2\}$. This implies that for each $n\ge N_{\delta,\omega}$ there is $s_{n,\omega}\in(s_{1,\omega},s_{2,\omega})$ such that $X^{x_n}_{s_{n,\omega}}(\omega)=x_2$ and therefore $(t_n+s_{n,\omega}, X^{x_n}_{s_{n,\omega}}(\omega))\in\cS$. The latter implies $\tau^*(t_n,x_n)(\omega)\le \delta$ for all $n\ge N_{\delta,\omega}$, hence
\[
\limsup_{n\to\infty}\tau^*(t_n,x_n)(\omega)\le \delta.
\]
Since $\delta>0$ and $\omega$ were arbitrary we obtain \eqref{eq:taulim}.
\end{proof}

We now provide some useful estimates for $\partial_x v$ in $\cC$. Below we use that $(t_0,z_0)\in\partial\cC$ with $t_0<T$ guarantees $z_0 \neq x_{\alpha}$ by Proposition \ref{prop1}, hence $\partial_x h$ is continuous at $z_0$. In particular,
$\partial_x h\left( X^{x}_{\tau^*}\right)$ is well-defined on the event $\{\tau^*<T-t\}$, for any $(t,x)\in [0,T)\times\R_+$.

\begin{lemma}\label{lem:vx}
For all $(t,x)\in \cC$ and $0<s<(\delta^{-1}\ln(2))\wedge (T-t)$ it holds
\begin{align}\label{eqvx}
e^{\delta s}\big(&\E \left[\mathds{1}_{\{\tau^*\le s\}}\partial_x h\left( X^x_{\tau^*}\right)\right]-\kappa_0 \P\left( \tau^*>s \right)\big)\\
&\le \partial_xv(t,x)\le\big(2-e^{\delta s}\big)\E \left[\mathds{1}_{\{\tau^*< s\wedge\tau^\dagger\}}\partial_x h\left( X^x_{\tau^*}\right) \right],\notag
\end{align}
with $\tau^*=\tau^*(t,x)$ and $\tau^\dagger=\tau^\dagger(x)$.
\end{lemma}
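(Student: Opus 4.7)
The plan is to prove each bound by choosing $\tau^*=\tau^*(t,x)$ as an admissible (but non-optimal) stopping rule for the shifted problems $v(t,x\pm\varepsilon)$, then exploiting the convexity of $h$, the fact that $\partial_x h<0$ and the flow estimates from Lemma \ref{lemma-X}. Two preliminary observations simplify matters considerably: first, since $(t,0)\in\cS$ for every $t\in[0,T]$ by Proposition \ref{prop-Vfun}-(ii), one has $\tau^*\le\tau^\dagger$ $\P$-a.s.; second, on $\{\tau^*<T-t\}$ one has $X^x_{\tau^*}\neq x_\alpha$ by Proposition \ref{prop1}, so that $\partial_x h(X^x_{\tau^*})$ is unambiguously defined.

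For the lower bound, I would start from $v(t,x+\varepsilon)-v(t,x)\ge\E[h(X^{x+\varepsilon}_{\tau^*\wedge\tau^\dagger(x+\varepsilon)})-h(X^x_{\tau^*})]$; since flow monotonicity gives $\tau^\dagger(x)\le\tau^\dagger(x+\varepsilon)$ and we already know $\tau^*\le\tau^\dagger(x)$, the first stopping time reduces to $\tau^*$. On $\{\tau^*\le s\}$ the tangent inequality for the convex function $h$ combined with $X^{x+\varepsilon}_{\tau^*}-X^x_{\tau^*}\le \varepsilon e^{\delta s}$ from \eqref{eq:lipX} and $\partial_x h<0$ produces $h(X^{x+\varepsilon}_{\tau^*})-h(X^x_{\tau^*})\ge e^{\delta s}\varepsilon\,\partial_x h(X^x_{\tau^*})$; on $\{\tau^*>s\}$ the $1$-Lipschitz bound on $h$ in \eqref{eq:ph} together with $|X^{x+\varepsilon}_{\tau^*}-X^x_{\tau^*}|\le \kappa_0\varepsilon$ gives a crude $-\kappa_0\varepsilon$. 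Dividing by $\varepsilon$ and letting $\varepsilon\downarrow 0$ yields in fact a bound slightly sharper than the lemma's (since $e^{\delta s}\ge 1$), hence the stated one.

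For the upper bound I would use $\tau^*$ as admissible for $v(t,x-\varepsilon)$ (with $\varepsilon<x$), obtaining $v(t,x)-v(t,x-\varepsilon)\le\E[h(X^x_{\tau^*})-h(X^{x-\varepsilon}_{\tau^*\wedge\tau^\dagger(x-\varepsilon)})]$. Flow monotonicity gives $X^x_{\tau^*}\ge X^{x-\varepsilon}_{\tau^*\wedge\tau^\dagger(x-\varepsilon)}$, so convexity of $h$ with $\partial_x h\le 0$ already shows the integrand is non-positive. The improvement happens on $\{\tau^*<s\wedge\tau^\dagger(x-\varepsilon)\}$, where both stopping times coincide with $\tau^*$ and the reverse Lipschitz bound \eqref{eq:bdX} provides $X^x_{\tau^*}-X^{x-\varepsilon}_{\tau^*}\ge(2-e^{\delta s})\varepsilon>0$; this is precisely where the hypothesis $s<\delta^{-1}\ln 2$ is used. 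Applying the tangent inequality with $\partial_x h(X^x_{\tau^*})<0$ yields $h(X^x_{\tau^*})-h(X^{x-\varepsilon}_{\tau^*})\le (2-e^{\delta s})\varepsilon\,\partial_x h(X^x_{\tau^*})$. Collecting both regimes, dividing by $\varepsilon$, and passing to the limit $\varepsilon\downarrow 0$ produces
\[
\partial_x v(t,x)\le (2-e^{\delta s})\lim_{\varepsilon\downarrow 0}\E\!\left[\mathds{1}_{\{\tau^*<s\wedge\tau^\dagger(x-\varepsilon)\}}\partial_x h(X^x_{\tau^*})\right].
\]

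The main technical obstacle I expect is the justification of this final limit, i.e.\ the left-continuity of $\varepsilon\mapsto\tau^\dagger(x-\varepsilon)$ at $0$. Monotonicity of the SDE flow in the initial datum guarantees $\tau^\dagger(x-\varepsilon)\uparrow \tau^\dagger(x-)\le\tau^\dagger(x)$, and the non-degeneracy of the diffusion coefficient ($\sigma>0$) together with continuity of the stochastic flow yields $\tau^\dagger(x-)=\tau^\dagger(x)$ $\P$-a.s. (the trajectory of $X^x$ cannot graze zero without crossing it). Hence $\mathds{1}_{\{\tau^*<s\wedge\tau^\dagger(x-\varepsilon)\}}\uparrow\mathds{1}_{\{\tau^*<s\wedge\tau^\dagger\}}$, and monotone convergence (combined with $\partial_x h\le 0$) delivers the desired bound. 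A minor warning is that $\partial_x v$ is used throughout the argument as a two-sided derivative, so strictly speaking one should first record upper/lower one-sided derivative bounds and invoke their coincidence (or alternatively use $C^1$-regularity which is obtained as a downstream consequence of the present estimate).
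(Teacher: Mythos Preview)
Your argument is correct and shares the paper's overall architecture: fix $\tau^*=\tau^*(t,x)$, compare $v(t,x\pm\varepsilon)$ with $v(t,x)$, split on $\{\tau^*\le s\}$ versus its complement, and combine convexity of $h$ with the two-sided flow estimates \eqref{eq:lipX}--\eqref{eq:bdX}. The paper organises the same ingredients slightly differently. It stops at $s\wedge\tau^*$ (rather than at $\tau^*$) and invokes the (super)martingale property of $s\mapsto v(t+s,X_s)$ together with the spatial Lipschitz bound \eqref{eq:cx} on $v$ on the event $\{\tau^*>s\}$; this is precisely why its $\kappa_0$-term carries the extra factor $e^{\delta s}$ that you noticed is unnecessary. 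It expresses the $h$-increment as $\int_0^{\cdot}\partial_x h(X^x_{\tau^*}+z)\,\ud z$ and passes to the limit via the fundamental theorem of calculus rather than via the subgradient (tangent) inequality. Finally, for the upper bound it freezes an auxiliary $\varepsilon_0>0$ so as to work with the single stopping time $\tau^\dagger(x-\varepsilon_0)$ uniformly in $\varepsilon\in(0,\varepsilon_0]$, and only at the very end sends $\varepsilon_0\downarrow 0$ by monotone convergence---which is exactly your left-continuity step $\tau^\dagger(x-\varepsilon)\uparrow\tau^\dagger(x)$, just packaged as a two-stage limit. Your route is marginally more elementary (it bypasses both the supermartingale property and the Lipschitz estimate for $v$), and the closing caveat you raise about one- versus two-sided derivatives is harmless here since $\cC$ is open and $v\in C^{1,2}(\cC)$ by Proposition~\ref{prop:PDE}.
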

\begin{proof}
Recall that for any initial condition $(t,x)\in[0,T]\times\R_+$ the process $V^{t,x}_s= v(t+s,X^x_s)$ is a continuous supermartingale and $s\mapsto V^{t,x}_{s\wedge\tau^*}$ is a continuous martingale for $s\in[0,T-t]$.
Fix $(t,x)\in \cC$ and $\varepsilon>0$ such that $(t,x+\varepsilon)\in \cC$ and $(t,x-\varepsilon)\in \cC$. Notice that
\[
\tau^{\dagger}(x-\varepsilon)\le \tau^{\dagger}(x)\le \tau^{\dagger}(x+\varepsilon),\:\:\:\text{$\P$-a.s.}
\]
and, by (ii) in Proposition \ref{prop-Vfun}, that $\tau^\dagger(x)\ge \tau^*(t,x)$ a.s., because $[0,T]\times \{0 \} \subseteq \cS$. Set $\tau^*:=\tau^*(t,x)$ to simplify notation. Then for all $s<T-t$, using the (super)martingale property, we have
\begin{align*}
v(t,x+\varepsilon) &\ge \E \left[v\big(t+(s\wedge\tau^*),X^{x+\varepsilon}_{s\wedge\tau^*}\big) \right]\\
v(t,x) &= \E \left[v\big(t+(s\wedge\tau^*),X^{x}_{s\wedge\tau^*}\big) \right].
\end{align*}
Thus
\begin{align}\label{eq:vx1}
&v(t,x+\varepsilon)- v(t,x)\notag\\
&\ge \E \left[v\big(t+(s\wedge\tau^*),X^{x+\varepsilon}_{s\wedge\tau^*}\big)-v\big(t+(s\wedge\tau^*),X^{x}_{s\wedge\tau^*}\big) \right]\notag\\
 &= \E \left[\mathds{1}_{\{\tau^*\le s\}}\Big(v\big(t+\tau^*,X^{x+\varepsilon}_{\tau^*}\big)-v\big(t+\tau^*,X^{x}_{\tau^*}\big)\Big) \right]\\
&\hspace{+25pt}+ \E \left[\mathds{1}_{\{\tau^*> s\}}\Big(v\left(t+s,X^{x+\varepsilon}_{s}\right)-v\left(t+s,X^{x}_{s}\right)\Big) \right]\notag\\
&\ge \E \left[\mathds{1}_{\{\tau^*\le s\}}\Big(h\left(X^{x+\varepsilon}_{\tau^*}\right)-h\left(X^{x}_{\tau^*}\right)\Big) \right]-\kappa_0 \E \Big[\mathds{1}_{\{\tau^*> s\}}\big|X^{x+\varepsilon}_{s}-X^{x}_{s}\big| \Big],\notag
\end{align}
where $\kappa_0>0$ is as in \eqref{eq:cx}.

On $\{\tau^*\le s\}$, the decreasing property of $h$ and $\big|X^{x+\varepsilon}_{s}-X^{x}_{s}\big| \le \varepsilon e^{\delta s}$ (see \eqref{eq:lipX}) give $h\left(X^{x+\varepsilon}_{\tau^*}\right)\ge h\left(X^{x}_{\tau^*}+\varepsilon e^{\delta s}\right)$. Hence
\begin{align*}
v(t,x+\varepsilon)- v(t,x)&\ge  \E \left[\mathds{1}_{\{\tau^*\le s\}}\left(h\big(X^{x}_{\tau^*}+\varepsilon e^{\delta s}\big)-h\big(X^{x}_{\tau^*}\big)\right) \right]-\kappa_0\,\varepsilon\, e^{\delta s} \P\left(\tau^*> s \right)\\
&= \E \left[\mathds{1}_{\{\tau^*\le s\}}\int_{0}^{\varepsilon e^{\delta s}}\partial_xh\left( X^{x}_{\tau^*}+z\right)
\ud z \right]-\kappa_0\,\varepsilon\, e^{\delta s} \P\left(\tau^*> s \right),
\end{align*}
since $h$ is absolutely continuous on $\R_+$. Then
\begin{align*}
\partial_x v(t,x)&= \lim_{\varepsilon \downarrow 0} \frac{1}{\varepsilon}\Big(v(t,x+\varepsilon)- v(t,x)\Big)\\
& \ge \lim_{\varepsilon\downarrow 0}  \E \left[\mathds{1}_{\{\tau^*\le s\}}\frac{1}{\varepsilon} \int_{0}^{\varepsilon e^{\delta s}}\partial_x h\left( X^{x}_{\tau^*}+z\right)
\ud z \right]-\kappa_0 e^{\delta s}\P\left( \tau^*>s \right)\\
& = \E \left[\mathds{1}_{\{\tau^*\le s\}}\lim_{\varepsilon \downarrow 0} \frac{1}{\varepsilon} \int_{0}^{\varepsilon e^{\delta s}}\partial_x h\left( X^{x}_{\tau^*}+z\right)
\ud z \right]-\kappa_0 e^{\delta s}\P\left( \tau^*>s \right),
\end{align*}
where the final equality follows by dominated convergence since $\big|\partial_x h\big|\le 1$.

Now, for each $\omega \in \{\tau^*\le s\}$ we have $X^{x}_{\tau^*}(\omega)\neq x_\alpha$ by Proposition \ref{prop1} since $s<T-t$. Hence, there exists $\bar{\varepsilon}_{\omega}>0$ such that the mapping
$z \mapsto \partial_x h\left( X^{x}_{\tau^*}(\omega)+z\right)$ is continuous on $\left[0,\bar{\varepsilon}_{\omega}e^{\delta s}\right]$ and an application of the fundamental theorem of calculus gives
\begin{align}\label{eq:FC}
\lim_{\varepsilon \downarrow 0} \frac{1}{\varepsilon} \int_{0}^{\varepsilon e^{\delta s}}\partial_x h\left( X^{x}_{\tau^*}(\omega)+z\right)
\ud z=e^{\delta s}\partial_x h\left( X^{x}_{\tau^*}(\omega)\right).
\end{align}
Hence
\[
\partial_xv(t,x)\ge  e^{\delta s}\left(\E \left[\mathds{1}_{\{\tau^*\le s\}}\partial_x h\left( X^{x}_{\tau^*}\right)\right]-\kappa_0\P\left( \tau^*>s \right)\right).
\]

Next we want to bound from above the difference $v(t,x)-v(t,x-\varepsilon)$. This requires a slight modification of the previous argument in order to account for the fact that $\tau^\dagger(x-\eps)\le \tau^\dagger(x)$, a.s. In particular, with no loss of generality we assume that $\eps\in(0,\eps_0]$ for some $\eps_0>0$ fixed.
Letting $\tau^\dagger_0:=\tau^\dagger(x-\eps_0)$ for simplicity, we have $\tau^\dagger_0\le \tau^\dagger(x-\eps)$. Then, arguing as in \eqref{eq:vx1} gives
\begin{align*}
&v(t,x)-v(t,x-\varepsilon)\\
&\le \E \left[v\left(t+(s\wedge\tau^*\wedge\tau^\dagger_0),X^{x}_{s\wedge\tau^*\wedge\tau^\dagger_0}\right)-v\left(t+(s\wedge\tau^*\wedge\tau^\dagger_0),X^{x-\varepsilon}_{s\wedge\tau^*\wedge\tau^\dagger_0}\right) \right]\\
&\le \E \left[\mathds{1}_{\{\tau^*\le s\wedge\tau^\dagger_0\}}\left(h\left(X^{x}_{\tau^*}\right)-h\left(X^{x-\varepsilon}_{\tau^*}\right)\right) \right]\\
&\hspace{+25pt}+ \E \left[\mathds{1}_{\{\tau^*> s\wedge\tau^\dagger_0\}}\left(v\left(t+(s\wedge\tau^\dagger_0),X^{x}_{s\wedge\tau^\dagger_0}\right)-v\left(t+(s\wedge\tau^\dagger_0),X^{x-\varepsilon}_{s\wedge\tau^\dagger_0}\right)\right) \right].
\end{align*}
Notice that the second term in the last expression is negative thanks to (ii)-Proposition \ref{prop-Vfun}.
Moreover, on the event $\{\tau^*\le s\wedge\tau^\dagger_0\}$ we have
\[
h\left(X^{x}_{\tau^*}\right)-h\left(X^{x-\varepsilon}_{\tau^*}\right)=\int_{0}^{X^{x}_{\tau^*}-X^{x-\varepsilon}_{\tau^*}}\partial_x h\left( X^{x-\varepsilon}_{\tau^*}+z\right)\ud z\le \int_{0}^{X^{x}_{\tau^*}-X^{x-\varepsilon}_{\tau^*}}\partial_x h\left( X^{x}_{\tau^*}+z\right) \ud z
\]
where the last step follows from the convexity of $h(\cdot)$. Also, on the event $\{\tau^*\le s\wedge\tau^\dagger_0\}$, using \eqref{eq:bdX} we have
\[
X^{x}_{\tau^*}-X^{x-\varepsilon}_{\tau^*}\ge \varepsilon \left(2-e^{\delta \tau^*}\right)\ge \varepsilon \left(2-e^{\delta s}\right)> 0,
\]
by assuming $s < \delta^{-1}\ln(2)$ with no loss of generality.
It follows that, since $\partial_x h\le 0$, we have
\[
h\left(X^{x}_{\tau^*}\right)-h\left(X^{x-\varepsilon}_{\tau^*}\right)\le \int_{0}^{\varepsilon \left(2-e^{\delta s}\right)}\partial_x h\left( X^{x}_{\tau^*}+z\right) \ud z
\]
on $\{\tau^*\le s\wedge\tau^\dagger_0\}$.
Thus
\[
v(t,x)-v(t,x-\varepsilon)\le \E \left[\mathds{1}_{\{\tau^*\le s\wedge\tau^\dagger_0\}}\int_{0}^{\varepsilon \left(2-e^{\delta s}\right)}\partial_x h\left( X^{x}_{\tau^*}+z\right) \ud z \right],
\]
and, by arguments as in \eqref{eq:FC}, we obtain
\[
\partial_x v(t,x)\le \left(2-e^{\delta s}\right)\E \left[\mathds{1}_{\{\tau^*\le s\wedge\tau^\dagger_0\}}\partial_x h\left( X^{x}_{\tau^*}\right) \right], \ \ \ \ \textrm{for }  s<(\delta^{-1}\ln(2))\wedge (T-t).
\]
To conclude we let $\eps_0\downarrow 0$ so that $\tau^\dagger_0=\tau^\dagger(x-\eps_0)\uparrow\tau^\dagger(x)$ and the upper bound in \eqref{eqvx} holds by monotone convergence.
\end{proof}

\begin{proposition}\label{prop:gradient}
Fix any $(t_0,z_0)\in \partial \cC $ with $t_0<T$ and $z_0>0$. Then, for any sequence $(t_n,x_n)_{n\ge 1}\subset\cC$ such that $(t_n,x_n)\rightarrow (t_0,z_0)$ as $n\uparrow\infty$,
we have
\begin{align}\label{vx}
\lim_{n \rightarrow\infty}\partial_x v(t_n,x_n)=\partial_x h(z_0)
\end{align}
and
\begin{align}\label{vt}
\lim_{n \rightarrow\infty}\partial_t v(t_n,x_n)=0.
\end{align}
\end{proposition}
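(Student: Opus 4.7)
The plan is to prove \eqref{vx} via a direct application of Lemma \ref{lem:vx} combined with the boundary regularity from Lemma \ref{lem:reg}, and then deduce \eqref{vt} from the PDE characterisation of $v$ in $\cC$.

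For \eqref{vx}, I would fix a small $s>0$ and apply Lemma \ref{lem:vx} at each $(t_n,x_n)\in\cC$. Set $\tau^*_n:=\tau^*(t_n,x_n)$. Lemma \ref{lem:reg} yields $\tau^*_n\to 0$ $\P$-a.s., and since we work with a continuous modification of the stochastic flow, $x_n\to z_0$ together with $\tau^*_n\to 0$ gives $X^{x_n}_{\tau^*_n}\to z_0$ a.s. Because $t_0<T$, Proposition \ref{prop1} forces $z_0\neq x_\alpha$, so $\partial_x h$ is continuous at $z_0$. Moreover $\P(\tau^*_n>s)\to 0$, and $\liminf_n \tau^\dagger(x_n)>0$ a.s.\ (the continuous path of $X^{z_0}$ does not instantly hit $0$ from $z_0>0$, and the flow is jointly continuous in the initial condition). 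Using the uniform bound $|\partial_x h|\le 1$, dominated convergence allows me to pass $n\to\infty$ inside both expectations appearing in Lemma \ref{lem:vx}, yielding
\[
e^{\delta s}\,\partial_x h(z_0)\;\le\;\liminf_n \partial_x v(t_n,x_n)\;\le\;\limsup_n \partial_x v(t_n,x_n)\;\le\;(2-e^{\delta s})\,\partial_x h(z_0).
\]
Letting $s\downarrow 0$ then squeezes both sides to $\partial_x h(z_0)$, proving \eqref{vx}.

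For \eqref{vt}, my plan is to use the PDE $\partial_t v+\cL v=0$ in $\cC$ from Proposition \ref{prop:PDE}, which rearranges to
\[
\partial_t v(t_n,x_n)=-\tfrac{\sigma^2}{2}\partial_{xx}v(t_n,x_n)-\pi(x_n)\partial_x v(t_n,x_n).
\]
By \eqref{vx} and continuity of $\pi$, the first-order spatial term $\pi(x_n)\partial_x v(t_n,x_n)$ converges to $\pi(z_0)\partial_x h(z_0)$, so the task reduces to showing that $\tfrac{\sigma^2}{2}\partial_{xx}v(t_n,x_n)\to -\pi(z_0)\partial_x h(z_0)$. I would obtain this via a probabilistic estimate parallel to the proof of Lemma \ref{lem:vx}: compare $v(t_n,x_n)$ and $v(t_n+\eps,x_n)$ by taking the optimal $\tau^*_n$ for the former and its truncation $\tau^*_n\wedge(T-t_n-\eps)$ for the latter (admissible for the perturbed problem). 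The resulting difference is supported on $\{\tau^*_n>T-t_n-\eps\}$, whose probability vanishes as $n\to\infty$ for any fixed $\eps$ small enough that $T-t_n-\eps$ stays bounded away from $0$. Combining this with an It\^o-isometry bound on $|X^{x_n}_{\tau^*_n}-X^{x_n}_{T-t_n-\eps}|$ in $L^2$ and Lemma \ref{lem:loctime} to control the local-time increment $L^{x_\alpha}_{\tau^*_n}-L^{x_\alpha}_{T-t_n-\eps}$, I expect to obtain an estimate for $v(t_n,x_n)-v(t_n+\eps,x_n)$ which, after dividing by $\eps$ and passing the limits in the correct order, forces $\partial_t v(t_n,x_n)\to 0$.

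The hardest part will be coordinating the rates: the probability $\P(\tau^*_n>T-t_n-\eps)$ shrinks as $n\to\infty$ for each fixed $\eps$, but dividing the resulting bound by $\eps$ risks blow-up if $\eps$ is sent to $0$ before $n\to\infty$. The fact that $z_0\neq x_\alpha$ is essential here, because it prevents the process from repeatedly visiting $x_\alpha$ during the short time window over which the local-time increment in \eqref{u} is accumulated, keeping the $\frac{\gamma\alpha}{2}L^{x_\alpha}$ contribution asymptotically negligible.
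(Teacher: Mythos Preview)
Your argument for \eqref{vx} is correct and matches the paper's proof essentially line for line: apply Lemma \ref{lem:vx} at $(t_n,x_n)$, use Lemma \ref{lem:reg} to get $\tau^*_n\to 0$, pass to the limit by dominated convergence (using $z_0\neq x_\alpha$ so that $\partial_x h$ is continuous there), and finally let $s\downarrow 0$.

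Your plan for \eqref{vt}, however, has a structural problem. Reducing $\partial_t v\to 0$ via the PDE to the convergence of $\partial_{xx}v$ is circular in this paper: the limit $\partial_{xx}v\to -\tfrac{2}{\sigma^2}\pi(z_0)\partial_x h(z_0)$ is precisely Corollary \ref{cor:C1reg}, and that corollary is \emph{derived from} Proposition \ref{prop:gradient} by taking limits in \eqref{PDEu}. More concretely, the ``probabilistic estimate'' you describe to recover the second-order limit is a comparison of $v(t_n,x_n)$ with $v(t_n+\eps,x_n)$, i.e.\ a time-difference quotient; it yields information about $\partial_t v$, not about $\partial_{xx}v$. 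So either you are back to proving \eqref{vt} directly (in which case the PDE detour is superfluous), or you are assuming what you want to show. You also correctly flag the order-of-limits issue: truncating at $T-t_n-\eps$ gives a bound of size $C\,\P(\tau^*_n>T-t_n-\eps)$, and dividing by $\eps$ before sending $n\to\infty$ is exactly the wrong order.

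The paper avoids all of this by bounding $\partial_t v$ directly. Fix a \emph{small} $s>0$ and localise with $\tau_N:=\inf\{u\ge 0:X^x_u\ge N\}$. Using the martingale property of $v$ up to $\tau^*$ and the supermartingale property for the time-shifted problem, one gets
\[
v(t+\eps,x)-v(t,x)\ \ge\ \E\Big[\mathds{1}_{\{\tau^*>s\wedge\tau_N\}}\big(v(t+\eps+s\wedge\tau_N,X_{s\wedge\tau_N})-v(t+s\wedge\tau_N,X_{s\wedge\tau_N})\big)\Big],
\]
because on $\{\tau^*\le s\wedge\tau_N\}$ both terms equal $h(X_{\tau^*})$ (using \eqref{monot}). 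Now the local Lipschitz-in-time estimate \eqref{eq:ct} bounds the integrand below by $-\kappa_1(t+\eps+s,N)\,\eps$, uniformly in $\eps\le\eps_0$. Dividing by $\eps$ \emph{first} therefore gives
\[
0\ \ge\ \partial_t v(t,x)\ \ge\ -\hat\kappa_1\,\P\big(\tau^*>s\wedge\tau_N\big),
\]
with $\hat\kappa_1$ independent of $\eps$. Only now do you send $(t_n,x_n)\to(t_0,z_0)$ and use Lemma \ref{lem:reg} to kill the probability. The key devices you are missing are (i) stopping at a small fixed $s$ rather than near the horizon, and (ii) invoking \eqref{eq:ct} to factor out $\eps$ uniformly, which is precisely what resolves the rate-coordination problem you identified.
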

\begin{proof}
First we prove \eqref{vx}. Notice, that \eqref{eqvx} holds for any point $(t_n,x_n)\!\in\! \cC$ from a sequence that converges to $(t_0,z_0)$ as $n\uparrow\infty$, where $(t_0,z_0)\in\partial\cC$ with $t_0<T$ and $z_0>0$.
Since $\tau^{*}_n:=\tau^{*}(t_n,x_n) \rightarrow 0$ as $n\uparrow\infty$, by Lemma \ref{lem:reg}, and $\partial_x h\left( X^{x_n}_{\tau^{*}_n}\right)\rightarrow \partial_x h(z_0)$ (recall that $z_0\neq x_{\alpha}$), then for $0<s<(\delta^{-1}\ln(2))\wedge (T-t_0)$ dominated convergence and \eqref{eqvx} give
\[
e^{\delta s} \partial_x h(z_0)\le \liminf_{n\to \infty} \partial_x v(t_n,x_n)
\le \limsup_{n\to \infty} \partial_x v(t_n,x_n) \le (2-e^{\delta s})\partial_x h(z_0).
\]
Letting $s \rightarrow 0$ we get \eqref{vx}.

To prove \eqref{vt}, fix $(t,x)\in \cC$ with $t<T$ and $\varepsilon>0$ such that $(t+\varepsilon,x)\in \cC$. Let
\[
\tau_N:=\inf\{u\ge 0: X^x_u\ge N\},
\]
and pick $s<T-(t+\varepsilon)$.
Then, proceeding as in the proof of Lemma \ref{lem:vx}, with $\tau^*=\tau^*(t,x)$ we have
\begin{align*}
v(t+\varepsilon,x) &\ge \E \left[v\big(t+\varepsilon+(s\wedge\tau^{*}\wedge \tau_N),X^{x}_{s\wedge\tau^{*}\wedge \tau_N}\big) \right],\\
v(t,x) &= \E \left[v\big(t+(s\wedge\tau^{*}\wedge \tau_N),X^{x}_{s\wedge\tau^{*}\wedge \tau_N}\big) \right].
\end{align*}
Combining the above with \eqref{eq:ct} gives
\begin{align*}
&v(t+\varepsilon,x)- v(t,x)\\
&\ge \E \left[\mathds{1}_{\{\tau^{*}\le s\wedge\tau_N\}}\Big(h\left(X^{x}_{\tau^{*}}\right)-h\left(X^{x}_{\tau^{*}}\right)\Big) \right]\\
&\hspace{+25pt}+ \E \left[\mathds{1}_{\{\tau^{*}> s\wedge\tau_N\}}\Big(v\big(t+\varepsilon+(s\wedge\tau_N),X^{x}_{s\wedge\tau_N}\big)-v\big(t+(s\wedge\tau_N),X^{x}_{s\wedge\tau_N}\big)\Big) \right]\\
&\ge -\kappa_1(t+\eps+s,N) \,\varepsilon \,\P \left(\tau^{*}> s\wedge\tau_N\right).
\end{align*}
With no loss of generality we may assume that $\eps\in(0,\eps_0]$ for some $\eps_0>0$ such that $s<T-(t+\eps_0)$ and $\kappa_1(t+\eps+s,N)\le \hat \kappa_1(\eps_0,N)$ for some constant $\hat \kappa_1(\eps_0,N)>0$. Hence,
\[
0\ge \partial_t v(t,x) \ge -\hat \kappa_1(\eps_0,N)\, \P \left(\tau^{*}> s\wedge\tau_N \right).
\]
The result holds for any $(t_n,x_n)\in \cC$ from a sequence converging to $(t_0,x_0)$. Moreover, with no loss of generality we can assume that $x_n\le x_0+1<N$ for all $n\ge 1$ so that $\tau_N(x_n)\ge \tau_N(x_0+1)>0$, $\P$-a.s., and
\[
\P \Big(\tau^{*}(t_n,x_n)> s\wedge\tau_N(x_n) \Big)\le \P \Big(\tau^{*}(t_n,x_n)> s\wedge\tau_N(x_0+1) \Big).
\]
Then, thanks to Lemma \ref{lem:reg} we get
\[
0 \ge  \limsup_{n\to \infty} \partial_t v (t_n,x_n) \ge \liminf_{n\to \infty} \partial_t v (t_n,x_n)\ge 0.
\]
Hence \eqref{vt} holds.
\end{proof}

Proposition  \ref{prop:gradient} and Proposition \ref{prop:PDE} imply continuous differentiability of $v$.
\begin{corollary}\label{cor:C1reg}
The value function $v$ is continuously differentiable on the set $[0,T)\times (0,+\infty)$. Moreover, $v\in C^{1,2}\big(\,\overline\cC\cap\big([0,T)\times(0,+\infty)\big)\big)$ with
\begin{align}\label{eq:vxx}
\lim_{\cC\ni(t,x)\to(t_0,z_0)\in\partial\cC}\partial_{xx}v(t,x)=-\frac{2}{\sigma^2}\pi(z_0)\partial_x h(z_0)
\end{align}
for $t_0<T$ and $z_0>0$.
\end{corollary}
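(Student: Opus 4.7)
The plan is to combine three ingredients: the internal regularity of $v$ on the continuation region provided by Proposition \ref{prop:PDE}, the explicit representation $v=h$ on the interior of the stopping region, and the smooth-fit property at the free boundary provided by Proposition \ref{prop:gradient}.

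First I would address the global $C^1$ claim. Inside $\cC$, the function $v$ is $C^{1,2}$ by Proposition \ref{prop:PDE}, so $\partial_t v$ and $\partial_x v$ are continuous there. Inside $\cS^\circ\cap([0,T)\times(0,+\infty))$ we have $v\equiv h$; the payoff $h$ is $C^\infty$ on $(0,\infty)\setminus\{x_\alpha\}$, and since Proposition \ref{prop1} forces $(t,x_\alpha)\in\cC$ for every $t\in[0,T)$, the only singular point of $h$ never lies in $\cS$ away from the terminal time. Hence $v$ is continuously differentiable on $\cS^\circ\cap([0,T)\times(0,+\infty))$ with $\partial_t v\equiv 0$ and $\partial_x v=\partial_x h$. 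To glue the two regions, I would invoke Proposition \ref{prop:gradient}: at any $(t_0,z_0)\in\partial\cC$ with $t_0<T$ and $z_0>0$, the limits of $\partial_x v$ and $\partial_t v$ from inside $\cC$ are $\partial_x h(z_0)$ and $0$, matching the values coming from the stopping side, which gives continuity of both partial derivatives across $\partial\cC$ and hence $C^1$ regularity on $[0,T)\times(0,+\infty)$.

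For the $C^{1,2}$ regularity on $\overline{\cC}\cap([0,T)\times(0,+\infty))$ and the explicit limit \eqref{eq:vxx}, I would exploit the PDE \eqref{PDEu}. Inside $\cC$,
\[
\partial_{xx}v(t,x)=-\frac{2}{\sigma^2}\bigl(\partial_t v(t,x)+\pi(x)\,\partial_x v(t,x)\bigr),
\]
which expresses $\partial_{xx}v$ as a continuous combination of quantities that, by the previous step combined with Proposition \ref{prop:gradient}, extend continuously to $\partial\cC\cap([0,T)\times(0,+\infty))$. Passing to the limit $\cC\ni(t,x)\to(t_0,z_0)\in\partial\cC$ and using $\partial_t v\to 0$, $\partial_x v\to\partial_x h(z_0)$ simultaneously yields continuity of $\partial_{xx}v$ up to the boundary and the explicit formula \eqref{eq:vxx}.

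The main technical subtlety I foresee lies in justifying that $v$ is genuinely differentiable (not just that its gradient extends continuously) at boundary points lying on a \emph{spike} of $\cS$, i.e.\ points in the set $\cE$ identified in the proof of Lemma \ref{lem:reg}, where $\cS^\circ$ is empty in a neighborhood. There the standard smooth-fit comparison between left and right one-sided derivatives is unavailable. I would handle this by noting that $\cC$ is dense in a punctured neighborhood of each spike point, applying the fundamental theorem of calculus along segments lying entirely in $\cC$, and then using the continuous extension of $\partial_x v$ and $\partial_t v$ together with continuity of $v$ itself (from Corollary \ref{cont-vfun}) to recover differentiability at the spike and identify the derivative with the extended values $\partial_x h(z_0)$ and $0$. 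Away from such spikes the argument is essentially an immediate consequence of what has already been established.
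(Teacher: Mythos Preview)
Your proposal is correct and follows essentially the same approach as the paper: the paper's proof is a terse two-liner that invokes Proposition \ref{prop:PDE} and Proposition \ref{prop:gradient} for the $C^1$ regularity (implicitly, as already established) and then takes limits in the PDE \eqref{PDEu} to obtain \eqref{eq:vxx}. Your write-up simply unpacks these steps in greater detail, including the spike case, which the paper treats as absorbed in Proposition \ref{prop:gradient} via Lemma \ref{lem:reg}.
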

\begin{proof}
We only need to prove \eqref{eq:vxx}. In order to do that it is sufficient to take limits in \eqref{PDEu} and use Proposition \ref{prop:gradient}.
\end{proof}
The next theorem shows that the optimal boundary is continuous as a function of $x$. Notice that this type of continuity is not a standard result in optimal stopping problems for time-space processes $(t,X)$. Indeed, in the probabilistic literature, one normally proves continuity of the boundary as a function of {\em time}. Our proof relies on the use of a suitably constructed reflecting diffusion.
\begin{theorem}\label{prop:cont}
The mapping $x\mapsto c(x)$ is continuous on $(0,\infty)$. If $c(0+)=0$ then continuity holds on $\R_+$ (recall $\R_+=[0,\infty)$).
\end{theorem}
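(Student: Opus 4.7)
The plan is to argue by contradiction: a jump of $c(\,\cdot\,)$ at some $x_0>0$ would correspond to a flat vertical segment of $\partial\cC$ lying in $\cS$, and combined with the smooth-fit of Proposition \ref{prop:gradient} this is ruled out by a reflected-diffusion argument. Propositions \ref{prop:monot-c2} and \ref{prop:monot3} already supply one-sided continuity on each monotone piece (left-continuity on strictly increasing portions, right-continuity on strictly decreasing ones) and continuity at $x_\alpha$, so it suffices to exclude strict jumps from the other side. I will treat the archetypal case of a right-jump $t_1:=c(x_0)<c(x_0+)=:t_2$ at some $x_0\in(x_1,x_\alpha)$; the remaining situations (left-jump on $(x_\alpha,x_2]$, right-jump on $[x_3,\bar x_0)$, and the analogous Case~2 configurations) are handled identically by taking the reflected diffusion on the appropriate side of $x_0$. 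The final claim at $x=0$ is trivial since $c(0)=0$.

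Fix $h>0$ small and set $\phi_h(t,x):=v(t,x)-v(t+h,x)=u(t,x)-u(t+h,x)\ge 0$. The jump together with the strict monotonicity and left-continuity of $c$ on $(x_1,x_\alpha)$ allows us to choose $\epsilon_0>0$ with $x_0+\epsilon_0<x_\alpha$ such that the rectangle $R_h:=[t_1,t_2-h)\times(x_0,x_0+\epsilon_0]\subset\cC$ and $\{x_0\}\times[t_1,t_2-h]\subset\cS$. On $R_h$ both $(t,x)$ and $(t+h,x)$ lie in $\cC$, so subtracting the PDE $\partial_t v+\cL v=0$ at the two times gives $\partial_t\phi_h+\cL\phi_h=0$ in $R_h$. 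On the vertical segment $\{x_0\}\times[t_1,t_2-h]$ both $v(\,\cdot\,,x_0)$ and $v(\,\cdot\,+h,x_0)$ equal $h(x_0)$, so $\phi_h=0$; and Proposition \ref{prop:gradient} applied at $(t,x_0)$ and at $(t+h,x_0)$ gives $\partial_x v(t,x_0+)=\partial_x v(t+h,x_0+)=\partial_x h(x_0)$, hence $\partial_x\phi_h(t,x_0+)=0$ along the whole segment.

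Let $Y$ be the reflected diffusion on $[x_0,\infty)$ with dynamics $\ud Y_s=\pi(Y_s)\,\ud s+\sigma\,\ud W_s+\ud L_s$, where $L$ is the Skorokhod local time keeping $Y\ge x_0$, started at $Y_0=x_0$. Choose $\epsilon\in(0,\epsilon_0)$ and $s^*\in(0,t_2-t_1-h)$ and set $\tau:=\tau_\epsilon\wedge s^*$ with $\tau_\epsilon:=\inf\{s:Y_s=x_0+\epsilon\}$. Since $\phi_h\in C^{1,2}(\bar R_h)$ by Corollary \ref{cor:C1reg} (and $R_h$ avoids $x_\alpha$), Itô-Tanaka yields
\begin{align*}
\phi_h(t_1+\tau,Y_\tau)=\phi_h(t_1,x_0)+\int_0^\tau(\partial_t\phi_h+\cL\phi_h)(t_1+r,Y_r)\,\ud r+M_\tau+\int_0^\tau\partial_x\phi_h(t_1+r,x_0+)\,\ud L_r,
\end{align*}
with $M$ a bounded martingale. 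All three non-martingale terms vanish (boundary value, PDE in $\cC$, smooth-fit on the support of $L$), so $\E[\phi_h(t_1+\tau,Y_\tau)]=0$. However, on the event $\{Y_\tau>x_0\}$, which has positive probability, one has $(t_1+\tau,Y_\tau)\in\cC$, and Lemma \ref{lem:ut} together with the continuity of $\partial_t v$ ensures that $v(t_1+\tau,Y_\tau)>v(t_1+\tau+h,Y_\tau)$ for $h$ small, hence $\phi_h(t_1+\tau,Y_\tau)>0$ there. This contradicts $\E[\phi_h]=0$. The main delicate point is the simultaneous cancellation of the boundary term in $\ud L$ via smooth-fit at two neighbouring times, which is what forces $\partial_x\phi_h$ to vanish precisely on the support of the local time and turns the reflected-diffusion identity into a vanishing expectation of a non-negative, strictly positive random variable.
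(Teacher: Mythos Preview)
Your argument is correct and takes a genuinely different route from the paper's. Both proofs assume a jump of $c$ and exploit a reflected diffusion along the resulting flat vertical boundary segment, but they differ in the object to which It\^o's formula is applied. The paper works with $\hat u:=\partial_t u$, which is only known to be $C^{1,2}$ in the \emph{interior} of $\cC$ (via interior regularity for \eqref{PDEu}); this forces the reflected process to start at $z_0+\delta_\eps$ rather than at $z_0$, requires integration against a test function $\varphi\in C^\infty_c(\cI_1)$ to shift the problematic mixed derivative $\partial_{tx}u$ onto $\varphi'$, and then a Girsanov change of measure to take the limit $\eps\to0$. You instead use the finite time-difference $\phi_h=v(t,\cdot)-v(t+h,\cdot)$, which needs only the $C^{1,2}$-regularity of $v$ up to $\partial\cC$ supplied by Corollary~\ref{cor:C1reg}; since smooth-fit (Proposition~\ref{prop:gradient}) holds at \emph{both} $(t,x_0)$ and $(t+h,x_0)$, the local-time term $\int\partial_x\phi_h\,\ud L$ vanishes identically and the reflected diffusion can be started at $x_0$ itself, yielding the contradiction in one step. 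Your route is shorter and more elementary, at the price of relying on the boundary $C^{1,2}$-regularity of Corollary~\ref{cor:C1reg}; the paper's route is heavier but would transfer to problems where such boundary regularity is not available. One cosmetic point: the clause ``for $h$ small'' in your last sentence is superfluous, since $h\in(0,t_2-t_1)$ is fixed from the outset and Lemma~\ref{lem:ut} gives $\phi_h>0$ on $\cC$ for any such $h$.
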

\begin{proof}
We give a full proof in the case $x_\alpha<\bar x_0$ and consider the interval $[x_3,\infty)$, with $x_3$ as in Proposition \ref{prop:monot-c2}, where the boundary is increasing. It will be clear that the intervals $[x_1,x_\alpha]$ and $[x_\alpha, x_2]$ and the case $x_\alpha\ge \bar x_0$ can be treated analogously.

Arguing by contradiction let us assume that there exists $z_0\in[x_3,\infty)$ such that $c(z_0)<c(z_0+)$ and let $\mathcal{I}_0:=(c(z_0),c(z_0+))$. Then $\cI_0\times\{z_0\}\subset\partial\cC$ and there exists $z_1>z_0$ such that $\partial_t u(t,z_1)<-\eps_1$ for some $\eps_1>0$ and for all $t\in\cI_1:=(t_0,t_1)\subset\cI_0$ for some $t_1>t_0$ (see Lemma \ref{lem:ut}).

Since $u\in C^1([0,T) \times \R_+)$ by Corollary \ref{cor:C1reg} and $\cI_1\times\{z_0\}\subset\partial\cC$, we have $\partial_t u(t,z_0)=\partial_x u(t,z_0)=0$ for $t \in \mathcal{I}_1$. Then for any $\eps>0$ there exists $\delta_{\varepsilon}>0$ such that
$z_0+\delta_\eps<z_1$ and
\begin{align}\label{eq:bdd}
0 \ge\partial_t u \ge -\varepsilon\quad\text{and}\quad |\partial_x u| \leq \varepsilon\quad
\text{on $\overline{\mathcal{I}_1} \times [z_0,z_0+\delta_{\varepsilon}]$,}
\end{align}
by uniform continuity on any compact.

Now we consider a process that equals $(X_t)_{t \ge 0}$ away from $z_0+\delta_{\varepsilon}$ and is reflected (upwards) at $z_0+\delta_{\varepsilon}$. It is well-known (see, e.g., \cite{LS84} or \cite[Sec.~12, Chapter I]{Bass}) that there exists a unique strong solution of the stochastic differential equation
\begin{align*}
\ud \XX_t^{\varepsilon}=\pi(\XX_t^{\varepsilon}) \ud t + \sigma  \ud W_{t}+\ud A^{\delta_{\varepsilon}}_t, \qquad\qquad
\XX_{0}^{\varepsilon} = z_0+\delta_{\varepsilon},
\end{align*}
where $A^{\delta_{\varepsilon}}$ is a continuous, non-decreasing process that guarantees, $\P$-a.s.,
\begin{align}\label{eq:SK}
&\XX_t^{\varepsilon} \ge z_0+\delta_{\varepsilon},\quad\text{for all $t\ge 0$ and}\qquad\int_{0}^T\mathds{1}_{\{\XX^\eps_t>z_0+\delta\}}\ud A^{\delta_\eps}_t=0.
\end{align}

As in Lemma \ref{lem:ut} we appeal to classical results on interior regularity for solutions of PDEs that guarantee $\partial_t u\in C^{1,2}\big(\cI_1\times(z_0,z_1)\big)$ and $(\partial_t+\cL)\partial_t u=0$ on $\cI_1\times(z_0,z_1)$. Then, setting $\tau^\eps_1:=\inf\{s\ge0\,:\,\XX^{\varepsilon}_s =z_1\}$ and $\hat u:= \partial_t u$, an application of It\^o's formula for semi-martingales gives, for any $t\in\mathcal{I}_1$
\begin{align}\label{eq:cont0}
\E&\! \left[ \hat u(t+\tau^\eps_1\wedge(t_1-t),\XX_{\tau^\eps_1\wedge(t_1-t)}^{\varepsilon})\right]\notag\\
    =& \hat u(t,z_0+\delta_{\eps})+\E\left[\int_0^{\tau^\eps_1\wedge(t_1-t)}\partial_x \hat u(t+s,\XX_{s}^{\varepsilon}) \,\ud A^{\delta_\eps}_{s}\right]\\
\ge& - \eps +\E\left[\int_0^{\tau^\eps_1\wedge(t_1-t)}\partial_{tx}u(t+s,z_0+\delta_{\varepsilon})\, \ud A_{s}^{\delta_{\varepsilon}}\right]\notag
\end{align}
where the inequality follows from \eqref{eq:bdd} and the second condition in \eqref{eq:SK} implies
\[
\ud A_{s}^{\delta_{\varepsilon}}=\mathds{1}_{\{\XX^\eps_s=z_0+\delta_\eps\}}\ud A_{s}^{\delta_{\varepsilon}}.
\]
For the expression on the left-hand side of \eqref{eq:cont0} we have
\begin{align*}
\E& \left[ \hat u(t+\tau^\eps_1\wedge(t_1-t),\XX_{\tau^\eps_1\wedge(t_1-t)}^{\varepsilon})\right]\\
\leq& \E\left[\mathds{1}_{\{\tau^\eps_1<t_1-t\}}\hat u(t+\tau^\eps_1,z_1)\right] \leq -\eps_1\, \P (\tau^\eps_1<t_1-t).
\end{align*}
Hence, from \eqref{eq:cont0} we obtain
\begin{align}\label{eq:cont1}
-\eps_1 \P (\tau^\eps_1<t_1-t) \ge - \eps +\E \left[\int_0^{\tau^\eps_1\wedge(t_1-t)}\partial_{tx}u(t+s,z_0+\delta_{\varepsilon})\, \ud A_{s}^{\delta_{\varepsilon}}\right].
\end{align}

The next step is to let $\eps\to 0$. However, the regularity of $\partial_{tx}u$ as $\delta_\eps\downarrow 0$ might be problematic. We therefore use a trick with test functions to overcome this difficulty. Pick $\varphi \in C^{\infty}_{c}(\mathcal{I}_1)$, $\varphi \ge 0$ such that
$\int_{\mathcal{I}_1}\varphi(t) \ dt = 1$. Then, multiplying both sides of \eqref{eq:cont1} by $\varphi$, integrating over $\cI_1$ and using Fubini's theorem we obtain
\begin{align*}
-\eps_1 & \int_{\mathcal{I}_1} \P(\tau^\eps_1<t_1-t)\varphi(t) \ud t\\
& \ge - \eps +\E\left[\int_0^{\tau^\eps_1}\left(\int_{\mathcal{I}_1}\mathds{1}_{\{t<t_1-s\}}\partial_{tx}u(t+s,z_0+\delta_{\varepsilon})\varphi(t) \ud t \right) \ud A_{s}^{\delta_{\varepsilon}}\right]\\
& = - \eps +\E\bigg[\int_0^{\tau^\eps_1}\bigg(\partial_x u(t_1,z_0+\delta_{\varepsilon})\varphi(t_1-s)\\
&\qquad\qquad\qquad\qquad-\int_{\mathcal{I}_1}\mathds{1}_{\{t<t_1-s\}}\partial_xu(t+s,z_0+\delta_{\varepsilon})\varphi'(t) \ud t \bigg) \ud A_{s}^{\delta_{\varepsilon}}\bigg]\\
& \ge - \eps -\eps\, \E\bigg[\int_0^{\tau^\eps_1}\varphi(t_1-s) \ud A_{s}^{\delta_{\varepsilon}}-A^{\delta_\eps}_{\tau^\eps_1\wedge t_1}\int_{\cI_1}|\varphi'(t)|\ud t\bigg] \\
&\ge -\eps\left(1+\left(\|\varphi \|_{\infty}+T\|\varphi' \|_{\infty}\right)\E\big[A_{\tau^\eps_1\wedge t_1}^{\delta_{\varepsilon}}\big]\right),
\end{align*}
where for the penultimate inequality we have used the bounds on $\partial_x u$ given in \eqref{eq:bdd}, and the final inequality uses that $\varphi(t_1-s)=0$ for $s\ge t_1$. Here $\|\cdot\|_\infty$ is the supremum norm on $[0,T]$.

For the increasing process $A^{\delta_\eps}$ we have an upper bound which is independent of $\eps$. This can be deduced from the integral form of the SDE. That is, taking expectation of
\begin{align*}
\XX_{\tau^\eps_1\wedge t_1}^{\eps}= z_0+\delta_{\eps}+\int_0^{\tau^\eps_1\wedge t_1}\pi(\XX_{s}^{\eps})\ud s + \sigma W_{\tau^\eps_1\wedge t_1}+A^{\delta_{\varepsilon}}_{\tau^\eps_1\wedge t_1}
\end{align*}
gives
\begin{align*}
\E\left[A^{\delta_{\varepsilon}}_{\tau^\eps_1\wedge t_1}\right]=\E\left[\XX_{\tau^\eps_1\wedge t_1}^{\eps}
 -z_0-\delta_{\eps}-\int_0^{\tau^\eps_1\wedge t_1}\pi(\XX_{s}^{\eps})\ud s\right].
\end{align*}
Using that $\XX_{s}^{\eps}\in[z_0+\delta_{\eps},z_1]$ and $\pi(\XX_{s}^{\eps})\ge \pi(z_1)$ for $s \leq \tau^\eps_1$, we obtain
\begin{align*}
\E\left[A^{\delta_{\varepsilon}}_{\tau^\eps_1\wedge t_1}\right]\leq c_0:=z_1-z_0-t_1\, \pi(z_1).
\end{align*}
Hence
\begin{align*}
-\eps_1 & \int_{\mathcal{I}_1} \P(\tau^\eps_1<t_1-t)\varphi(t) \ud t \ge -\eps\left(1+\left(\|\varphi \|_{\infty}+\|\varphi' \|_{\infty}\right)c_0\right)
\end{align*}
and taking limits as $\eps\to 0$ gives
\begin{align}\label{eq:posit}
\limsup_{\eps\to 0} \int_{\mathcal{I}_1} \P(\tau^\eps_1<t_1-t)\varphi(t) \ud t \le 0.
\end{align}
If we can show that the left hand side above is positive we have reached a contradiction, and there cannot be a discontinuity of $c$.

For our final task we introduce the change of measure
\begin{align}
\frac{\ud\PP^\eps}{\ud \P}\bigg|_{\cF_T}:=\exp\left(-\int_0^T\sigma^{-1}\pi(\XX^\eps_s)\ud W_s-\frac{1}{2}\int_0^T\sigma^{-2}\pi^2(\XX^\eps_s)\ud s\right),
\end{align}
so that under $\PP^\eps$, we have
\[
\XX^\eps_t=z_0+\delta_\eps +\sigma\widehat W^\eps_t+A^{\delta_\eps}_t,
\]
where $\widehat W^\eps:=(\widehat W^\eps_t)_{t\in[0,T]}$ is a Brownian motion defined as
\[
\widehat W^\eps_t=W_t+\int_0^t\sigma^{-1}\pi(\XX^\eps_s)\ud s.
\]
For future reference we also introduce the D\'oleans-Dade exponential
\begin{align}\label{eq:Z}
Z^\eps_T:=\exp\left(\int_0^T\sigma^{-1}\pi(\XX^\eps_s)\ud \widehat W^\eps_s-\frac{1}{2}\int_0^T\sigma^{-2}\pi^2(\XX^\eps_s)\ud s\right).
\end{align}
Since the measures are equivalent on $\cF_T$, under $\PP^\eps$ the process $\XX^\eps$ is a Brownian motion reflected at $z_0+\delta_\eps$. Hence, we have an explicit formula for the increasing process $A^{\delta_\eps}$ (see, \cite[Lemma 6.14, Chapter 3]{KS1}), that is
\begin{align}\label{eq:A}
A^{\delta_\eps}_t=\sup_{0\le s\le t}\big(-\sigma \widehat W^\eps_s\big).
\end{align}

It remains to remove the dependence of the measure on $\eps$. For that we can take a filtered probability space $(\widehat \Omega, \widehat \cF,(\widehat \cF_t)_{t\in[0,T]}, \PP)$ equipped with a standard Brownian motion $B:=(B_t)_{t\in[0,T]}$. On such space we construct a Brownian motion starting from $z_0+\delta_\eps$ and reflected at its starting point, that we denote $Y$. That is
\[
Y^{\eps}_t=z_0+\delta_\eps +\sigma B_t+L_t,\qquad t\in[0,T],
\]
where $L$ takes the same expression of \eqref{eq:A} but with $B$ instead of $\widehat W^\eps$. For future reference we also denote
\[
Y^{0}_t=z_0+\sigma B_t+L_t,\qquad t\in[0,T].
\]
By construction
\[
\mathsf{Law}(\XX^\eps\,|\,\PP^\eps)=\mathsf{Law}(Y^{\eps}\,|\,\PP).
\]
Then, setting $\rho^\eps_1:=\inf\{t\ge 0:Y^\eps_t=z_1\}$, denoting $\EE^\eps$ the expectation under $\PP^\eps$ and letting $\xi^\eps_T$ be defined as the D\'oleans-Dade exponential in \eqref{eq:Z} but with $(Y^\eps,B)$ instead of $(\XX^\eps,\widehat W^\eps)$, we obtain
\begin{align}\label{prob}
\P(\tau^\eps_1<t_1-t)=\EE^\eps\left[Z^\eps_T\mathds{1}_{\{\tau^\eps_1<t_1-t\}}\right]=\EE\left[\xi^\eps_T\mathds{1}_{\{\rho^\eps_1<t_1-t\}}\right].
\end{align}
Using the explicit form of $Y^\eps$, under $\PP$ we have
\[
\lim_{\eps\to 0}\rho^\eps_1=\rho^0_1:=\inf\{s\ge 0:z_0+\sigma B_t+L_t=z_1 \},
\]
where the convergence is monotonic from above and therefore also
\[
\lim_{\eps\to 0}\mathds{1}_{\{\rho^\eps_1<t_1-t\}}=\mathds{1}_{\{\rho^0_1<t_1-t\}}.
\]
Hence, Fatou's lemma and \eqref{prob} give
\begin{align*}
\liminf_{\eps \downarrow 0} \P(\tau^\eps_1<t_1-t)\ge&  \EE\left[\liminf_{\eps\to 0} \xi^\eps_T \mathds{1}_{\{\rho^\eps_1<t_1-t\}}\right]=\widehat \E\left[\xi^0_T\mathds{1}_{\{\rho^0_1<t_1-t\}}\right]>0,
\end{align*}
where $\xi^0_T$ is the D\'oleans-Dade exponential associated to $(Y^0,B)$, and the final inequality follows from well-known distributional properties of reflected Brownian motion (see, e.g., \cite[Sec.~2.8.B]{KS1}).

Finally, using Fatou's lemma in \eqref{eq:posit}, and the discussion above, we conclude
\begin{align*}
0\ge \liminf_{\eps \downarrow 0} \int_{\mathcal{I}_1}\P(\tau^\eps_1<t_1-t)\varphi(t) \ud t \ge \int_{\mathcal{I}_1} \widehat \E\left[\xi^0_T\mathds{1}_{\{\rho^0_1<t_1-t\}}\right]\varphi(t) \ud t>0,
\end{align*}
where the final inequality uses that $\varphi\ge 0$ and arbitrary. Hence a contradiction and continuity of $x\mapsto c(x)$ is proved.
\end{proof}
If $c(0+)>0$ the smooth-fit may break down on $\{0\}\times\big[0,c(0+)\big)$, i.e., $\partial_x u(t,0+)\neq 0$ for some $t\in\big[0,c(0+)\big)$. That is why continuity of $c(\,\cdot\,)$ only holds on $(0,\infty)$ in that case. Combining the continuity result with \eqref{eq:lim2} and \eqref{c-lim1} also guarantees:
\begin{corollary}\label{cor:lim-c}
It holds
\[
\lim_{x\to\bar x_0}c(x)=T.
\]
\end{corollary}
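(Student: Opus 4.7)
The plan is to deduce the corollary directly from the continuity result of Theorem \ref{prop:cont} combined with the structural fact that $c \equiv T$ to the right of $\bar x_0$, so essentially no new work is required beyond assembling pieces already proved.

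First I would recall from \eqref{c-alpha} that $c(x)=T$ for every $x>\bar x_0$ (this holds in both Case~1 and Case~2, since for such $x$ one has $H(x)>0$ and \eqref{eq:R} forces $(t,x)\in\cC$ for all $t\in[0,T)$). In particular $\lim_{x\downarrow \bar x_0}c(x)=T$, which is the trivial half of the claim.

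The substantive half is $\lim_{x\uparrow \bar x_0}c(x)=T$. Since $\bar x_0>0$, Theorem~\ref{prop:cont} applies at the point $\bar x_0$ and yields continuity of $c$ there. Combining this with the right-hand limit just computed gives
\[
c(\bar x_0)=\lim_{x\downarrow \bar x_0}c(x)=T.
\]
In Case~1 ($x_\alpha<\bar x_0$) the bound \eqref{eq:lim2} states $\lim_{x\uparrow \bar x_0}c(x)=c(\bar x_0)$, and in Case~2 ($x_\alpha\ge\bar x_0$) the analogous bound \eqref{c-lim1} states $\lim_{x\uparrow \bar x_0}c(x)=c(\bar x_0)$; either way, both one-sided limits equal $T$, and the two-sided limit follows.

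There is no real obstacle here: the only mildly delicate point is to check that Theorem~\ref{prop:cont} is indeed applicable at $\bar x_0$, and this is immediate because $\bar x_0=\beta+r/\delta>0$ lies strictly inside $(0,\infty)$, where continuity of $c$ was established unconditionally (the possible failure on $\{0\}$ when $c(0+)>0$ is irrelevant for us). Hence the statement reduces to a one-line consequence of continuity plus the explicit value of $c$ on $(\bar x_0,\infty)$.
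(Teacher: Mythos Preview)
Your proof is correct and follows essentially the same route as the paper, which states the corollary as an immediate consequence of ``combining the continuity result with \eqref{eq:lim2} and \eqref{c-lim1}.'' One minor remark: once you invoke continuity of $c$ at $\bar x_0$ from Theorem~\ref{prop:cont}, both one-sided limits automatically coincide with $c(\bar x_0)$, so your subsequent appeal to \eqref{eq:lim2} and \eqref{c-lim1} in the final step is redundant (though not incorrect).
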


We conclude the section by giving the proofs of the main results stated in Section \ref{sec:mainres}.

\subsection{Proofs of Theorems \ref{thm:v} and \ref{thm:boundary}}\label{sec:proof}

\begin{proof}[Proof of Theorem \ref{thm:v}]
The first claim is consequence of Corollary \ref{cont-vfun}, and $v\ge h$ follows by taking $\tau=0$ in \eqref{valuev}. Monotonicity of the mappings $t\mapsto v(t,x)$ and $x\mapsto v(t,x)$ was proven in Proposition \ref{prop-Vfun}. Continuous differentiability of $v$ and continuity of $\partial_{xx}v$ on $\overline\cC\cap\big([0,T)\times(0,\infty)\big)$ were obtained in Corollary \ref{cor:C1reg}. Proposition \ref{prop:PDE} guarantees that $v$ solves \eqref{PDEu0} in $\cC$. Moreover, $v=h$ in $\cS$ and $\cL h(x) \le 0$ in $\cS$ by \eqref{eq:R}. Hence \eqref{PDEu0} holds. 

As for uniqueness, if we can find another function $w$ that solves \eqref{PDEu0} with $\cC=\{w>h\}$ and with the same regularity as $v$, then by a standard verification argument based on a well-known generalisation of It\^o's formula we obtain that $w$ coincides with the value function of the optimal stopping problem \eqref{valuev}. Further details in this direction are omitted as they are standard and can be found in \cite[Thm.\ 4.2, Ch.\ IV]{BL}.
\end{proof}

\begin{proof}[Proof of Theorem \ref{thm:boundary}] We only provide the full argument for (b), which builds on the results of Proposition \ref{prop:monot-c2}. The proof of (a) is easier and follows from analogous arguments and Proposition \ref{prop:monot3}.

If $x_\alpha< \bar x_0$ our boundary $c$ is strictly monotonic and continuous on the intervals $[x_1,x_\alpha)$, $(x_\alpha,x_2)$ and $(x_3,\bar x_0)$  (with $x_1$, $x_2$ and $x_3$ as in Proposition \ref{prop:monot-c2}) and $c(x)=T$ for $x\ge \bar x_0$.
Hence, the map $x\mapsto c(x)$ can be inverted separately on the intervals $[x_1,x_\alpha)$, $(x_\alpha,x_2)$ and $(x_3,\bar x_0]$, to obtain three continuous and strictly monotonic functions of time that describe the boundary $\partial\cC$.

Recalling $\hat c$ from Proposition \ref{prop:monot-c2}, we can define locally the inverse functions
\begin{align}
\label{inv1} &b_1(t):=\inf\{x\in[0,x_\alpha):c(x)>t\},\qquad\:\:\:\: t\in[0,T),\\
\label{inv2} &b_2(t):=\sup\{x\in(x_\alpha,x_2):c(x)>t\},\qquad t\in[\hat c,T),\\
\label{inv3} &b_3(t):=\inf\{x\in(x_3,\bar x_0):c(x)>t\},\qquad\:\: t\in[\hat c,T).
\end{align}
Clearly $0\le b_1(t)\le b_2(t)\le b_3(t)$ for $t\in[\hat c,T)$. Setting $t_0:=c(0+)$ we have $b_1(t)=0$ on $[0,t_0)$ if $t_0>0$, otherwise $b(0+)=x_1$. Recalling that $x\mapsto c(x)$ is strictly increasing and continuous on $[x_1,x_\alpha]$ and $[x_3,\bar x_0]$, we obtain that $t\mapsto b_1(t)$ and $t\mapsto b_3(t)$ are strictly increasing and continuous on $[t_0,T)$ and $[\hat c, T)$ respectively. Analogously, since $x\mapsto c(x)$ is strictly decreasing and continuous on $[x_\alpha,x_2]$ we have $t\mapsto b_2(t)$ strictly decreasing and continuous on $[\hat c,T)$. Moreover, $b_1(T-)=b_2(T-)=x_\alpha$ by the first limit in \eqref{eq:lim2} and $b_3(T-)=\bar x_0$ by Corollary \ref{cor:lim-c}.  By construction, if $\hat c>0$ we have $b_2(\hat c)= b_3(\hat c)$ because $\hat c$ is the unique point in $\mathrm{argmin}_{[x_\alpha,\bar x_0]} c(x)$. If instead $\hat c=0$, we have $b_2(0)=x_2$ and $b_3(0)=x_3$ by Proposition \ref{prop:monot-c2}. This concludes the proof of i)--iv).

The claims in v) are now straightforward.
\end{proof}


\section{Some comments about management fees}\label{sec:fees}
In a model with proportional management fees as in \eqref{v:fees} the only change in our analysis is due to the fact that the function $H$ appearing in $u$ (see \eqref{H} and \eqref{u}) is replaced by
\begin{equation*}
H^{p,q}(x):=\begin{cases}
e^{-x}\left(\tfrac{\sigma^2 }{2}-q -\pi(x)\right)-p, & x\le x_\alpha, \\
\left(1-\gamma\right)e^{-x}\left(\tfrac{\sigma^2 }{2}-\tfrac{q}{1-\gamma} -\pi(x)\right)-p, & x>x_\alpha,
\end{cases}
\end{equation*}
where $p,q\ge 0$ are as in \eqref{v:fees}.

The key features in our analysis from the previous sections are the presence of the local time at $x_\alpha$, in the problem formulation \eqref{u}, and the sign of the function $H$ via \eqref{eq:R} (whose interplay leads to the two separate cases $x_\alpha<\bar x_0$ and $x_\alpha\ge \bar x_0$).
Here, the local time produces the same effects, so that Proposition \ref{prop1} continues to hold and $[0,T)\times\{x_\alpha\}\subset\cC$. Instead, we now need to look at the sign of the function $H^{p,q}$ rather than that of $H$.

It is immediate to check that, for $x\le x_\alpha,$
\[
H^{p,q}(x)>0\implies \frac{\sigma^2}{2}-q-\pi(x)>0\iff x>\bar x_{q}:=\beta+\frac{r+q}{\delta}
\]
and, if $x_\alpha<\bar x_q$, then $H^{p,q}(x)<0$. Similarly, for $x>x_\alpha$, 
\[
H^{p,q}(x)>0\implies \frac{\sigma^2}{2}-\frac{q}{1-\gamma}-\pi(x)>0\iff x>\bar x_{q,\gamma}:=\beta+\frac{r+q(1-\gamma)^{-1}}{\delta},
\]
where $\bar x_{q,\gamma}>\bar x_q>x_G$ since $\gamma\in(0,1)$ (recall $x_G=\beta+r^G/\delta$). Hence, if $x_\alpha<\bar x_q$, it is $H^{p,q}(x)<0$ for $x_\alpha<x<\bar x_{q,\gamma}$.

So, also in the presence of management fees we must consider various cases depending on the position of $x_\alpha$ relative to $\bar x_q$ and $\bar x_{q,\gamma}$. In keeping with the rest of the paper and in the interest of length, here we briefly illustrate only the case $x_\alpha<\bar x_q$ and draw a parallel with the case $x_\alpha<\bar x_0$ from the previous sections. The remaining cases can be studied analogously with the methods developed above.

From now on, let us assume $x_\alpha<\bar x_q$. Then
\begin{align}\label{eq:xqp}
H^{p,q}(x)>0\iff x-\bar x_{q,\gamma}>\frac{p}{\delta}(1-\gamma)^{-1}e^x.
\end{align}
In the special case $p=0$, we have the exact analogue of (b) in Theorem \ref{thm:boundary} but with $\bar x_0$ therein, replaced by $\bar x_{q,\gamma}$. If instead $p>0$ only two sub-cases may arise and we must consider them separately:
\vspace{+5pt}

\noindent{\bf Case (i)}: $x-\bar x_{q,\gamma}\le \frac{p}{\delta}(1-\gamma)^{-1}e^x$ for all $x\in\R_+$ (that is $\ln[\tfrac{\delta}{p}(1-\gamma)]\le 1+\bar x_{q,\gamma}$). Then
\[
\cS \subseteq \left\{[0,T)\times\Big([0,x_\alpha)\cup (x_\alpha,\infty)\Big)\right\}\cup\Big(\{T\}\times\R_+\Big);
\]
\vspace{+5pt}

\noindent{\bf Case (ii)}: There exist $\hat x_2>\hat x_1>\bar x_{q,\gamma}$ such that $x-\bar x_{q,\gamma}> \frac{p}{\delta}(1-\gamma)^{-1}e^x$ for all $x\in(\hat x_1,\hat x_2)$ and $x-\bar x_{q,\gamma}\le \frac{p}{\delta}(1-\gamma)^{-1}e^x$ otherwise (that is $\ln[\tfrac{\delta}{p}(1-\gamma)]> 1+\bar x_{q,\gamma}$). Then
\[
\cS \subseteq \left\{[0,T)\times\Big([0,x_\alpha)\cup (x_\alpha,\hat x_1]\cup[\hat x_2,\infty)\Big)\right\}\cup\Big(\{T\}\times\R_+\Big).
\]
\vspace{+5pt}

In case (ii) the situation is similar to \eqref{eq:case1}, with $\bar x_0$ therein replaced by $\hat x_1$ and noting the additional strip $[0,T]\times[\hat x_2,\infty)$ intersecting the stopping set. In particular, repeating the same arguments as in Proposition \ref{prop:monot-c2} one can prove that the map $x\mapsto c(x)$ (defined as in \eqref{def:c}) satisfies all claims in the proposition with $\bar x_0$ replaced by $\hat x_1$ throughout. In addition to that, and by the same methods, one can also prove that $c(x)<T$ for all $x>\hat x_2$ and there exists $\hat x_3>\hat x_2$ such that $x\mapsto c(x)$ is strictly decreasing on $(\hat x_2,\hat x_3)$ with $c(\hat x_2+)=T$ and $c(x)=0$ for $x\in[\hat x_3,\infty)$. Thanks to piece-wise monotonicity of the boundary, also in this setting we can prove its continuity as in Theorem \ref{prop:cont}.

While the strict monotonicity follows by the exact same arguments as those used in the proof of Proposition \ref{prop:monot-c2}, for completeness we prove the existence of $\hat x_3$, which did not appear in the previous analysis.

\begin{proposition} \label{prop:MF}
In the setting of {\em Case (ii)}, there exists $\hat x_3>\hat x_2$ such that $c(x)=0$ for $x\in[\hat x_3,\infty)$.
\end{proposition}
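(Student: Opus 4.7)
My plan is to combine an asymptotic analysis of $H^{p,q}$ at infinity with a strong-Markov decomposition, and then to invoke analogs of the arguments in Lemma \ref{prop:strip} and Proposition \ref{prop:monot-c2} adapted to the tail region $(\hat x_2,\infty)$.

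First I would establish the asymptotics of $H^{p,q}$. Since $\pi(x)$ is asymptotically linear with $\pi(x)\to-\infty$, while the factor $(1-\gamma)e^{-x}(\sigma^2/2-q/(1-\gamma)-\pi(x))$ is the product of an exponentially decaying and a linearly growing term (hence tends to zero), we have $H^{p,q}(x)\to -p<0$ as $x\to\infty$. I would then fix $R>\hat x_2$ such that $H^{p,q}(y)\le -p/2$ for all $y\ge R$; this plays the role of a ``safe level'', analogous to $\bar x_0$ in Proposition \ref{prop:monot-c2}.

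Next, for $x>R$ I set $\sigma:=\inf\{s\ge 0:X_s^x\le R\}$. Since $R>\hat x_2>x_\alpha$, for $s<\sigma$ we have $L^{x_\alpha}_s=0$ and $H^{p,q}(X_s^x)\le -p/2$. Splitting any stopping time $\tau\in[0,T]$ at $\sigma$ and applying the strong Markov property (with $X_\sigma=R$ on $\{\sigma<\tau\}$ by path continuity) yields
\[
u(0,x)\le -\tfrac{p}{2}\,\E_x[\tau\wedge\sigma]+u(0,R)\,\P_x(\sigma<\tau)\le u(0,R)\,\P_x(\sigma\le T).
\]
Standard hitting-time estimates for the mean-reverting diffusion \eqref{X1} (whose drift is asymptotically $-\delta x$) give $\P_x(\sigma\le T)\to 0$ as $x\to\infty$, hence $u(0,x)\to 0$ and consequently $c(x)\to 0$.

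Finally, the methods of Proposition \ref{prop:monot-c2} and Theorem \ref{prop:cont} apply verbatim in $(\hat x_2,\infty)$ since $H^{p,q}<0$ there: they yield continuity and piecewise strict monotonicity of $c(\cdot)$, so that $c$ is strictly decreasing on some interval $(\hat x_2,\hat x_3)$ with $c(\hat x_2+)=T$, and continuity combined with $c(x)\to 0$ identifies $\hat x_3$ as a first zero of $c$. The main obstacle is then to propagate $c(\hat x_3)=0$ to the entire right tail $[\hat x_3,\infty)$. Here my plan is to argue by contradiction: if $(0,z)\in\cC$ for some $z>\hat x_3$, the openness of $\cC$ produces a strip $(0,\bar c)\times(z_1,z_2)\subset\cC$ with $\hat x_3<z_1<z<z_2$ and $\bar c>0$, in which $u$ solves $(\partial_t+\cL)u=-H^{p,q}\ge p/2>0$; then an integration-by-parts against a compactly supported test function (as in the proof of Lemma \ref{prop:strip}, now applied in the right tail) produces the contradiction $0\ge \int_{z_1}^{z_2}(-H^{p,q})\varphi\,dy>0$. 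This extension of the ``no strictly positive constant $c$ on a strip'' argument from a bounded to an unbounded spatial strip is the delicate step and is where I would expect to spend the most care.
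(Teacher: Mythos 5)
Your strong–Markov decomposition is in the right spirit, but the estimate you keep, $u(0,x)\le u(0,R)\,\P_x(\sigma\le T)$, only yields $u(0,x)\to 0$ as $x\to\infty$. That conclusion is fully compatible with $u(0,x)>0$ for every $x$, i.e.\ with $c(x)>0$ for every $x>\hat x_2$, so it cannot produce the required contradiction. Two further inferences in your write-up are also unjustified: $u(0,x)\to 0$ does not imply $c(x)\to 0$ (a small value of $u(0,x)$ says nothing about the first time $u(\cdot,x)$ hits $0$), and even ``$c(x)\to 0$, $c$ continuous and eventually decreasing'' does not imply that $c$ attains $0$ at a finite $\hat x_3$. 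The ingredient you need is exactly the term $-\tfrac{p}{2}\E_x[\tau\wedge\sigma]$ that you discard: under the hypothesis $c\ge\theta>0$ on $[\hat x_2,\infty)$ the optimal $\tau$ satisfies $\tau\ge\theta\wedge\sigma$, so $\E_x[\tau\wedge\sigma]\ge\theta\,\P_x(\sigma\ge\theta)\to\theta$, and keeping that term gives $\limsup_{x\to\infty} u(0,x)\le -\tfrac{p\theta}{2}<0$, contradicting $u\ge 0$. This is essentially the paper's first step, phrased there on $v$ rather than $u$: assuming $c\ge\theta$ one obtains $v(0,x)\le h(m)-p\theta+(p\theta+1)\P_x(\tau_m\le T)$, hence $\limsup_x v(0,x)\le\alpha\gamma-p\theta$, which contradicts $v(0,x)\ge h(x)\to\alpha\gamma$.

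Even after repairing step 2, a genuine gap remains: ruling out a \emph{uniform} lower bound $c\ge\theta$ does not rule out $c(x)>0$ for all $x$ with $\inf_{x\ge\hat x_2}c(x)=0$. Your step 4 does not close this gap. It assumes a finite $\hat x_3$ with $c(\hat x_3)=0$ (which was never established) and then uses the test-function argument on an open strip $(0,\bar c)\times(z_1,z_2)\subset\cC$; but that argument, as in Lemma \ref{prop:strip}, requires the boundary datum $u(\bar c,\cdot)=0$ on $(z_1,z_2)$, i.e.\ that $c$ equals $\bar c$ on the whole interval, which you do not have merely from $\cC$ being open. The paper instead closes the gap with a bootstrap that your proposal lacks: by time-homogeneity, the auxiliary problem on horizon $T+\theta$ has value $v_\theta(t,x)=v(t-\theta,x)$ for $t\ge\theta$, so its boundary is $c_\theta(x)=\theta+c(x)\ge\theta$ whenever $c(x)>0$; applying the first step to $v_\theta$ then yields the contradiction and forces $c(x)=0$ for all $x$ large.
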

\begin{proof}
Let us argue by contradiction and assume $c(x)>0$ for all $x\ge \hat x_2$. In particular, let us first assume the stronger requirement that there exists $\theta>0$ such that $c(x)\ge \theta$ for all $x\ge \hat x_2$. Consider the value function $v(0,x)$ for $x>m>\hat x_2$ and a fixed $m$. Setting $\tau_{m}=\inf\{t\ge 0:X^{x}_t\le m\}$ and letting $\tau^*$ be optimal for $v(0,x)$, we have $\tau^*\ge \theta\wedge \tau_{m}$, $\P$-a.s. Now, using this observation we have
\begin{align*}
v(0,x)=&\E_x\Big[h(X_{\tau^*\wedge\tau^\dagger})-\int_0^{\tau^*\wedge\tau^\dagger}\Big(p+q e^{-X_t}\Big)\ud t\Big]\\
=&\E_x\Big[\mathds{1}_{\{\tau^*< \tau_m\}}\Big(h(X_{\tau^*\wedge\tau^\dagger})-\int_0^{\tau^*\wedge\tau^\dagger}\Big(p+q e^{-X_t}\Big)\ud t\Big)\Big]\\
&+\E_x\Big[\mathds{1}_{\{\tau^*\ge \tau_m\}}\Big(h(X_{\tau^*\wedge\tau^\dagger})-\int_0^{\tau^*\wedge\tau^\dagger}\Big(p+q e^{-X_t}\Big)\ud t\Big)\Big].
\end{align*}
On the event $\{\tau^*<\tau_m\}$ we have $\tau^*<\tau^\dagger$ and $\tau^*\ge \theta$. Moreover, on that event $h(X^x_{\tau^*\wedge\tau^\dagger})\le h(m)$ since $h$ is decreasing (see \eqref{eq:ph}) and $X^x_{\tau^*\wedge\tau^\dagger}\ge m$. Recalling that $h$ is positive and bounded by $1$, we obtain the upper bound
\begin{align*}
v(0,x)\le &\E_x\Big[\mathds{1}_{\{\tau^*< \tau_m\}}\Big(h(m)-p\theta\Big)\Big]+\P_x\big(\tau^*\ge \tau_m\big)\\
=& h(m)-p\theta-\E_x\Big[\mathds{1}_{\{\tau^*\ge \tau_m\}}\Big(h(m)-p\theta\Big)\Big]+\P_x\big(\tau^*\ge \tau_m\big)\\
\le & h(m)-p\theta+(p\theta+1)\P_x\big(\tau_m\le T\big).
\end{align*}
Letting $x\uparrow \infty$ we have $\P_x\big(\tau_m\le T\big)\downarrow 0$. Then, letting $m\uparrow\infty$ we also have $h(m)\downarrow \alpha\gamma$, so that
\[
\lim_{x\to\infty}v(0,x)\le \alpha\gamma-p\theta.
\]
The latter contradicts $v(0,x)\ge h(x)\ge \alpha\gamma$ for all $x\in\R_+$ (recall \eqref{defh}). Hence, it cannot be $c(x)\ge \theta$ on $[\hat x_2,\infty)$.

Now we prove that indeed it cannot be $c(x)>0$ on $[\hat x_2,\infty)$. By way of contradiction, assume the latter holds. For $\theta>0$ let us introduce the auxiliary problem with value function
\[
v_\theta(t,x)=\sup_{0\le \tau\le T+\theta-t}\E_x\Big[h(X_{\tau\wedge\tau^\dagger})-\int_0^{\tau\wedge\tau^\dagger}\Big(p+q e^{-X_s}\Big)\ud s\Big].
\]
Since $h$ and the process $X$ are time-homogeneous we clearly have $v_\theta(t,x)=v(t-\theta,x)$ for all $t\in[\theta,T]$ and $x\in\R_+$. In particular, since we are assuming $c(x)>0$ for $x\ge \hat x_2$, the optimal stopping boundary for the auxiliary problem is $c_\theta(x)=\theta+c(x)$ for $x\ge \hat x_2$. Hence, $c_\theta(x)\ge \theta$ for all $x\ge \hat x_2$. Then, by the same argument above with $v_\theta(0,x)$ instead of $v(0,x)$ we reach again a contradiction.
\end{proof}

\begin{figure}[t!]
	\centering
	\begin{center}
		\hspace*{-0.7cm}
		\includegraphics[scale=0.4]{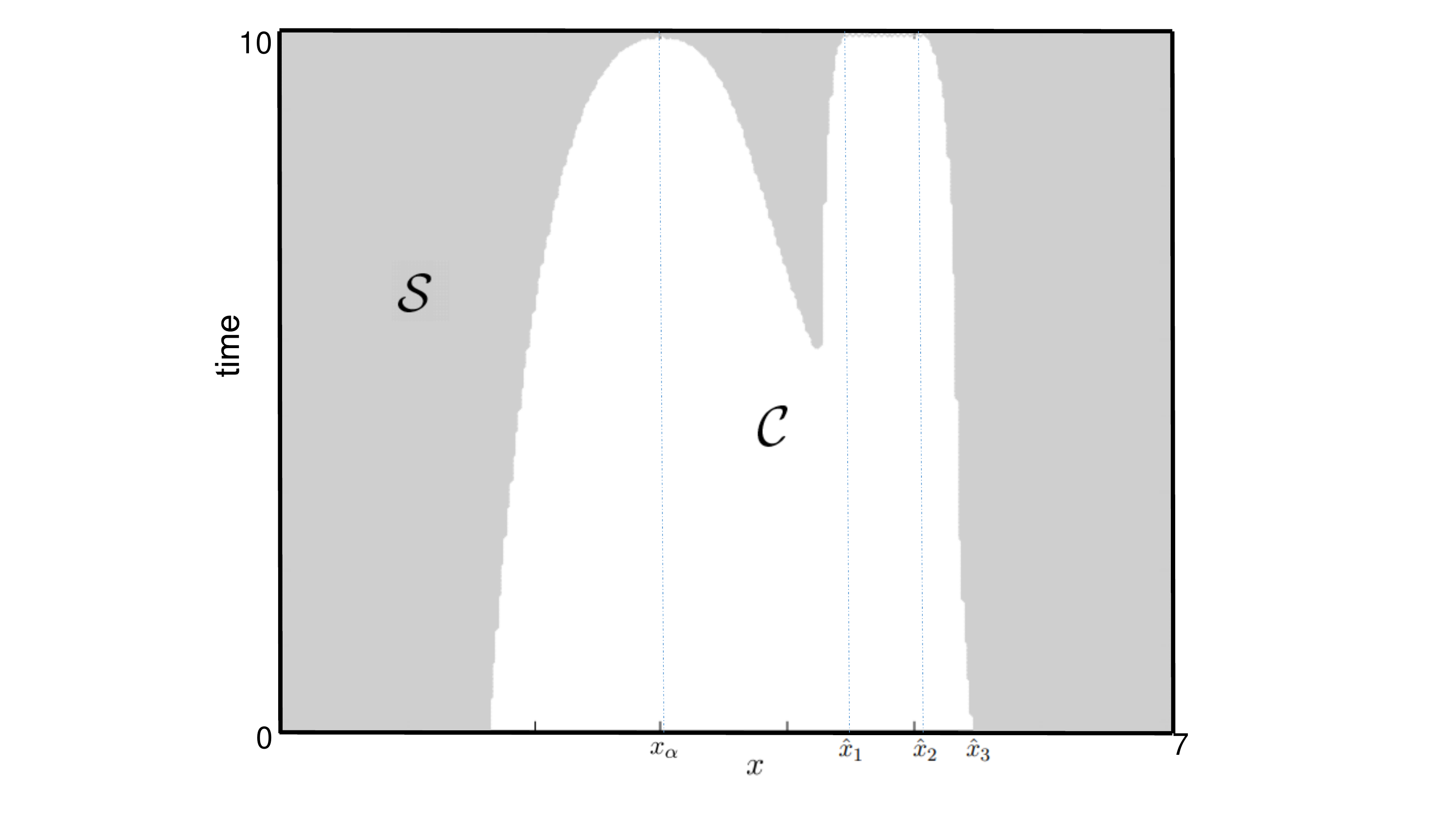}
		\vspace{-0.5cm}
	\end{center}
	\caption{The optimal surrender region and boundary in Case (ii) of management fees (see Proposition \ref{prop:MF}).}
	\label{fig4}
\end{figure}
Case (i) can be formally interpreted as the limiting situation of Case (ii) with $\hat x_1=\infty$. By the same arguments above we obtain the analogue of (i) and (ii) in Proposition \ref{prop:monot-c2} but with $c(x)=0$ for $x\ge x_2$ therein and $c(x_2-)=0$.  Once again continuity of the boundary follows from Theorem \ref{prop:cont}.

The presence of management fees paid at a constant (proportional) rate, reduces the incentive created by the bonus mechanisms in the policy. That is why we observe an upper exercise boundary at which the policyholder surrenders the contract when the expected gains from staying in the contract (and collecting the bonus rate) are outweighed by the expected cost of future management fees.



\section*{Appendix}
\begin{proof}[{\bf Proof of Lemma \ref{lem:tech}}] Here we borrow arguments from the proof of \cite[Thm.~1]{DeAK18}. To keep a simple notation, in what follows we set $X_t=X^{x_\alpha}_t$ everywhere. From the equality
	\begin{align*}
	|X_t-x_\alpha|=\int_{0}^{t}  \textrm{sign}(X_s -x_\alpha)\ud X_s+L^{x_\alpha}_{t},
	\end{align*}
	we deduce that
	\begin{align*}
	L^{x_\alpha}_{s\wedge \rho_\varepsilon}&=|X_{s\wedge \rho_\varepsilon}-x_\alpha|-\int_{0}^{s\wedge \rho_\varepsilon}  \textrm{sign}(X_u -x_\alpha)\ud X_u\\
	&=|X_{s\wedge \rho_\varepsilon}-x_\alpha|-\int_{0}^{s\wedge \rho_\varepsilon}  \textrm{sign}(X_u -x_\alpha)\sigma \ud W_u\\
&\qquad\qquad\qquad\:\:\:\,-\int_{0}^{s\wedge \rho_\varepsilon}  \textrm{sign}(X_u -x_\alpha)\pi(X_u) \ud u.
	\end{align*}
Notice that $\pi(\cdot)$ is bounded on $[x_\alpha-\varepsilon,x_\alpha+\varepsilon]$ by a constant $c_{\pi,\eps}>0$. Moreover, since $|X_{s\wedge \rho_\varepsilon}-x_\alpha| \leq \varepsilon$, then $1 \ge \eps^{-p}|X_{s\wedge \rho_\varepsilon}-x_\alpha|^p$ for $p>0$. Taking expectation and using these two observations we get
\begin{align*}
\E_{x_\alpha} \left[  L^{x_\alpha}_{s\wedge \rho_\varepsilon}\right]
= &\, \E_{x_\alpha} \left[|X_{s\wedge \rho_\varepsilon}-x_\alpha|\right]-\E_{x_\alpha} \left[\int_{0}^{s\wedge \rho_\varepsilon}  \textrm{sign}(X_u -x_\alpha)\,\pi(X_u)\ud u\right] \nonumber\\
\ge &\, \E_{x_\alpha} \left[|X_{s\wedge \rho_\varepsilon}-x_\alpha|\right]- \ c_{\pi,\varepsilon}\E_{x_\alpha} [s\wedge \rho_\varepsilon]\nonumber\\
\ge &\,  \varepsilon^{-p}\,\E_{x_\alpha} \left[|X_{s\wedge \rho_\varepsilon}-x_\alpha|^{1+p}\right]- c_{\pi,\varepsilon} \ s. \nonumber
\end{align*}
The expectation of the absolute value above can be estimated using the integral form of the SDE for $X$ and the inequality $|a+b|^{1+p} \ge \frac{1}{2^{1+p}}|a|^{1+p}-|b|^{1+p}$, for all $a,b \in \R$ (see, \cite[Ch.~8, Sec.~50, p.~83]{KF}). That is, 	
\begin{align} \label{EL}	
\E_{x_\alpha} &\left[  L^{x_\alpha}_{s\wedge \rho_\varepsilon}\right]\notag\\
\ge &\,  \varepsilon^{-p}\,\E_{x_\alpha} \left[\left|\int_{0}^{s\wedge \rho_\varepsilon} \pi(X_u)\ud u+\sigma W_{s\wedge \rho_\varepsilon}\right|^{1+p}\right]- \ c_{\pi,\varepsilon}\ s\nonumber \\
\ge &\,  \varepsilon^{-p}\left\{\left(\frac{\sigma}{2}\right)^{1+p}\E_{x_\alpha} \left[\left|W_{s\wedge \rho_\varepsilon}\right|^{1+p}\right] -\E_{x_\alpha} \left[\left|\int_{0}^{s\wedge \rho_\varepsilon} \pi(X_u)\ud u\right|^{1+p}\right] \right\}- \ c_{\pi,\varepsilon}\ s \\
\ge &\,  \varepsilon^{-p}\left\{\left(\frac{\sigma}{2}\right)^{1+p}\E_{x_\alpha} \Big[|W_{s\wedge \rho_\varepsilon}|^{1+p}\Big] - c_{\pi,\varepsilon}^{1+p} \ s^{1+p}\right\}- \ c_{\pi,\varepsilon}\ s.\nonumber
\end{align}
Now Burkholder-Davis-Gundy inequality and Doob's inequality imply that there exists a positive constant $c_p$ such that
\begin{align*}
\E_{x_\alpha} \Big[|W_{s\wedge \rho_\varepsilon}|^{1+p}\Big] \ge &\, c_p\, \E_{x_\alpha} \left[ \left(s\wedge \rho_\varepsilon\right)^{\frac{1+p}{2}} \right]\ge  c_p\, \E_{x_\alpha} \left[\mathds{1}_{\{s<\rho_\varepsilon\}} \right]s^{\frac{1+p}{2}}\\
= &\, c_p\, s^{\frac{1+p}{2}}\big( 1 -\P_{x_\alpha} (\rho_\varepsilon \leq s)\big).
\end{align*}
Inserting the last inequality in \eqref{EL} we get
\begin{align*}
\E_{x_\alpha}& \left[L^{x_\alpha}_{s\wedge \rho_\varepsilon}\right]\ge  c_1 \ s^{\frac{1+p}{2}}-c_2 \left(s+s^{1+p}+\P_{x_\alpha} \left(\rho_\varepsilon \leq s\right) s^{\frac{1+p}{2}}\right)=:\lambda(s),
\end{align*}
for some suitable positive constants $c_1=c_1(\varepsilon,p)$ and $c_2=c_2(\varepsilon,p)$.	Since $p\in(0,1)$, in the limit as $s \downarrow 0$ we get
\[
\lambda(s) = c_1 s^{\frac{1+p}{2}}\left[1-\tfrac{c_2}{c_1}\big(\P_{x_\alpha} \left(\rho_\varepsilon \leq s\right)+s^{\frac{1-p}{2}}+s^{\frac{1+p}{2}}\big)\right]\ge \tfrac{1}{2}c_1 s^{\frac{1+p}{2}}
\]
because $\P_{x_\alpha} \left(\rho_\varepsilon \leq s\right)+s^{\frac{1-p}{2}}+s^{\frac{1+p}{2}}\downarrow 0$ as $s\to0$. Given that $\tfrac{1+p}{2}\in(\tfrac{1}{2},1)$ we have $\lambda(s)>\ell \cdot s$ as $s\downarrow 0$ for any constant $\ell>0$. That implies the claim of the lemma.
\end{proof}

\vspace{+15pt}

\noindent{\bf Acknowledgment}: The authors were financially supported by Sapienza University of Rome, research project
``\emph{Life market: a renewal boost for quantitative management of longevity and lapse risks}'', grant no.~RM11916B8953F292. T.~De Angelis gratefully acknowledges support via EPSRC grant EP/R021201/1, ``\emph{A probabilistic toolkit to study regularity of free boundaries in stochastic optimal control}''.

\end{document}